\documentclass[11pt]{article}
\usepackage[T1]{fontenc}
\usepackage[utf8]{inputenc}
\usepackage{amsmath,amssymb,amsthm}
\usepackage{aliascnt}
\usepackage[margin=1in]{geometry}
\usepackage{microtype}
\usepackage{xcolor}
\usepackage{enumitem}
\usepackage{url}
\usepackage[colorlinks=true,linkcolor=blue!55!black,citecolor=blue!55!black,urlcolor=blue!55!black]{hyperref}
\usepackage[nameinlink,noabbrev,capitalise]{cleveref}
\newtheorem{theorem}{Theorem}[section]
\newaliascnt{lemma}{theorem}
\newtheorem{lemma}[lemma]{Lemma}
\aliascntresetthe{lemma}
\newaliascnt{proposition}{theorem}
\newtheorem{proposition}[proposition]{Proposition}
\aliascntresetthe{proposition}
\newaliascnt{corollary}{theorem}
\newtheorem{corollary}[corollary]{Corollary}
\aliascntresetthe{corollary}
\theoremstyle{definition}
\newaliascnt{definition}{theorem}

\aliascntresetthe{definition}
\theoremstyle{remark}
\newaliascnt{remark}{theorem}

\aliascntresetthe{remark}
\numberwithin{equation}{section}
\DeclareMathOperator{\Adv}{Adv}
\DeclareMathOperator{\Ent}{Ent}
\DeclareMathOperator{\Sym}{Sym}
\DeclareMathOperator{\He}{He}
\DeclareMathOperator{\GOE}{GOE}
\newcommand{\VSP}{\mathrm{VSP}}
\newcommand{\E}{\mathbb E}

\newcommand{\HS}{\mathrm{HS}}

\setlist[itemize]{leftmargin=1.5em,itemsep=2pt,topsep=4pt}
\setlist[enumerate]{leftmargin=1.8em,itemsep=2pt,topsep=4pt}
\allowdisplaybreaks[1]
\hypersetup{pdftitle={Tight Bounds on the Randomized Communication Complexity of Vector-in-Subspace},pdfauthor={Chengu Wang}}
\title{Tight Bounds on the Randomized\\
Communication Complexity of Vector-in-Subspace}
\author{Chengu Wang\\\small\href{mailto:wangchengu@gmail.com}{\texttt{wangchengu@gmail.com}}}
\date{}
\begin{document}
\maketitle
\begin{abstract}
  We prove that the randomized communication complexity of Vector-in-Subspace
  in $\mathbb R^n$ with error at most $1/3$ is $\Theta(\sqrt n)$.
  Alice holds a unit vector $u$ and Bob
  holds an $n/2$-dimensional subspace $H$; they must distinguish $u\in H$ from
  $u\in H^\perp$. An exact one-way quantum protocol uses $O(\log n)$ qubits,
  making the problem a geometric example of an exponential quantum advantage
  over interactive classical communication.
  Our lower bound matches Raz's $O(\sqrt n)$ upper bound, improves the
  $\Omega(n^{1/3})$ lower bound of Klartag and Regev, and removes the total-rank
  restriction in Grupel's $\Omega(\sqrt n)$ result.

  The proof establishes a Gaussian moment bound over complete rectangle
  partitions. A complex Gaussian representation yields a vanishing even
  moment whose cancellation is preserved when averaging over the partition.
  We control the signed conditional moments around a real center by
  decomposing conditional Gaussian measures into fixed coordinates and
  sub-Gaussian components. Combining this estimate with spherical harmonic
  analysis gives the communication lower bound.
\end{abstract}

\section{Introduction}\label{sec-intro}

Communication complexity, introduced by Yao~\cite{Yao79},
measures the communication needed to compute a
function whose input is split between two parties~\cite{KN97}. Alice
receives $x$ and Bob receives $y$, and they exchange messages so that one
of them can output $f(x,y)$. Each message may depend on the sender's input
and the preceding messages. Local computation is unrestricted; the cost
is the total number of bits exchanged, maximized over inputs and protocol
executions. A deterministic protocol must always give the correct answer.
For a promise problem, correctness is required only on inputs satisfying
the promise.

In \emph{randomized communication}, the parties may also use
random bits, independent of their inputs.
A bounded-error protocol must output the correct answer with probability
at least $2/3$ on every fixed promised input, where the probability is over
the protocol's randomness. We write $R_{1/3}(f)$ for the minimum worst-case
communication cost of such a protocol with public randomness.
The messages may depend on the public random bits as well as
the inputs and earlier messages.

In \emph{quantum communication}, introduced by Yao~\cite{Yao93}, the
parties exchange quantum states and perform local quantum operations.
The communication cost counts the qubits transmitted. Bounded-error
quantum protocols satisfy the same correctness requirement on every
input, with probability also taken over measurement outcomes. In either
model, a \emph{one-way} protocol consists of a single message from Alice
to Bob, who then outputs the answer; an \emph{interactive} protocol may
exchange messages in both directions.

How much classical communication is needed to replace a single quantum
message?
The Vector-in-Subspace problem \cite{kremer,raz} shows an exponential separation
between classical and quantum communication complexity \cite{klartag-regev}.
Alice is given a unit vector $x\in S^{n-1}$, and Bob is given a
subspace $H\subseteq\mathbb R^n$ of dimension $n/2$. They are promised that
either $x\in H$ or $x\in H^\perp$, and must determine which alternative holds.
We denote this promise problem by $\VSP_n$ and take $n$ to be even.

\subsection{Previous results}

A quantum protocol for $\VSP_n$ is immediate: Alice sends the state with amplitude
vector $x$, using $\lceil\log_2 n\rceil$ qubits, and Bob measures the two
orthogonal subspaces. The answer is correct with probability one.

Raz~\cite[Theorem~3.8]{raz} gave an $O(\sqrt n)$-bit classical protocol
for a constant-gap version of Vector-in-Subspace. The exact promise is a
restriction of that problem, so the same upper bound applies to $\VSP_n$.
At a high level, public randomness supplies an exponentially large
collection of Gaussian vectors. Alice sends the index of the vector with
the largest inner product with her input, and Bob compares its distances
to his subspace and its orthogonal complement. The index has
$O(\sqrt n)$ bits, although
finding it may require substantial local computation.
Klartag and Regev \cite[Section~4]{klartag-regev} describe a spherical
version of this protocol;
Montanaro \cite[Appendix~A]{montanaro} gives a detailed analysis of the
closely related projector-estimation formulation. Thus the square-root
scale was already available as an upper bound, and the question was whether
interaction could substantially reduce it.

The distinction between one-way and interactive classical communication is
essential. Gavinsky, Kempe, Kerenidis, Raz, and de Wolf \cite{gkkrw}
proved an exponential separation between one-way quantum and one-way
classical communication for a partial Boolean function based on partial
matching. Their problem has a short interactive classical protocol, so
that separation alone does not establish hardness against interaction.
As explained explicitly by Gosset and Smolin \cite{gosset-smolin}, their
partial-matching bound also gives an $\Omega(\sqrt n)$ one-way lower bound
for a constant-gap projector problem: distinguish
$\langle x,Px\rangle\le 1/3$ from $\langle x,Px\rangle\ge 2/3$.
This is a broader promise than membership in $H$ or $H^\perp$.
Gosset and Smolin also construct classical sketches of quantum states that
give a computationally efficient one-way protocol with
$\widetilde O(\sqrt n)$ communication. Their result improves the local
computational cost of the classical simulation while matching its
communication scale up to logarithmic factors.

On the lower-bound side, Klartag and Regev \cite{klartag-regev} proved the first polynomial lower
bound for classical protocols for $\VSP_n$, namely
$\Omega(n^{1/3})$. Combined with the quantum protocol above, their result
established an exponential separation between one-way quantum and
interactive classical communication for a partial function. Their argument
studies how a random half-dimensional subspace samples a measurable set on
the sphere. It combines spherical hypercontractivity, the Radon transform,
and a martingale argument, and then applies a rectangle bound to a
communication protocol.
Kerenidis, Laplante, Lerays, Roland, and Xiao \cite{kllrx} subsequently
showed that the lower-bound method also yields an
$\Omega(n^{1/3})$ bound on classical information complexity, using their
relaxed partition bound and zero-communication compression framework.

There is a concrete obstruction to strengthening the particular sampling
estimate in the Klartag--Regev proof. A spherical cap of measure
$\exp(-\Theta(n^{1/3}))$ can have a constant relative sampling error with
probability $\exp(-\Theta(n^{1/3}))$ \cite[Section~2]{klartag-regev}.
This makes that one-subspace concentration statement sharp at the
cube-root scale. Klartag and Regev therefore suggested studying $H$ and
$H^\perp$ together, with a lower bound on the arithmetic mean of their
intersection measures. The cap example concerns this specific sampling
statement; it does not preclude stronger arguments that use the complete
partition produced by a protocol.

Grupel \cite{grupel} developed the paired-subspace approach and obtained
an $\Omega(\sqrt n)$ lower bound for protocols of total rank at most
$\alpha\sqrt n$, for an absolute constant $\alpha>0$.
Here the rank of an Alice decision is the number of linear functionals of
her vector on which the decision depends, and total rank sums these
quantities along a protocol execution. This is a restriction on how Alice
uses her input, rather than the rank of a communication matrix.
Grupel proved a paired concentration theorem for functions depending on
$O(\sqrt n)$ directions and conjectured a corresponding theorem for
arbitrary measurable sets, using the geometric mean of the two
intersection measures. That conjecture would imply the unrestricted
square-root communication bound.

\subsection{Our result}

We prove a square-root lower bound on the randomized communication complexity
of Vector-in-Subspace.

\begin{theorem}\label{thm-main}
  $ R_{1/3}(\VSP_n)=\Theta(\sqrt n).$
\end{theorem}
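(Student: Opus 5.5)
The upper bound is Raz's protocol \cite[Theorem~3.8]{raz}, so the whole task is the lower bound $R_{1/3}(\VSP_n)=\Omega(\sqrt n)$. I will argue by Yao's principle with a hard distribution: $H$ is a uniformly random $n/2$-dimensional subspace, and $x$ is drawn uniformly from the unit sphere of $H$ (``yes'') or of $H^\perp$ (``no''), each with probability $1/2$. Fix a deterministic protocol with $c$ bits and distributional error at most $1/3$. It partitions $S^{n-1}\times\mathrm{Gr}(n,n/2)$ into at most $2^c$ rectangles $A_i\times B_i$. For each rectangle let
\[
\mu_i^{+}=\Pr_H[H\in B_i]\cdot\E_H\bigl[\sigma_H(A_i)\bigr],\qquad \mu_i^{-}=\Pr_H[H\in B_i]\cdot\E_H\bigl[\sigma_{H^\perp}(A_i)\bigr],
\]
where $\sigma_H$ is the uniform measure on $S^{n-1}\cap H$. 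Low error forces $\sum_i|\mu_i^+-\mu_i^-|\ge 1/3$. So it suffices to show that for a complete partition into $2^c$ rectangles with $c\le\alpha\sqrt n$, this total discrepancy is $o(1)$, or at least below $1/3$. The single-set sampling estimate fails at the $n^{1/3}$ scale, so I will use the paired, partition-averaged statement suggested by Klartag--Regev and Grupel rather than bounding rectangles one at a time.

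The first step is a Gaussian reformulation. Replace the uniform vector by a standard Gaussian $g\in\mathbb R^n$, and write $P$ for the projection onto $H$. The yes and no cases become $Pg$ versus $(I-P)g$, both of which are Gaussian with covariances $P$ and $I-P$. The discrepancy of the set $A_i$ is then controlled by the signed functional $f\mapsto \E[f(Pg)]-\E[f((I-P)g)]$. Expanding in Hermite or spherical harmonics, the degree-$2k$ component of $1_{A_i}$ contributes through $\mathrm{tr}$-type moments of $(2P-I)$, which have mean zero over random $H$.

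For the cancellation I will use a complex Gaussian $z=g_1+ig_2$. The quantity $\E\langle z,(2P-I)z\rangle^{k}$-type even moments vanish identically, since $\E z z^{T}=0$. The aim is to show that this vanishing survives averaging over the partition: for each Alice set $A$ conditioned on reaching it, the conditional Gaussian measure $\gamma|_A$ has signed moments around a real center $m_A$ bounded by roughly $(C\log(1/\gamma(A))/n)^{k}$. To get this I will decompose $\gamma|_A$ into coordinates that are effectively fixed, namely the span of the directions Alice's set depends on strongly, and a residual sub-Gaussian component with bounded $\psi_2$ norm.

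Summing over the rectangles, weighted by their measure, and using $\sum_i\gamma(A_i)\log(1/\gamma(A_i))\le c$ (an entropy bound), the total discrepancy should be at most $\sum_k (Cc^2/n)^{k/2}$ or similar. This is small when $c\le\alpha\sqrt n$. The key point is that the low-degree (degree-2) term scales like $c/\sqrt n$ rather than $c^{3/2}/n^{1/2}$.

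I expect the main obstacle to be the conditional moment bound for arbitrary measurable $A$ with no rank restriction, which is exactly Grupel's conjecture regime. My first attempt there would be a Gaussian log-Sobolev/transport (Talagrand $T_2$) bound to split $\gamma|_A$. The fixed part would have dimension at most $O(\log(1/\gamma(A)))$, and the sub-Gaussian remainder would be controlled by hypercontractivity applied to the harmonic pieces.
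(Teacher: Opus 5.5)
Your outline has the right architecture: a complex Gaussian with vanishing moments, a real center, and a split into fixed coordinates plus a sub-Gaussian remainder. But it lacks the mechanism that controls the center, and as written it would only reproduce the crude $n^{1/3}$ bound.

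\textbf{The center.} The vanishing of $\mathbb E(z^TSz)^k$ holds for a fixed matrix and unconditioned $z$. Conditioning on a single rectangle destroys it, and per-rectangle conditional moments really do reach the crude scale. For $A=\{Y_1\ge\sqrt{2a}\}$ and $\mathcal C=\{G_{11}\ge2\sqrt b\}$,
\[
\mathbb E[F_1\mid A\times\mathcal C]=\operatorname{tr}\bigl((\mathbb E[YY^T\mid A]-I)\,\mathbb E[G\mid\mathcal C]\bigr),
\]
which is of order $a\sqrt b$. So any bound that depends only on the surprises of a rectangle is at best of this order. The same holds for an Alice set with a center $m_A$ depending on $A$ alone. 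Averaged over a partition into $2^d$ cells of equal mass, with an entropy bound or otherwise, such a bound is of order $d^{3/2}$. After the harmonic transfer this gives $d^{3/2}/\sqrt n$, which is the $n^{1/3}$ scale of Klartag--Regev. Also, no factor of $n$ belongs in the conditional moment bound; the $n^{-k/2}$ enters only through the leading harmonic coefficient.

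The paper's mechanism has three ingredients:
\begin{itemize}
\item It lifts Bob's subspace to a GOE matrix and complexifies both inputs, $Q=\tfrac12(Y+iZ)^T(G+iG')(Y+iZ)$. Then $\mathbb E Q^{2k}=0$ holds for the full product law, which is recovered by first sampling a rectangle and then a point in it.
\item It takes the center $\alpha=\tfrac12(y^TG_{EE}y-\operatorname{tr}G_{EE})$, which depends on Bob's matrix as well as on Alice's flat.
\item It expands $0=\mathbb E(\alpha+(Q-\alpha))^{2k}$. Since $\alpha$ is real, $\mathbb E\alpha^{2k}\ge0$ is the leading term. Bounds $\|\mathbb E[(Q-\alpha)^l\mid\mathcal F]\|_{2k/l}\le(l\tau)^l$ on the \emph{signed} conditional moments then give $\|\alpha\|_{2k}\le4ek\tau$ and $\|\mathbb E[Q^k\mid\mathcal F]\|_2\le(\|\alpha\|_{2k}+k\tau)^k$.
\end{itemize}
This needs the exponential tail of the surprise $a+b$ in $L^{2k}$, not just its mean.

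\textbf{The remainder.} A law that is merely sub-Gaussian with density at most $e^\sigma$ need not be hypercontractive. Gaussian hypercontractivity after a change of density gives only $\|M_{2j}(\varphi)\|_{\HS}\le(C(\sigma+j))^{j}$ for its Wick tensor. The bulk moments are then of order $\sigma^jq^{j/2}$, about $d^{3j/2}$ when $\sigma\approx a$ and $q\approx b$, which is the same loss. The paper instead proves the dimension-free bound $\|\nu_j(v)\|_{L^{p/j}}\le(Cj(\sigma+|v|^2+k^2+p))^j$. This rests on two estimates for the Wick tensors:
\begin{itemize}
\item partition-norm bounds $\|M_r\|_J\le R^r r^{l/2}\Sigma_r^{(r-l)/2}$;
\item a Frobenius bound $\|M_r\|_F\le R^r r^{r/2}\Sigma_r^{r/4}$, a square-root gain over the crude bound.
\end{itemize}
Both come from a single duality and bootstrap induction on polynomial moments. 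The induction is fed by a split Gaussian moment bound whose Hilbert--Schmidt term costs only $q^{\lfloor r/2\rfloor/2}$.

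\textbf{The decomposition.} Talagrand's inequality controls transport only on average. It does not produce an exact mixture of flats times uncentered $\mathrm{SG}(B)$ bulks with an absolute constant $B$. The paper builds such a mixture by repeatedly fixing a coordinate on a failing one-dimensional tail. The quantity $\exp(\sigma/2+(|y|^2+\dim E)/8)$ is a supermartingale along this process, so $\sigma+|y|^2+\dim E$ has an exponential tail at scale $a$.

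\textbf{Smaller omissions.} Passing from harmonic level $k$ to a Gaussian polynomial requires removing the lower levels, which the paper does by Poisson smoothing and finite differences. It also needs the leading-coefficient lower bound $\alpha_{kk}\ge k!\,n^{k/2}6^{-k/2}$. Levels above $d$ are handled by the subspace-transform spectrum. These are more routine, but your sketch does not address them.
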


We improve the $\Omega(n^{1/3})$ lower bound of Klartag and
Regev~\cite{klartag-regev} to $\Omega(\sqrt n)$, matching Raz's
$O(\sqrt n)$ upper bound~\cite{raz}. Our lower bound also removes the
total-rank restriction in Grupel's square-root lower bound~\cite{grupel}.

\subsection{Proof overview}

The proof has one central estimate, concerning a complete partition of a
product Gaussian space into rectangles. A deterministic protocol using
$d$ bits, with its output included in the transcript, gives at most $2^d$
such rectangles. Replace Alice's spherical input by the direction of a
standard Gaussian vector $Y$, and Bob's subspace by the top-half
eigenspace of an independent Gaussian orthogonal ensemble matrix $G$.
These maps preserve the product reference distribution and the number
of rectangles.

The spherical harmonic reduction associates to each level $k$ a Gaussian
polynomial $F_k(Y,G)$, obtained by replacing the vector and matrix powers
in $(Y^TGY)^k$ by their Hermite, or Wick, versions. Our central bound is
\begin{equation}\label{eq:cp-intro}
  \mathbb E\bigl|\mathbb E[F_k\mid\mathsf T]\bigr|
  \le C^k k!\,(d+k^2)^k
\end{equation}
for every complete rectangle partition $\mathsf T$ with at most $2^d$
cells. It holds for complete partitions whether or not they arise from
a protocol tree. The expectation weights each cell by its probability
under the product Gaussian law.

\paragraph{Cancellation before absolute values.}
Let $Z$ and $G'$ be independent copies of $Y$ and $G$, independent also
of the inputs, and put
\[
  Q=\frac12(Y+iZ)^T(G+iG')(Y+iZ).
\]
The imaginary Gaussians perform the Wick corrections, giving
$F_k=2^k\mathbb E[Q^k \mid Y,G]$.
If we first choose a rectangle according to its probability and then
sample the inputs conditioned on that rectangle, completeness restores
the original Gaussian law. In particular,
\[
  \mathbb E Q^{2k}=0.
\]
This identity is global: it need not hold in any one rectangle.
We construct a random real center $\alpha$ and control the signed
conditional moments of $Q-\alpha$. Expanding the vanishing even moment
then bounds $\mathbb E\alpha^{2k}$, since this leading term is
nonnegative. A second expansion, now of $Q^k$, proves
\eqref{eq:cp-intro}. This is the cancellation used throughout the proof.

\paragraph{A real center and a controlled remainder.}
On a fixed rectangle, Alice's conditional law has density bounded by
$e^a$ relative to Gaussian measure, where $a$ is the logarithm of the
reciprocal of her cell's probability. We express this law as a mixture
whose components fix some coordinates at a vector $y$ and leave a
sub-Gaussian law on the orthogonal complement. The sum of the residual
log-density bound, $|y|^2$, and the number of fixed coordinates has an
exponential tail controlled by $a$.
Let $E$ be the subspace of fixed coordinates and $G_{EE}$ the corresponding
block of $G$. These coordinates provide the real center
$\alpha=(y^T\,G_{EE}\,y-\operatorname{tr}G_{EE})/2$.
Their fluctuations satisfy elementary Gaussian bounds. The remaining
terms are controlled through signed Wick moments of the sub-Gaussian
component. Taking absolute moments of the complex variable too early
would introduce an unwanted dependence on the ambient dimension.

\paragraph{The moment estimate needed for the remainder.}
The bulk estimate requires both partition norms and a stronger Frobenius
bound for its Wick tensors. One polynomial-moment induction gives both.
Its Gaussian input comes from a derivative expansion stopped after
roughly half the tensor slots. The terms evaluated in $L^2$ retain a
Hilbert--Schmidt norm and cost only $q^{r/4}$ at Gaussian degree $r$;
the remaining terms are bounded by completing the unevaluated Gaussian
slots using partition norms. Keeping these two contributions separate
allows the induction to close with a quadratic inflation in the degree,
which is the $k^2$ in \eqref{eq:cp-intro}.

\paragraph{Returning to the sphere.}
At harmonic level $k$, the transfer loses at most
$C^k/(k!n^{k/2})$. For $k\le\sqrt d$, \eqref{eq:cp-intro} therefore
bounds the level's contribution by $(Cd/\sqrt n)^k$.
For $\sqrt d<k\le d$, a simpler hypercontractive bound gives the same
estimate. The exact subspace-transform spectrum controls levels above
$d$. Summing these bounds makes the protocol's advantage small when
$d$ is a sufficiently small multiple of $\sqrt n$.

\subsection{Connections to earlier analytic methods}

Our spherical reduction uses the harmonic-analysis framework underlying
Klartag--Regev \cite{klartag-regev}. We give the spectrum, the Gaussian
lift, its leading coefficient, and the smoothing argument explicitly,
including the constants needed uniformly in the harmonic degree.

The decomposition into affine slices is related to the adaptive Gaussian
cleanup of Girish, Sinha, Tal, and Wu \cite{gstw}. They control directional
moments by revealing problematic linear or quadratic forms during a
protocol and use this to bound Fourier growth of XOR-fibers.
Here we decompose a conditional law using failing one-dimensional tail
tests. The conclusion supplies a sub-Gaussian bound in every direction
on each terminal component, together with exponential control of its
combined density, offset, and dimension cost.
Entropy-efficient decompositions with covariance control also appear in
Eldan's work \cite{eldan}; those use exponential tilts and a different
entropy--covariance tradeoff. These results motivate a useful distinction:
covariance control alone does not supply the all-moment estimate used
for our bulk components.

The tensor norms in the Gaussian estimates are the partition norms of
Lata{\l}a's theory of Gaussian chaoses \cite{latala}; see also Lehec's
simplified proof \cite{lehec-chaos}. Our one-Gaussian-slot inequality
adapts the Gaussian-shift, PAC-Bayesian proof of Aden-Ali
\cite[Theorem~1.2]{aden-ali} directly to $L^p$ moments.
The derivative step is the classical log-Sobolev moment method of
Aida--Stroock \cite{aida-stroock}, related to the entropy method of
Boucheron, Bousquet, Lugosi, and Massart \cite{bblm} and to the iterated
derivative bounds of Adamczak--Wolff \cite{adamczak-wolff}.
We retain proofs of the forms we use to track their dependence on the
degree. In particular, the split estimate is tailored to the two tensor
bounds in the bulk induction; we do not require an improvement of the
general Lata{\l}a estimate at every moment.

\paragraph{Organization.}
\Cref{sec-prelim} fixes the model and Gaussian notation.
\Cref{sec-transfer} proves the spherical reduction and derives
\cref{thm-main} from the complete-partition estimate.
\Cref{sec-cancel} proves that estimate from the decomposition and bulk
lemmas. Their proofs follow in \cref{sec-decomp,sec-chaos,sec-bulk}.

\section{Preliminaries}\label{sec-prelim}

\subsection{Protocols, measures, and constants}
Let $\sigma_{\rm u}$ denote uniform probability measure on $S^{n-1}$ and
$\nu$ Haar probability measure on the Grassmannian of $n/2$-dimensional
subspaces. Write $\mu_0=\sigma_{\rm u}\otimes\nu$ and $\E_0$ for its
expectation. Let $\mu_+$ and $\mu_-$ be the promise laws obtained by
first sampling $H\sim\nu$ and then sampling $x$ uniformly on
$S^{n-1}\cap H$ or $S^{n-1}\cap H^\perp$, respectively. The proof uses
the fair promise law $(\mu_++\mu_-)/2$.
For an output $\psi\in\{-1,1\}$, with $+1$ meaning $x\in H$,
\[
  \Adv(\psi)=\frac{\E_{\mu_+}\psi-\E_{\mu_-}\psi}{4}.
\]
Its success probability under this law is $1/2+\Adv(\psi)$.

A deterministic cost-$d$ protocol induces a complete partition of the
product input space into at most $2^d$ measurable rectangles. Whenever
we use its output as a function of the rectangle, we include the output
bit in the transcript. The extra bit will be accounted for at the end.
A protocol specified on promised inputs can be truncated at its
worst-case cost and completed on other inputs. Fixing all public and
private random coins gives a deterministic protocol. The joint Borel
assumption ensures that these operations and the required expectations
are measurable. These are the usual rectangle and distributional
reductions; see, for example, \cite[Section~4]{klartag-regev}.

The letters $C,c$ denote positive absolute constants whose values may
change between occurrences. A constant in $C^k$ is independent of $k$
and the dimension. Logarithms are natural, except for the number of
communication bits. We use $B_1=5$ for the sub-Gaussian constant.

\subsection{Gaussian and tensor conventions}
Write $\gamma_W$ for standard Gaussian measure on a finite-dimensional
Euclidean space $W$. The law $\GOE(n)$ is standard Gaussian measure on
$\Sym(\mathbb R^n)$ with inner product $\operatorname{tr}(AB)/2$.
Thus its independent off-diagonal entries have variance one and its
diagonal entries have variance two. A matrix's usual Frobenius norm is
$\sqrt2$ times its norm in this GOE Euclidean space.

For a standard Gaussian $Z$ on $W$, define the Hermite tensor
\[
  \He_r(y)=\E_Z(y+iZ)^{\otimes r}.
\]
This classical representation follows by expanding the generating
function $\E_Z e^{\langle t,y+iZ\rangle}
=e^{\langle t,y\rangle-|t|^2/2}$.
For a standard Gaussian on any Euclidean space we use its own inner
product to define its Hermite tensors. In particular,
$\He_r^{\GOE}(G)=\E(G+iG')^{\otimes r}$.
Pairings of complex tensors are bilinear; their norms are Hermitian.
An unqualified tensor norm is the Hilbert--Schmidt norm.
The map $\mathcal J_m$ replaces each matrix slot by its two coordinate
indices and symmetrizes those vector indices. Hence
\[
  \langle w^{\otimes 2m},\mathcal J_m S^{\otimes m}\rangle=(w^TSw)^m,
  \qquad \|\mathcal J_m T\|_{\HS}\le 2^{m/2}\|T\|_{\GOE^{\otimes m}}.
\]

We use the classical Wick isometry and Gaussian hypercontractivity
\cite{isserlis,nelson,janson}:
\begin{equation}\label{eq:classical-chaos}
  \E\langle A,\He_r(Z)\rangle^2=r!\|\Sym A\|_{\HS}^2,
  \qquad \|P\|_q\le(q-1)^{r/2}\|P\|_2\quad(q\ge2),
\end{equation}
where $\Sym A$ denotes symmetrization, $A$ is real, and $P$ belongs
to the $r$th Gaussian chaos.
Complex versions follow by treating real and imaginary parts separately.
The Gaussian log-Sobolev inequality \cite{gross} is
$\Ent(f^2)\le2\E|\nabla f|^2$.

For $A\in H_1\otimes\cdots\otimes H_r$ and a partition $J$ of $[r]$,
the partition norm is
\begin{equation}\label{eq:partition-norm}
  \|A\|_J=\sup\left\{\left|\left\langle A,
    \bigotimes_{g\in J}y^g\right\rangle\right|:
  y^g\in\bigotimes_{i\in g}H_i,\quad\|y^g\|_{\HS}\le1\right\}.
\end{equation}
Write $\mathcal P(r)$ for all partitions of $[r]$.
These are the norms used by Lata{\l}a \cite{latala} and Lehec
\cite{lehec-chaos}. The one-block norm is the Hilbert--Schmidt norm,
also called the Frobenius norm; the all-singletons norm is the injective
norm. For matrices, these reduce to the Frobenius and operator norms.
For complex tensors, the test tensors in \eqref{eq:partition-norm} are
complex and have Hermitian norm at most one.

A probability law on $W$ is $\mathrm{SG}(B)$ if
$\|\langle v,Y\rangle\|_q\le B\sqrt q$ for every unit $v\in W$
and every real $q\ge1$. This is an uncentered condition. When the law
has density $\varphi$ relative to $\gamma_W$, we also call $\varphi$
$\mathrm{SG}(B)$ and write $\sigma=\log\operatorname{ess\,sup}\varphi$.
All suprema of densities below mean essential suprema. On a
zero-dimensional space, the Gaussian law is a point mass and this
condition is vacuous.

\subsection{Change of density and the probability of a cell}
If $\rho\le e^a$ is a probability density relative to a probability
measure $\eta$, then, for $1\le q\le s<\infty$,
\begin{equation}\label{eq:density}
  \|h\|_{L^q(\rho\eta)}\le e^{a/s}\|h\|_{L^s(\eta)}.
\end{equation}
For $q<s$, apply H\"older with exponents $s/q$ and $s/(s-q)$, and use
$\int\rho^{s/(s-q)}d\eta\le e^{aq/(s-q)}$.
For $q=s$, use $\rho\le e^a$ directly. We also use
\begin{equation}\label{eq:quadratic}
  t^2\le ut+v\quad\Longrightarrow\quad t\le u+\sqrt v
  \qquad(t,u,v\ge0).
\end{equation}

For a rectangle $R=A\times\mathcal C$ under
$\gamma_n\otimes\GOE(n)$, put
$a=\log(1/\gamma_n(A))$ and $b=\log(1/\Pr(G\in\mathcal C))$.
We call these quantities the surprises of its two parts.
The rectangle's surprise is $a+b$.
\begin{lemma}[Surprise of a complete partition]\label{lem-surprise}
  Let $\mathsf T$ be a complete partition into at most $2^d$ rectangles, and let $a_{\mathsf T},b_{\mathsf T}$ be the surprises of the random rectangle containing $(Y,G)$. Then $\Pr(a_{\mathsf T}+b_{\mathsf T}>d\log2+v)\le e^{-v}$ for $v\ge0$. Hence $\|a_{\mathsf T}+b_{\mathsf T}+u\|_q\le d\log2+u+q$ for all $u\ge0$ and $q\ge1$.

\end{lemma}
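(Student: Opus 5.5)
The tail bound is the standard counting argument for partitions, and the moment bound follows by integrating the tail.

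\textbf{Tail bound.} Write the partition as $\{R_j=A_j\times\mathcal C_j\}_{j\le N}$ with $N\le 2^d$. Since $Y$ and $G$ are independent, the probability of cell $j$ is $p_j=\gamma_n(A_j)\Pr(G\in\mathcal C_j)=e^{-(a_j+b_j)}$, where $a_j,b_j$ are the surprises of its two parts. Hence $a_{\mathsf T}+b_{\mathsf T}=\log(1/p_J)$, where $J$ is the random index of the cell containing $(Y,G)$. Cells of probability zero are hit with probability zero, so we may ignore them. The event $\{a_{\mathsf T}+b_{\mathsf T}>d\log2+v\}$ is the union of the cells with $p_j<2^{-d}e^{-v}$. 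There are at most $2^d$ such cells, so
\[
\Pr(a_{\mathsf T}+b_{\mathsf T}>d\log 2+v)\le 2^d\cdot 2^{-d}e^{-v}=e^{-v}.
\]

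\textbf{Moment bound.} Set $X=a_{\mathsf T}+b_{\mathsf T}+u\ge0$ and $m=d\log2+u$. Since $X\le m+W$ with $W=(a_{\mathsf T}+b_{\mathsf T}-d\log2)_+\ge 0$, Minkowski's inequality gives
\[
\|X\|_q\le m+\|W\|_q .
\]
By the tail bound, $\Pr(W>v)\le e^{-v}$, so $W$ is stochastically dominated by a standard exponential variable. Therefore
\[
\|W\|_q\le (\Gamma(q+1))^{1/q}\le q ,
\]
where the last step uses $\Gamma(q+1)\le q^q$ for real $q\ge1$. This follows from log-convexity, or directly: $\log\Gamma(q+1)-q\log q$ equals $0$ at $q=1$ and is nonincreasing, since $\psi(q+1)\le\log q+1$ for $q\ge 1$.

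Combining the two displays yields $\|a_{\mathsf T}+b_{\mathsf T}+u\|_q\le d\log2+u+q$. No step is a real obstacle; the only care needed is the measure-zero cells and the Gamma-function inequality for non-integer $q$.
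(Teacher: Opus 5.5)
Your proposal is correct and is the paper's proof. The tail bound comes from counting the at most $2^d$ cells of probability below $2^{-d}e^{-v}$. The moment bound comes from dominating the excess over $d\log2$ by a standard exponential, with $\Gamma(q+1)^{1/q}\le q$. Your remarks on null cells, the Minkowski step, and the Gamma inequality for non-integer $q$ only spell out details the paper leaves implicit.
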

\begin{proof}
  $a_{\mathsf T}+b_{\mathsf T}>d\log2+v$ means the rectangle has probability below $2^{-d}e^{-v}$. At most $2^d$ rectangles qualify, with total mass below $e^{-v}$. The excess over $d\log2$ is therefore dominated by an exponential variable, whose $L^q$ norm is $\Gamma(q+1)^{1/q}\le q$.
\end{proof}

\subsection{The Gaussian kernels}
Let $Y\sim\gamma_n$ and $G\sim\GOE(n)$ be independent. Define
\begin{equation}\label{eq:kernel}
  F_k(Y,G)=\langle\He_{2k}(Y),\mathcal J_k\He_k^{\GOE}(G)\rangle,
  \qquad
  \mathcal B_k(d)=\sup_{\mathsf T}\E\bigl|\E[F_k\mid\mathsf T]\bigr|,
\end{equation}
where $\mathsf T$ ranges over complete measurable rectangle partitions
with at most $2^d$ cells. Null cells can always be omitted.
\begin{lemma}[Lift identity]\label{lem-lift}

  $F_k(y,G)=\mathbb E_{Z,G'}\big[(w^T(G+iG')w)^k\big]$ with $w=y+iZ$, where $Z\sim\gamma_n$ and $G'\sim\mathrm{GOE}(n)$ are independent. In particular, $F_k$ is real. Under independent $Y\sim\gamma_n$ and $G\sim\mathrm{GOE}(n)$, it is a pure chaos of degrees $(2k,k)$ with mean zero.

\end{lemma}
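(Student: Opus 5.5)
The plan is to expand the $k$th power multilinearly and then use the independence of $Z$ and $G'$ to factor the expectation. For fixed $y$ and $G$, the pairing identity for $\mathcal J_k$ from the preliminaries gives
\[
  (w^T(G+iG')w)^k=\langle w^{\otimes 2k},\mathcal J_k (G+iG')^{\otimes k}\rangle .
\]
This pairing is bilinear, and $\mathcal J_k$ is a fixed real linear map. Since $Z$ and $G'$ are independent, the expectation of the product of the two tensor factors factors into the product of their expectations. (All moments are finite, so Fubini applies.) By the definition of the Hermite tensors, $\E_Z(y+iZ)^{\otimes 2k}=\He_{2k}(y)$ and $\E_{G'}(G+iG')^{\otimes k}=\He_k^{\GOE}(G)$. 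The GOE Hermite tensor is taken with respect to the GOE inner product, which is the convention stated in the paper. This proves $F_k(y,G)=\E_{Z,G'}[(w^T(G+iG')w)^k]$.

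For reality, note that $\He_r$ is real-valued: by symmetry of $Z$ the odd powers of $iZ$ vanish in expectation, and the even powers of $i$ are real. Alternatively, this follows from the generating function $e^{\langle t,y\rangle-|t|^2/2}$, which is real. The same argument applies to $\He_k^{\GOE}$, and $\mathcal J_k$ preserves real tensors, so $F_k$ is real.

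For the chaos structure, the entries of $\He_{2k}(Y)$ are, in an orthonormal basis, products of one-variable Hermite polynomials of total degree $2k$. Each such entry lies in the $2k$th Wiener chaos of $Y$. Likewise, the entries of $\He_k^{\GOE}(G)$ lie in the $k$th chaos of $G$. The function $F_k$ is a finite sum of products of such entries with real coefficients. Under independence of $Y$ and $G$, the product chaos of bidegree $(2k,k)$ is spanned by exactly these products, so $F_k$ is a pure chaos of degrees $(2k,k)$. Since $k\ge1$, this chaos is orthogonal to constants, so $\E F_k=0$. Alternatively, $\E_Y\He_{2k}(Y)=\E(Y+iZ)^{\otimes 2k}=0$, because $Y+iZ$ is a standard complex Gaussian and $\E (Y_j+iZ_j)(Y_l+iZ_l)=\delta_{jl}-\delta_{jl}=0$.

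The main point requiring care is the chaos identification: the entries of $\He_r$ must be genuine Hermite polynomials in the $r$th chaos, and not merely of degree at most $r$. I would check this through the generating function: the coefficient of $t^{\otimes r}$ in $e^{\langle t,y\rangle-|t|^2/2}$ is the classical multivariate Hermite tensor, which is orthogonal to all polynomials of lower degree. This orthogonality is also consistent with the Wick isometry \eqref{eq:classical-chaos}. The remaining steps are bookkeeping.
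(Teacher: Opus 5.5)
Your proposal is correct and follows the paper's proof. The paper likewise writes $(w^T(G+iG')w)^k=\langle w^{\otimes 2k},\mathcal J_k(G+iG')^{\otimes k}\rangle$, substitutes the Hermite representations $\He_{2k}(y)=\E_Z w^{\otimes 2k}$ and $\He_k^{\GOE}(G)=\E_{G'}(G+iG')^{\otimes k}$, and applies Fubini using the independence of $Z$ and $G'$. Your arguments for reality, for the pure $(2k,k)$ chaos, and for mean zero are correct and fill in points the paper leaves implicit.
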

\begin{proof}
  The Hermite representations give
  \[
    \mathrm{He}_{2k}(y)=\mathbb E_Z w^{\otimes2k},\qquad
    \mathrm{He}^{\rm GOE}_k(G)=\mathbb E_{G'}(G+iG')^{\otimes k}.
  \]
  For a symmetric matrix $S$, the definition of $\mathcal J_k$ gives
  $\langle w^{\otimes2k},\mathcal J_kS^{\otimes k}\rangle=(w^TSw)^k$.
  Apply Fubini.
\end{proof}

For example, $F_1(y,G)=y^TGy-\operatorname{tr}G$: the correction subtracts the Gaussian mean. In higher degrees, averaging over $(Z,G')$ performs all the corresponding Wick corrections. This is the reason for introducing the imaginary Gaussians.

\section{Spherical reduction and the main theorem}\label{sec-transfer}

We reduce the communication lower bound to the following estimate.
Its proof occupies \cref{sec-cancel} and uses the analytic inputs proved
in \cref{sec-decomp,sec-chaos,sec-bulk}.

\begin{theorem}[Complete-partition estimate]\label{thm-cp}
  For every complete partition $\mathsf T$ of
  $\mathbb R^n\times\Sym(\mathbb R^n)$ into at most $2^d$ measurable
  rectangles and every integer $k\ge1$,
  \begin{equation}\label{eq:cp}
    \E\bigl|\E[F_k(Y,G)\mid\mathsf T]\bigr|
    \le C^k k!(d+k^2)^k.
  \end{equation}
  The expectation is under the independent laws $\gamma_n$ and $\GOE(n)$.
\end{theorem}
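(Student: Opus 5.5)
The plan follows the cancellation strategy of the overview. Let $Q=\frac12(Y+iZ)^T(G+iG')(Y+iZ)$ with $Z,G'$ independent copies. By \cref{lem-lift}, $F_k=2^k\E[Q^k\mid Y,G]$, so it suffices to bound $\E|\E[Q^k\mid\mathsf T]|$ by $C^kk!(d+k^2)^k/2^k$.

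\textbf{Step 1: global cancellation.} Choose a random cell $R=A\times\mathcal C$ according to its probability and sample $(Y,G)$ from the conditional law on $R$. Completeness returns the product law, under which $Q$ is a complex Gaussian quadratic form with $\E Q^{m}=0$ for all $m\ge1$. Indeed, rotating $(Y,Z)\mapsto$ the phase $e^{i\theta}(Y+iZ)$ and similarly for $(G,G')$ preserves the law and multiplies $Q$ by a phase. In particular $\E Q^{2k}=0$, where the expectation averages over cells.

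\textbf{Step 2: a real center per cell.} For the cell containing the inputs, Alice's conditional density is bounded by $e^{a}$ relative to $\gamma_n$, with $a=a_{\mathsf T}$, and Bob's by $e^{b}$. I would decompose Alice's conditional law as a mixture of components. Each component fixes the coordinates on a subspace $E$ at a vector $y$ and has an $\mathrm{SG}(B_1)$ density on $E^\perp$ with log-sup $\sigma$. The cost $\sigma+|y|^2+\dim E$ should have an exponential tail at scale $a$. This is built by repeatedly testing one-dimensional tails and revealing any failing direction. Set $\alpha=(y^TG_{EE}y-\operatorname{tr}G_{EE})/2$, which is real. Write $Q=\alpha+W$. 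The remainder $W$ collects the fluctuation of the fixed block (handled by elementary Gaussian bounds under Bob's density, using \eqref{eq:density}) and the cross and bulk terms involving the sub-Gaussian part.

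The key estimate is on signed conditional moments. For $j\le 2k$, I want
\[
\bigl|\E[W^j\mid \text{component},\mathcal C]\bigr|\le C^j\,j!^{?}\,(\text{cost}+b+j^2)^{j}\cdot(\text{appropriate power}),
\]
with the bound scaling like $(\text{cost}+b+j^2)^{j}$ up to factors consistent with $k!$. This is where signs matter: the $iZ,iG'$ averages make $W^j$ a Wick polynomial in the sub-Gaussian coordinates. Its conditional expectation must be bounded through partition norms (Lata{\l}a) and the Frobenius bound from a bulk induction, not through $\E|W|^j$, which would introduce $n$-dependence.

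\textbf{Step 3: expand twice.} Expand $0=\E(\alpha+W)^{2k}=\E\alpha^{2k}+\sum_{j\ge1}\binom{2k}{j}\E[\alpha^{2k-j}\E[W^j\mid\cdot]]$. Using Step 2, H\"older, and \cref{lem-surprise} to integrate the costs $a+b+k^2$ to $d+k^2$, this gives an inequality of the form $t^{2k}\le\sum_j\binom{2k}{j}t^{2k-j}M_j$ with $t=\|\alpha\|_{2k}$. An iterated form of \eqref{eq:quadratic} then yields $\|\alpha\|_{2k}\le C\,(k(d+k^2))^{1/2}$-type control, matching $C^kk!(d+k^2)^k$ at power $k$ after tracking factorials. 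Finally, $\E|\E[Q^k\mid\mathsf T]|\le\sum_j\binom kj\E|\alpha|^{k-j}|\E[W^j\mid\cdot]|$ is bounded by the same ingredients, which gives \eqref{eq:cp}.

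\textbf{Main obstacle.} The hard step is the signed moment bound for the bulk part in Step 2. It must be dimension-free and lose only $j^2$ in the degree. I would attack it by induction on the polynomial degree. The Gaussian slots would be handled by a log-Sobolev derivative expansion stopped halfway: the $L^2$ terms keep a Hilbert--Schmidt norm costing $q^{r/4}$, and the rest are completed using partition norms and the sub-Gaussian condition on the unrevealed coordinates.
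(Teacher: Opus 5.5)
Your outline is the paper's proof. It uses the same complex lift $Q$ with its globally vanishing even moment, the same decomposition of Alice's conditional law into flats and $\mathrm{SG}(B_1)$ bulks with an exponential cost tail, and the same real center $\alpha=\frac12(y^TG_{EE}y-\operatorname{tr}G_{EE})$. It also uses the same two binomial expansions and the same halfway-stopped log-Sobolev expansion with partition and Frobenius norms for the bulk. Two points in your bookkeeping are wrong as written, though fixing them does not change the route.

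\textbf{The center scales linearly, not as a square root.} Put $t=\|\alpha\|_{2k}$. Expanding $0=\E Q^{2k}$ gives $t^{2k}\le\sum_{l\ge1}\binom{2k}{l}t^{2k-l}\|\kappa_l\|_{2k/l}$. If $\|\kappa_l\|_{2k/l}\le(l\tau)^l$, then $\binom{2k}{l}(l\tau)^l\le(2ek\tau)^l$. So $t>4ek\tau$ would make the right side a geometric sum smaller than $t^{2k}$; hence $t\le4ek\tau$, with no iteration of \eqref{eq:quadratic} needed. With $\tau=C(d+k^2)$, the second expansion gives $\|\E[Q^k\mid\mathcal F]\|_2\le(t+k\tau)^k\le(Ck(d+k^2))^k\le C^kk!(d+k^2)^k$, which is exactly the target.

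Your claimed bound $\|\alpha\|_{2k}\le C(k(d+k^2))^{1/2}$ is false for every decomposition of this type. Take Bob's side trivial and let Alice's $2^d$ cells be the sign orthants of the first $d$ coordinates. Each cell mean has norm $\sqrt{2d/\pi}$, while an $\mathrm{SG}(B_1)$ bulk has mean of norm at most $B_1$. Hence $\E_\pi|y_\omega|^2\ge(\sqrt{2d/\pi}-B_1)^2$. A GOE variance computation gives $\E[\alpha^2\mid\omega]\ge\frac12|y_\omega|^4-|y_\omega|^2$, which forces $\|\alpha\|_2\ge cd$ for large $d$.

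\textbf{The key estimate must condition on $G$ and be stated in $L^{2k/l}$.} The sigma-field must contain $G$; the paper uses $\mathcal F=\sigma(R,\omega,G)$. Otherwise $\alpha$ cannot be pulled out of $\E[\alpha^{2k-j}W^j]$, and the only fallback is absolute moments of $W$, which depend on $n$. Once $G$ is included, $\kappa_l=\E[(Q-\alpha)^l\mid\mathcal F]$ is random through $G$; already $\kappa_1$ is a linear form in $(G_{WW},G_{EW})$. So no pointwise bound like your displayed one can hold. What is needed is $\|\kappa_l\|_{2k/l}\le(Cl(d+k^2))^l$ for $1\le l\le2k$.

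To pay only polynomially for Bob's density, bound the Gaussian moments at exponent $p/l$ with $p=b+2k$. Then transfer via \eqref{eq:density} at cost $e^{bl/p}\le e^l$. Transferring directly at exponent $2k/l$ would cost $e^{bl/(2k)}$, which is exponential in $b$. Finally, the $L^{2k}$ cost bound of the decomposition and \cref{lem-surprise} average $c_\omega+b+k^2$ to $C(d+k^2)$.
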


\subsection{The subspace-transform spectrum}\label{sec-spectrum}

The use of the spherical Radon transform in communication lower bounds
follows Klartag--Regev~\cite{klartag-regev}. We give the spectrum and
normalization for half-dimensional subspaces explicitly. Standard facts
about spherical harmonics and Gegenbauer polynomials can be found in
\cite{groemer,szego}.

For $n\ge4$, put $\lambda=(n-2)/4$, $S_H=2P_H-I$, and define
\begin{equation}\label{eq:spherical-kernel}
  K_0=1,\qquad
  K_k(x,H)=\frac{\lambda+k}{\lambda}C_k^\lambda(x^TS_Hx),
  \qquad
  \mathfrak a_k=\frac{(1/2)_k}{((n-1)/2)_k}\quad(k\ge1).
\end{equation}
Here $(z)_k=z(z+1)\cdots(z+k-1)$, and $C_k^\lambda$ is the
Gegenbauer polynomial. Let
\[
  \mathcal Rf(H)=\int_{S^{n-1}\cap H}f\,d\sigma_H,
  \qquad \mathcal R^-f(H)=\mathcal Rf(H^\perp).
\]
For an $L^2(\sigma_{\rm u})$ function $f$, denote its component in
degree $j$ spherical harmonics by $f_j$.

\begin{lemma}[Spectrum and kernel normalization]\label{lem-spectrum}
  For every fixed $H$, $K_k(\cdot,H)$ is a spherical harmonic of degree $2k$.
  The transforms annihilate odd degrees. For degree $2k$,
  \begin{align}
    \|\mathcal Rf_{2k}\|_2^2
    &=\mathfrak a_k\|f_{2k}\|_2^2,
    &\mathcal R^-f_{2k}&=(-1)^k\mathcal Rf_{2k},\label{eq:spectrum}\\
    \mathcal Rf_{2k}(H)
    &=\int f_{2k}(x)K_k(x,H)\,d\sigma_{\rm u}(x).
    \label{eq:kernel-normalization}
  \end{align}
  Images of distinct degrees are orthogonal. The sequence $\mathfrak a_k$
  decreases and satisfies
  \begin{equation}\label{eq:spectral-tail}
    \mathfrak a_k\le\left(\frac{2k}{n-1}\right)^k.
  \end{equation}
\end{lemma}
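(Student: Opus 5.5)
The plan is to use the invariance of $\mathcal R$ under the orthogonal group and identify it, degree by degree, as an intertwining operator between $L^2(S^{n-1})$ and $L^2$ of the Grassmannian.

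\textbf{Harmonicity of $K_k$.} Fix $H$ with orthonormal basis so that $x=(x_1,x_2)$ with $x_1\in H$, $x_2\in H^\perp$, and $x^TS_Hx=|x_1|^2-|x_2|^2$. A zonal-type function for the group $O(n/2)\times O(n/2)$ depends only on $t=|x_1|^2-|x_2|^2$. Writing the spherical Laplacian on such functions gives a Jacobi-type ODE in $t$ with parameters $(n/2-2)/2$ at both ends. By symmetry these reduce to Gegenbauer with $\lambda=(n-2)/4$. The invariant harmonic of degree $2k$ is therefore a multiple of $C_k^\lambda(t)$, equivalently the Jacobi polynomial $P_k^{(\beta,\beta)}$ with $\beta=n/4-1$. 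Concretely, I would check that the homogeneous extension $|x|^{2k}C_k^\lambda\bigl((|x_1|^2-|x_2|^2)/|x|^2\bigr)$ is annihilated by the Euclidean Laplacian, using the Gegenbauer differential equation.

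\textbf{Odd degrees and the sign.} Odd degrees vanish because $\mathcal R f(H)$ integrates over a set invariant under $x\mapsto -x$. Swapping $H$ and $H^\perp$ sends $t\mapsto -t$, and $C_k^\lambda(-t)=(-1)^kC_k^\lambda(t)$. Once the kernel formula \eqref{eq:kernel-normalization} is established, this gives $\mathcal R^-f_{2k}=(-1)^k\mathcal Rf_{2k}$.

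\textbf{Kernel formula.} For fixed $H$, the functional $f_{2k}\mapsto \mathcal Rf_{2k}(H)$ is invariant under the stabilizer of $H$. Its Riesz representer in degree-$2k$ harmonics is therefore the unique invariant harmonic up to scale, namely $cK_k$. To fix $c$, I would test against $f=K_k(\cdot,H)$ itself. On $S^{n-1}\cap H$ we have $t=1$, so $\mathcal RK_k=K_k|_{t=1}$. This gives the condition $c\|K_k\|^2=K_k(1)$. I would then verify that $\|K_k\|_2^2=K_k(1)$ with the stated prefactor $(\lambda+k)/\lambda$, using the distribution of $t$. Under $\sigma_{\rm u}$, $|x_1|^2\sim\mathrm{Beta}(n/4,n/4)$, so $t$ has density proportional to $(1-t^2)^{n/4-1}=(1-t^2)^{\lambda-1/2}$. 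This is exactly the Gegenbauer weight, and the standard norm formula $\int (C_k^\lambda)^2w\big/\int w=\frac{\lambda}{\lambda+k}C_k^\lambda(1)$ gives the normalization.

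\textbf{Spectrum.} Orthogonality of images of distinct degrees follows from Schur's lemma. The spaces of degree-$2k$ harmonics are inequivalent irreducible $O(n)$-modules, and $\mathcal R$ intertwines, so $\mathcal R^*\mathcal R$ is a scalar $\mathfrak a_k$ on each. To compute it, I would take $f=K_k(\cdot,H_0)$. Then $\mathcal R^*\mathcal R f=\mathfrak a_k f$, and evaluating at a point $x\in H_0$ gives
\[
\mathfrak a_kK_k(1)=\int K_k(x,H)\,d\nu_x(H)\cdot(\ldots).
\]
This is the step I expect to be the main computational obstacle. Rather than evaluate that integral directly, I would compute $\mathfrak a_k$ via the leading coefficient: apply $\mathcal R$ to the harmonic projection of $(\langle e,x\rangle)^{2k}$ and compare norms. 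Alternatively, I would compute $\mathfrak a_k=\|\mathcal R K\|^2/\|K\|^2$, using that $\mathcal RK_k(\cdot,H_0)(H)$ is the zonal spherical function on the Grassmannian, whose mean square is $1/\dim$-type. The cleanest route is probably the Funk--Hecke/Beta-integral comparison. For $f=\langle e,x\rangle^{2k}$ with $e$ a unit vector, $\int_{S\cap H}\langle e,x\rangle^{2k}$ equals $|P_He|^{2k}\frac{(1/2)_k}{(n/4)_k}$, whereas the full-sphere moment is $\frac{(1/2)_k}{(n/2)_k}$. Matching top-degree parts then yields $\mathfrak a_k=(1/2)_k/((n-1)/2)_k$.

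\textbf{Monotonicity and tail.} Monotonicity follows because the ratio $\mathfrak a_{k+1}/\mathfrak a_k=(k+1/2)/((n-1)/2+k)<1$. The tail bound \eqref{eq:spectral-tail} follows from the factor bound $(j+1/2)/((n-1)/2+j)\le (2k)/(n-1)$ for $j<k$.
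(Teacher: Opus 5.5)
\textbf{Gap: the value of $\mathfrak a_k$ is never derived.} Your outline is correct except at its one quantitative step. The scalar $c_k$ by which $\mathcal R^*\mathcal R$ acts on degree-$2k$ harmonics is asserted, not computed. You mention comparing norms, but neither norm is evaluated.

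Matching top-degree parts of $\mathcal R\langle e,\cdot\rangle^{2k}(H)=\frac{(1/2)_k}{(n/4)_k}\bigl(\frac{1+s}{2}\bigr)^k$, with $s=e^TS_He$, only identifies $\mathcal Rf_{2k}$ as an explicit multiple of $C_k^\lambda(s)$. To get $c_k=\|\mathcal Rf_{2k}\|_2^2/\|f_{2k}\|_2^2$ you would still need two more norms: the $L^2(\nu)$ norm of $C_k^\lambda(s)$, and the norm of the degree-$2k$ zonal component of $\langle e,x\rangle^{2k}$. The latter requires the leading coefficient and norm of $C_{2k}^{(n-2)/2}$. The full-sphere moment $(1/2)_k/(n/2)_k$ you quote only matches the degree-zero components, so it does not enter.

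Either of two short arguments closes the gap.
\begin{itemize}
\item \emph{Top-degree matching for $\mathcal R^*\mathcal Rf$ instead of $\mathcal Rf$.} For $x$ on the sphere with $u=\langle e,x\rangle$, average over Haar $H\ni x$ by writing $H=\mathbb Rx\oplus H'$. Then $e^TP_He=\beta+(1-\beta)u^2$ with $\beta\sim\mathrm{Beta}((n-2)/4,n/4)$, whose law does not depend on $u$. The zonal components $f_{2j}$ are polynomials of degree $2j$ in $u$, and $f_{2k}$ has leading coefficient one. Hence $c_k$ is the $u^{2k}$-coefficient of $\mathcal R^*\mathcal Rf$, namely $\frac{(1/2)_k}{(n/4)_k}\E(1-\beta)^k=\frac{(1/2)_k}{(n/4)_k}\cdot\frac{(n/4)_k}{((n-1)/2)_k}$.
\item \emph{A trace identity.} Your kernel formula does not depend on $c_k$, so you may take a trace over an orthonormal basis $h_1,\dots,h_N$ of degree-$2k$ harmonics. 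This gives $c_kN=\sum_i\|\mathcal Rh_i\|_2^2=\int\|K_k(\cdot,H)\|_2^2\,d\nu(H)=\frac{\lambda+k}{\lambda}C_k^\lambda(1)$. Here $N=\frac{n-2+4k}{n-2}\cdot\frac{(n-2)_{2k}}{(2k)!}$. Using $(n-2)_{2k}=4^k\bigl(\frac{n-2}2\bigr)_k\bigl(\frac{n-1}2\bigr)_k$ and $(2k)!=4^kk!\,(1/2)_k$, this gives $c_k=\mathfrak a_k$.
\end{itemize}

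\textbf{Comparison with the paper.} With that step supplied, your argument is complete, and it differs from the paper in useful ways.
\begin{itemize}
\item \emph{Eigenvalue.} The paper computes it for every subspace dimension $m$ from the transition $x\mapsto y$, with $y$ uniform in a random $m$-subspace through $x$. It uses the moments of $\langle x,y\rangle$, the explicit zonal expansion, and the terminating Vandermonde identity. The first fix above is a leading-coefficient version of that same transition computation, with no Vandermonde identity needed.
\item \emph{Harmonicity of $K_k$.} You verify the Gegenbauer equation directly; the Laplacian on functions of $|x_1|^2,|x_2|^2$ does reduce to it with $2\lambda+1=n/2$. The paper instead counts dimensions. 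Invariant polynomials of degree at most $2k$ on the sphere are polynomials of degree at most $k$ in $t$, so each even degree carries exactly one invariant harmonic, identified with $C_k^\lambda(t)$ by orthogonality.
\item \emph{Kernel normalization.} Yours is the paper's argument in substance.
\item \emph{Sign relation.} You derive $\mathcal R^-f_{2k}=(-1)^k\mathcal Rf_{2k}$ from the kernel formula, via $S_{H^\perp}=-S_H$ and $C_k^\lambda(-t)=(-1)^kC_k^\lambda(t)$. This is shorter than the paper's route through the cross-operator eigenvalue $Z_{2k}(0)=(-1)^k\mathfrak a_k$ and a norm expansion. The price is that everything in your proof rests on establishing the kernel formula first.
\end{itemize}
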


\begin{proof}
  For general $1\le m\le n-1$, let $\mathcal R_m$ average over
  $m$-dimensional subspaces. Jensen and rotational invariance show that
  it is an $L^2$ contraction. The operator $\mathcal R_m^*\mathcal R_m$
  is rotation invariant, hence scalar on each spherical harmonic degree.
  The odd degrees vanish by symmetry.

  For a fixed unit vector $x$, the transition represented by
  $\mathcal R_m^*\mathcal R_m$ chooses a uniform $m$-subspace containing
  $x$, then a uniform unit vector $y$ in that subspace. Therefore,
  with $t=\langle x,y\rangle$,
  \[
    \E t^{2j}=\frac{(1/2)_j}{(m/2)_j}.
  \]
  The normalized zonal harmonic of degree $2k$ has the expansion
  \[
    Z_{2k}(t)=\frac{(-1)^k(1/2)_k}{((n-1)/2)_k}
    \sum_{j=0}^k
    \frac{(-k)_j(k+(n-2)/2)_j}{(1/2)_j\,j!}\,t^{2j},
    \qquad Z_{2k}(1)=1.
  \]
  Its expected value under this transition is the eigenvalue. The
  terminating Vandermonde identity
  \[
    \sum_{j=0}^k\frac{(-k)_j(b)_j}{(c)_j\,j!}
    =\frac{(c-b)_k}{(c)_k}
  \]
  gives
  \begin{equation}\label{eq:general-spectrum}
    \lambda_{2k}(n,m)=
    \frac{(1/2)_k((n-m)/2)_k}{((n-1)/2)_k(m/2)_k},
  \end{equation}
  using $(1-k-(n-m)/2)_k=(-1)^k((n-m)/2)_k$.
  At $m=n/2$ this is $\mathfrak a_k$.
  The product
  $\mathfrak a_k=\prod_{j=0}^{k-1}(2j+1)/(n-1+2j)$
  proves monotonicity and \eqref{eq:spectral-tail}.

  The cross operator $\mathcal R^*\mathcal R^-$ samples a uniform
  orthogonal pair $x,y$. Its eigenvalue on degree $2k$ is
  $Z_{2k}(0)=(-1)^k\mathfrak a_k$. Since both self-operators have
  eigenvalue $\mathfrak a_k$, for a degree-$2k$ harmonic $f$ we have
  \[
    \|\mathcal R^-f-(-1)^k\mathcal Rf\|_2^2
    =2\mathfrak a_k\|f\|_2^2
    -2(-1)^k(-1)^k\mathfrak a_k\|f\|_2^2=0.
  \]
  The diagonalization of $\mathcal R^*\mathcal R$ also proves
  orthogonality of the images of distinct degrees.

  For the kernel normalization, fix $H$ and sample $x$ uniformly on the
  sphere. The variable $v=\|P_Hx\|^2$ has beta parameters $n/4,n/4$,
  so $t=2v-1$ has density proportional to
  $(1-t^2)^{n/4-1}$ on $[-1,1]$. Its orthogonal polynomials are
  $C_k^\lambda$. On the sphere, the $O(H)\times O(H^\perp)$-invariant
  polynomial functions of degree at most $2k$ are precisely the
  polynomials in $t$ of degree at most $k$, a space of dimension $k+1$.
  Harmonic decomposition preserves invariance, and odd degrees contain
  no invariants because $-I$ lies in the group. Hence the invariant
  harmonic space in each even degree is one-dimensional. In degree
  $2k$, its orthogonality to lower-degree polynomials in $t$ identifies
  it with the span of $C_k^\lambda(t)$. Thus $K_k(\cdot,H)$ is a degree-$2k$
  spherical harmonic, and averaging any harmonic of this degree over
  $O(H)\times O(H^\perp)$ gives a multiple of $C_k^\lambda(t)$.
  The classical Gegenbauer norm and endpoint formulas~\cite{szego} give
  \begin{equation}\label{eq:gegenbauer-norm}
    \E[C_k^\lambda(t)^2]
    =\frac{\lambda}{\lambda+k}C_k^\lambda(1),
    \qquad C_k^\lambda(1)=\frac{(2\lambda)_k}{k!}.
  \end{equation}
  The subspace average of the invariant polynomial is its value at
  $t=1$. Taking its coefficient by orthogonal projection and applying
  \eqref{eq:gegenbauer-norm} gives \eqref{eq:kernel-normalization}.
  All identities extend from finite harmonic sums to $L^2$ by contraction.
\end{proof}

\begin{lemma}[Harmonic truncation]\label{lem-trunc}
  If $\psi\in\{-1,1\}$ is constant on a complete partition into at most
  $2^d$ rectangles, then, for every integer $J\ge1$,
  \begin{equation}\label{eq:truncation}
    \Adv(\psi)\le\frac12
    \sum_{\substack{1\le k\le J\\k\text{ odd}}}|\E_0[\psi K_k]|
    +\frac12\sqrt{2^d\mathfrak a_{J+1}}.
  \end{equation}
\end{lemma}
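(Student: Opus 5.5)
Write $\psi=\sum_j\psi_j$ as its spherical harmonic expansion in the $x$ variable, for each fixed $H$. By definition of $\mu_\pm$, $\E_{\mu_+}\psi=\E_{H\sim\nu}\mathcal R\psi(\cdot,H)(H)$, where $\mathcal R$ acts on the $x$-slice $\psi(\cdot,H)$. Likewise $\E_{\mu_-}\psi=\E_H\mathcal R^-\psi(\cdot,H)(H)$. The plan is to decompose $\psi(\cdot,H)$ into harmonic degrees, apply \cref{lem-spectrum} degree by degree, and split at degree $2J$.

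\textbf{Step 1: low degrees.} Odd degrees are annihilated by both transforms. In degree $2k$, \eqref{eq:spectrum} gives $\mathcal R^-f_{2k}=(-1)^k\mathcal Rf_{2k}$. Hence even $k$ (including $k=0$, the constant term) contributes zero to $\E_{\mu_+}\psi-\E_{\mu_-}\psi$. Odd $k$ contributes $2\,\mathcal R\psi_{2k}(H)$. By \eqref{eq:kernel-normalization}, $\mathcal R\psi_{2k}(H)=\int\psi_{2k}K_k\,d\sigma_{\rm u}=\int\psi K_k\,d\sigma_{\rm u}$, because $K_k(\cdot,H)$ is a harmonic of degree $2k$ and distinct degrees are orthogonal. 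Averaging over $H\sim\nu$ gives $2\E_0[\psi K_k]$. After dividing by $4$, the degrees $k\le J$ contribute at most $\frac12\sum_{k\le J,\ k\text{ odd}}|\E_0[\psi K_k]|$.

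\textbf{Step 2: high degrees.} Let $g_H=\sum_{k>J}\psi_{2k}(\cdot,H)$. The tail of the difference is $\frac14\E_H[\mathcal Rg_H(H)-\mathcal R^-g_H(H)]$. This is where the rectangle structure must enter: a pointwise bound over $H$ is useless, since a generic $\pm1$ function has no spectral decay. Write $\psi=\sum_{R}\epsilon_R\,1_{A_R}(x)1_{\mathcal C_R}(H)$ over the at most $2^d$ rectangles $A_R\times\mathcal C_R$. By linearity, the tail splits into at most $2^d$ terms. For a single rectangle, set $h=1_{A_R}$ and let $h_{>}$ denote its part in harmonic degrees $2k$ with $k>J$, a function independent of $H$. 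The term is $\frac14\epsilon_R\E_H[1_{\mathcal C_R}(H)(\mathcal Rh_{>}-\mathcal R^-h_{>})(H)]$.

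By Cauchy--Schwarz, orthogonality of the images of distinct degrees, \eqref{eq:spectrum} and monotonicity of $\mathfrak a_k$,
\[
\|\mathcal Rh_>\|_2^2=\sum_{k>J}\mathfrak a_k\|h_{2k}\|_2^2\le\mathfrak a_{J+1}\|h\|_2^2=\mathfrak a_{J+1}\sigma_{\rm u}(A_R).
\]
The same bound holds for $\mathcal R^-$. So the term is at most $\frac12\sqrt{\nu(\mathcal C_R)}\sqrt{\mathfrak a_{J+1}\sigma_{\rm u}(A_R)}=\frac12\sqrt{\mathfrak a_{J+1}\mu_0(R)}$.

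Summing over the rectangles and applying Cauchy--Schwarz, $\sum_R\sqrt{\mu_0(R)}\le\sqrt{2^d}$ because the $\mu_0(R)$ sum to one. This gives $\frac12\sqrt{2^d\mathfrak a_{J+1}}$. The signs $\epsilon_R$ are absorbed by taking absolute values.

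\textbf{Main obstacle: measurability and convergence.} One must justify interchanging the harmonic expansion with the rectangle sum and the $H$-average. Each $\psi(\cdot,H)$ is a finite combination of indicators, so every series converges in $L^2$. The identities extend by contraction, as noted at the end of the proof of \cref{lem-spectrum}. Applying the low-degree identity to each rectangle separately avoids any issue with measurability of the $H$-dependent expansion.
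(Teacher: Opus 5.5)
Your proposal is correct and is essentially the paper's proof: on each rectangle $A\times\mathcal C$ you use \eqref{eq:kernel-normalization} and orthogonality to convert the odd degrees $2k\le 2J$ into $2\int_{A\times\mathcal C}K_k\,d\mu_0$. You then bound the omitted part of $\mathcal R-\mathcal R^-$ by $2\sqrt{\mathfrak a_{J+1}}$ in operator norm, which gives $2\sqrt{\mathfrak a_{J+1}\sigma_{\rm u}(A)\nu(\mathcal C)}$, and finish with Cauchy--Schwarz over at most $2^d$ rectangles. Working rectangle by rectangle, as you note at the end, is exactly how the paper avoids the $H$-dependent harmonic expansion.
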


\begin{proof}
  On a rectangle $A\times\mathcal C$, the difference between the two
  promise probabilities is
  $\int_{\mathcal C}(\mathcal R-\mathcal R^-)\mathbf1_A\,d\nu$.
  Since $K_k(\cdot,H)$ is a degree-$2k$ harmonic, orthogonality allows
  us to replace $(\mathbf1_A)_{2k}$ by $\mathbf1_A$ in
  \eqref{eq:kernel-normalization}. Thus, by \cref{lem-spectrum}, the
  contribution through degree $2J$ is
  $2\sum_{k\le J,\,k\text{ odd}}
  \int_{A\times\mathcal C}K_k\,d\mu_0$.
  The omitted operator has norm at most $2\sqrt{\mathfrak a_{J+1}}$.
  Its contribution on this rectangle is therefore at most
  $2\sqrt{\mathfrak a_{J+1}}
  \sqrt{\sigma_{\rm u}(A)\nu(\mathcal C)}$ in absolute value.
  The rectangle masses sum to one. Cauchy--Schwarz over at most $2^d$
  rectangles bounds the error in the signed output difference by
  $2\sqrt{2^d\mathfrak a_{J+1}}$. Dividing by four gives the result.
  These are $L^2$ expansions of indicator functions followed by bounded
  operators and finitely many integrals; no pointwise density expansion
  of the singular promise law is needed.
\end{proof}

\subsection{The Gaussian lift and its leading coefficient}

Let $X=Y/|Y|$, and let $H(G)$ be the span of the eigenvectors belonging
to the largest $n/2$ eigenvalues of $G$. Ties form a null set and can
be resolved by any fixed Borel convention. Orthogonal invariance makes
$H(G)$ Haar, independently of the ordered eigenvalues. Thus
$(X,H(G))$ has law $\mu_0$, and pulling a protocol back under these
two separate maps preserves its cost and rectangle structure.

\begin{lemma}[Triangular expansion and leading coefficient]\label{lem-top}
  There are real coefficients $\alpha_{kj}$ such that
  \begin{equation}\label{eq:triangular}
    \E[F_k(Y,G)\mid X,H]=\sum_{j=0}^k\alpha_{kj}K_j(X,H).
  \end{equation}
  The leading coefficient obeys
  \begin{equation}\label{eq:top-coefficient}
    \alpha_{kk}\ge k!\,n^{k/2}6^{-k/2}.
  \end{equation}
\end{lemma}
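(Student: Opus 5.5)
The plan is to first get the triangular form \eqref{eq:triangular} from invariance, and then compute $\alpha_{kk}$ as an inner product against $K_k$.

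\textbf{Triangular expansion.} The pair $(Y,G)$ is jointly invariant under $O(n)$, and so is $F_k$, since the Hermite tensors and $\mathcal J_k$ are equivariant. Given $(X,H)$, the remaining randomness is the radius $|Y|$, which is independent of $X$, and the ordered spectrum together with the eigenbases inside $H$ and $H^\perp$. The stabilizer $O(H)\times O(H^\perp)$ acts transitively on those eigenbases and fixes the spectrum. Hence, for fixed $H$, the function $x\mapsto\E[F_k\mid X=x,H]$ is $O(H)\times O(H^\perp)$-invariant.

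It is also the restriction to the sphere of a polynomial in $x$ of degree at most $2k$. This holds because $F_k$ is a polynomial of degree $2k$ in $Y=|Y|X$, and we integrate out $|Y|$ and the spectrum. The polynomial is even, since only even powers of $Y$ occur.

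The proof of \cref{lem-spectrum} shows that invariant polynomials of degree at most $2k$ on the sphere are exactly polynomials in $t=x^TS_Hx$ of degree at most $k$. Since $\deg C_j^\lambda=j$, this space is spanned by $K_0,\dots,K_k$. This gives \eqref{eq:triangular} with real coefficients, because $F_k$ is real by \cref{lem-lift}.

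\textbf{Leading coefficient as an inner product.} By orthogonality of distinct harmonic degrees,
\[
\alpha_{kk}=\frac{\E[F_kK_k(X,H)]}{\E_0K_k^2}.
\]
From \eqref{eq:gegenbauer-norm}, the denominator is $\E_0K_k^2=\frac{\lambda+k}{\lambda}\frac{(2\lambda)_k}{k!}$.

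For the numerator, write $Y^{\otimes2k}=|Y|^{2k}X^{\otimes2k}$. The degree-$2k$ harmonic part of $\He_{2k}(Y)$ is $|Y|^{2k}$ times the harmonic projection of $X^{\otimes 2k}$. All lower-order Wick corrections are lower-degree polynomials in $X$ and are orthogonal to $K_k$. So
\[
\E[F_kK_k]=\E|Y|^{2k}\cdot\E\bigl[(X^TMX)\text{-pairing}\cdot K_k\bigr],
\qquad M=\E[\He_k^{\GOE}(G)\mid H].
\]
I intend to reduce this further using the same argument on the matrix side. Pairing $\mathcal J_kM$ with a degree-$2k$ harmonic that is $O(H)\times O(H^\perp)$-invariant only sees the top-degree invariant component. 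I expect the leading part of $M$ in this pairing to be $\E[(\bar\lambda_+-\bar\lambda_-)/2]^k\,S_H^{\otimes k}$ plus terms that do not reach $t^k$, where $\bar\lambda_\pm$ are the mean eigenvalues in $H$ and $H^\perp$. The pairing of $(x^TS_Hx)^k$ with $K_k$ is then explicit via the $t^k$ coefficient of $C_k^\lambda$, which is $2^k(\lambda)_k/k!$.

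Combining $\E|Y|^{2k}=2^k(n/2)_k$ with these normalizations, the factors $(\lambda)_k,(2\lambda)_k$ should cancel to $k!$ times $c^k$ up to $C^k$ losses. Here $c\asymp\sqrt n$ is the eigenvalue-gap scale: by the semicircle law, the top-half mean minus the bottom-half mean is about $\frac{16}{3\pi}\sqrt n$ in this normalization. The target is $c\ge\sqrt{n/6}$.

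\textbf{Main obstacle.} I expect the hardest step to be making the matrix-side reduction rigorous and signed. The quantity $\E[\He^{\GOE}_k(G)\mid H]$ is a Wick polynomial, not simply $(\E[G\mid H])^{\otimes k}$. I must show that the only contribution to the $S_H^{\otimes k}$ direction, after harmonic projection, is a nonnegative quantity at least $(\sqrt{n/6})^k$. Wick corrections and eigenvalue fluctuations must not cancel it.

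My first attempt would condition on the spectrum and write $G=\sum_i\lambda_iu_iu_i^T$. I would then note that the only invariant of $O(H)\times O(H^\perp)$ of top $t$-degree comes from the product of the traces of $G$ restricted to $H$ and to $H^\perp$. This reduces the task to a lower bound on a one-dimensional Wick moment of $\operatorname{tr}(GS_H)/n$. For that I would use Jensen and a variance bound for the linear eigenvalue statistic $\sum_{i\le n/2}\lambda_i-\sum_{i>n/2}\lambda_i$, whose mean is of order $n^{3/2}$ and whose fluctuations are $O(\sqrt n)$.
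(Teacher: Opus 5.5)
Your triangular expansion is the paper's invariance argument, and your inner-product formula for $\alpha_{kk}$ is equivalent to comparing $t^k$ coefficients. The gap is in the leading coefficient itself, which you leave as an expectation, and the structure you plan to use there is misdescribed.

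The matrix Wick corrections need no control at all. Each one contracts two matrix slots against the GOE covariance, and $\sum_{a,b,c,d}(\delta_{ac}\delta_{bd}+\delta_{ad}\delta_{bc})x_ax_bx_cx_d=2|x|^4$. So on the sphere every corrected term has degree at most $2k-4$ in $x$ and no degree-$2k$ harmonic component. Thus the top component of $F_k$ is that of $(Y^TGY)^k$. What you need is the $t^k$ coefficient of the ordinary moment $\E[(x^TGx)^k\mid H]$, not a Wick moment of $\operatorname{tr}(GS_H)/n$.

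That coefficient is also not a power of the mean gap plus terms below $t^k$, and it is not a function of the traces of $G$ on $H$ and $H^\perp$ alone. Take $n=2m$, $\Gamma=G/\sqrt n$ with ordered eigenvalues $\gamma_1\ge\cdots\ge\gamma_n$, and independent uniform $u,v\in S^{m-1}$. Put $A_{\rm eig}=\sum_{i\le m}\gamma_iu_i^2$ and $B_{\rm eig}=\sum_{i\le m}\gamma_{m+i}v_i^2$. Uniform rotation of the eigenbases inside $H$ and $H^\perp$ gives exactly $\E[(x^T\Gamma x)^k\mid H]=\E[(\tfrac{1+t}2A_{\rm eig}+\tfrac{1-t}2B_{\rm eig})^k]$. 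So the $t^k$ coefficient is $b_{n,k}=2^{-k}\E(A_{\rm eig}-B_{\rm eig})^k$. Already for $k=2$ it involves $\E[A_{\rm eig}^2\mid\gamma]=(2\sum_{i\le m}\gamma_i^2+(\sum_{i\le m}\gamma_i)^2)/(m(m+2))$. Eigenvalue and eigenvector fluctuations therefore sit in the top coefficient itself.

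What closes the argument is a sign, not a variance bound. Pointwise, $A_{\rm eig}\ge\gamma_m\ge\gamma_{m+1}\ge B_{\rm eig}$, so Jensen gives $b_{n,k}\ge b_n^k$ with $b_n=\E(A_{\rm eig}-B_{\rm eig})/2=\frac1n\E\operatorname{tr}(\Gamma S_H)$. Fluctuations can only help.

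You then need $b_n\ge1/\sqrt6$ for every $n$. The semicircle law gives the mean gap only asymptotically, and would still require convergence of expectations. A computation ``up to $C^k$ losses'' also cannot yield the explicit constant $6^{-k/2}$.

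The paper's non-asymptotic argument has three ingredients. The variational principle gives $\operatorname{tr}(\Gamma S_H)\ge2\|B\|_*$ for a fixed $m\times m$ off-diagonal block $B$ of $\Gamma$. H\"older gives $\E\|B\|_F^2\le(\E\|B\|_*)^{2/3}(\E\operatorname{tr}((BB^T)^2))^{1/3}$. The Gaussian moments are $\E\|B\|_F^2=m/2$ and $\E\operatorname{tr}((BB^T)^2)=(2m+1)/4$. Together these give $b_n\ge\sqrt{m/(4m+2)}\ge1/\sqrt6$.

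The normalizations then combine without loss: $\alpha_{kk}=k!\,n^{k/2}\,b_{n,k}\,(n/2)_k/((n+2)/4)_k$, and $(n/2)_k\ge((n+2)/4)_k$.
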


\begin{proof}
  Average over the independent Gaussian radius and, conditional on $H$,
  over the remaining matrix data. The result is a stabilizer-invariant
  even polynomial on the sphere of degree at most $2k$. It is therefore
  a polynomial of degree at most $k$ in $t=X^TS_HX$, proving
  \eqref{eq:triangular}.

  The top spherical component equals that of $(Y^TGY)^k$.
  Indeed, the leading term of $\He_{2k}(Y)$ is $Y^{\otimes2k}$,
  and its lower terms have degree less than $2k$.
  Every matrix Hermite correction contracts a pair of matrix slots
  against the GOE covariance. Contracting this covariance against four
  copies of $Y$ gives $2|Y|^4$, so it introduces vector traces and has
  no degree-$2k$ spherical component.

  It remains to lower-bound the leading coefficient of the conditional
  polynomial for $(X^TGX)^k$. Write $n=2m$, $\Gamma=G/\sqrt n$,
  and let $\gamma_1\ge\cdots\ge\gamma_n$ be its ordered eigenvalues.
  Conditional block invariance and symmetry under $\Gamma\mapsto-\Gamma$
  give
  \begin{equation}\label{eq:bn}
    \E[\Gamma\mid H]=b_nS_H,
    \qquad b_n=\frac1n\E\left(\sum_{i=1}^m\gamma_i
    -\sum_{i=m+1}^n\gamma_i\right).
  \end{equation}
  We prove $b_n\ge1/\sqrt6$ directly, without an asymptotic eigenvalue law.

  Let $B$ be a fixed $m\times m$ off-diagonal block of $\Gamma$.
  The variational principle for the largest $m$ eigenvalues yields
  \[
    \operatorname{tr}(\Gamma S_H)
    \ge\max_{U\in O(m)}\operatorname{tr}
    \left(\Gamma
      \begin{pmatrix}0&U\\U^T&0
    \end{pmatrix}\right)
    =2\|B\|_*.
  \]
  Here $\|B\|_*$ is the sum of its singular values. H\"older applied
  first to the singular values and then to expectations gives
  \[
    \E\|B\|_F^2\le(\E\|B\|_*)^{2/3}
    \big(\E\operatorname{tr}((BB^T)^2)\big)^{1/3}.
  \]
  The entries of $B$ are independent $N(0,1/(2m))$. Gaussian second
  and fourth moments, or Isserlis' formula~\cite{isserlis}, give
  \[
    \E\|B\|_F^2=m/2,
    \qquad\E\operatorname{tr}((BB^T)^2)=(2m+1)/4.
  \]
  For the latter identity, expand
  $\sum_{i,j,a,b}\E B_{ia}B_{ja}B_{jb}B_{ib}$; the three
  Gaussian pairings contribute $(m^3+m^2+m^3)/(4m^2)$.
  Consequently,
  \begin{equation}\label{eq:bn-lower}
    b_n\ge\frac1m\frac{(m/2)^{3/2}}{((2m+1)/4)^{1/2}}
    =\sqrt{\frac{m}{4m+2}}\ge\frac1{\sqrt6}.
  \end{equation}

  Conditional on the ordered eigenvalues, take independent uniform
  $u,v\in S^{m-1}$ and define
  \[
    A_{\rm eig}=\sum_{i=1}^m\gamma_i u_i^2,
    \qquad B_{\rm eig}=\sum_{i=1}^m\gamma_{m+i}v_i^2,
    \qquad b_{n,k}=2^{-k}\E(A_{\rm eig}-B_{\rm eig})^k.
  \]
  The ordering gives $A_{\rm eig}\ge B_{\rm eig}$, and
  $\E(A_{\rm eig}-B_{\rm eig})=2b_n$. Hence Jensen gives
  $b_{n,k}\ge b_n^k\ge6^{-k/2}$.
  Uniform internal eigenvector rotations show that, for fixed $x,H$,
  \[
    \E[(x^T\Gamma x)^k\mid H]
    =\E\left[\left(\frac{1+t}{2}A_{\rm eig}
    +\frac{1-t}{2}B_{\rm eig}\right)^k\right],
    \qquad t=x^TS_Hx.
  \]
  The coefficient of $t^k$ is $b_{n,k}$.
  The leading coefficient of $K_k$ as a polynomial in $t$ is
  $2^k((n+2)/4)_k/k!$, and
  $\E|Y|^{2k}=2^k(n/2)_k$. Thus
  \[
    \alpha_{kk}=
    \frac{k!n^{k/2}b_{n,k}(n/2)_k}{((n+2)/4)_k}
    \ge k!n^{k/2}6^{-k/2},
  \]
  as required.
\end{proof}

\subsection{Local smoothing and coefficient extraction}

We use the classical Poisson operator on the sphere~\cite{groemer}.
For $0\le\rho<1$, let $\mathsf P_\rho x$ be the first exit point
of Brownian motion from the unit ball, started at $\rho x$.
For a homogeneous harmonic polynomial $h_j$ of degree $j$,
bounded optional stopping gives
\begin{equation}\label{eq:poisson}
  \E[h_j(\mathsf P_\rho x)\mid x]=\rho^j h_j(x).
\end{equation}
To justify the stopping, if $\tau$ is the exit time, the squared-norm
martingale gives $n\E(\tau\wedge t)\le1$; hence $\tau$ is almost
surely finite. The polynomial is bounded on the closed ball.
Rotational invariance makes $\sigma_{\rm u}$ stationary, and the real
diagonal action in \eqref{eq:poisson} makes this operator self-adjoint
on $L^2(\sigma_{\rm u})$. At $\rho=1$, take the identity.

\begin{lemma}[Gaussian bridge]\label{lem-bridge}
  For every deterministic cost-$d$ protocol with output constant on each
  rectangle and every $k\ge1$,
  \begin{equation}\label{eq:bridge}
    |\E_0[\psi K_k]|
    \le\frac{(2e\sqrt6)^k}{k!n^{k/2}}\mathcal B_k(d).
  \end{equation}
\end{lemma}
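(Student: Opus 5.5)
The plan is to pull the spherical protocol back to Gaussian space, use \cref{lem-top} to write the pulled-back correlation as a polynomial in an auxiliary smoothing parameter whose top coefficient is $\alpha_{kk}\E_0[\psi K_k]$, and then extract that coefficient from uniform bounds on the polynomial. The uniform bounds come from $\mathcal B_k(d)$. Write $\psi$ also for its pullback $\psi(X,H(G))$ with $X=Y/|Y|$; this is constant on the cells of a complete partition into at most $2^d$ rectangles.

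\emph{Step 1 (smoothed protocols are still protocols).} For $0\le\rho\le1$ set $\psi_\rho(x,H)=\E[\psi(\mathsf P_\rho x,H)\mid x]$. For each fixed realization of the Brownian path, the map $x\mapsto\mathsf P_\rho x$ acts on Alice's input alone. Hence $(Y,G)\mapsto\psi(\mathsf P_\rho(Y/|Y|),H(G))$ is still constant on the cells of a complete rectangle partition $\mathsf T'$ with at most $2^d$ cells. For such a $\pm1$ function $\psi'$,
\[
  |\E[\psi'F_k]|=|\E[\psi'\,\E[F_k\mid\mathsf T']]|\le\E|\E[F_k\mid\mathsf T']|\le\mathcal B_k(d).
\]
Averaging over the independent Brownian randomness (Fubini, using joint measurability of the Poisson kernel) gives
\[
  |g(\rho)|\le\mathcal B_k(d),\qquad g(\rho):=\E[\psi_\rho(X,H)F_k(Y,G)],\qquad 0\le\rho\le1.
\]

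\emph{Step 2 (polynomial structure).} Since $\psi_\rho$ depends only on $(X,H)$, conditioning and \eqref{eq:triangular} give $g(\rho)=\sum_{j=0}^k\alpha_{kj}\E_0[\psi_\rho K_j]$. The Poisson operator is self-adjoint on $L^2(\sigma_{\rm u})$ for each fixed $H$, and $K_j(\cdot,H)$ is a degree-$2j$ harmonic (\cref{lem-spectrum}). By \eqref{eq:poisson} this yields $\E_0[\psi_\rho K_j]=\rho^{2j}\E_0[\psi K_j]$. Thus, with $s=\rho^2\in[0,1]$,
\[
  p(s):=\sum_{j=0}^k\alpha_{kj}\E_0[\psi K_j]\,s^j
\]
is a real polynomial of degree at most $k$ with $\sup_{[0,1]}|p|\le\mathcal B_k(d)$. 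Its leading coefficient is $\alpha_{kk}\E_0[\psi K_k]$.

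\emph{Step 3 (coefficient extraction).} By Chebyshev's extremal property on $[0,1]$ (compare with $T_k(2s-1)$, whose leading coefficient is $2^{2k-1}$), the leading coefficient of $p$ is at most $2^{2k-1}\sup_{[0,1]}|p|$ in absolute value. Combining with \eqref{eq:top-coefficient},
\[
  |\E_0[\psi K_k]|\le\frac{4^k\mathcal B_k(d)}{\alpha_{kk}}\le\frac{(4\sqrt6)^k}{k!\,n^{k/2}}\mathcal B_k(d),
\]
which implies \eqref{eq:bridge} since $4<2e$. A cruder alternative reaching exactly $2e$ is Lagrange interpolation at nodes $s_i=i/k$: the leading coefficient is $\sum_i p(s_i)\prod_{l\ne i}(s_i-s_l)^{-1}$, and $\sum_i k^k/(i!(k-i)!)=(2k)^k/k!\le(2e)^k$.

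The main obstacle is Step 1. One must check carefully that the randomized Alice-side smoothing keeps every realization inside the class of partitions over which $\mathcal B_k(d)$ is a supremum. This requires measurability of $(x,\omega)\mapsto\mathsf P_\rho x$ and an interchange of the Brownian average with the Gaussian expectation. One must also confirm that the self-adjointness identity holds for the bounded but non-polynomial function $\psi(\cdot,H)$, which follows from the $L^2$ harmonic expansion. The algebraic extraction in Steps 2 and 3 is routine once these points are secured.
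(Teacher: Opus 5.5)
Your proposal is correct and follows the paper's proof: private Alice-side Poisson smoothing keeps every seed's pullback inside the class defining $\mathcal B_k(d)$. Self-adjointness and \eqref{eq:triangular} then make the correlation a polynomial of degree at most $k$ in $\rho^2$ with leading coefficient $\alpha_{kk}\E_0[\psi K_k]$, and dividing by \eqref{eq:top-coefficient} finishes. The only difference is the extraction step: the paper uses the $k$th forward difference at nodes $i/k$ (your Lagrange alternative, giving $(2e)^k$), while your Chebyshev comparison gives the slightly sharper and equally valid factor $2^{2k-1}$.
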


\begin{proof}
  Alice privately replaces $X$ by $\mathsf P_\rho X$ before executing
  the original protocol. This preserves its cost. The comparison
  protocol need not solve the original promise problem: the definition
  of $\mathcal B_k(d)$ applies to every complete rectangle partition.
  Since $K_j(\cdot,H)$ is a degree-$2j$ harmonic, \eqref{eq:poisson}
  gives it eigenvalue $\rho^{2j}$. By \eqref{eq:triangular}, stationarity,
  and self-adjointness,
  \begin{align*}
    P(z)&:=\E[\psi(\mathsf P_\rho X,H)F_k(Y,G)]\\
    &=\sum_{j=0}^k\alpha_{kj}z^j\E_0[\psi(X,H)K_j(X,H)],
    \qquad z=\rho^2.
  \end{align*}
  Represent the smoothing kernel by an independent private seed. For
  every fixed seed, its pullback is again a complete Gaussian rectangle
  partition with at most $2^d$ cells. Therefore
  $|P(z)|\le\mathcal B_k(d)$ for all $z\in[0,1]$.

  For a polynomial of degree at most $k$, its $k$th forward difference
  at spacing $1/k$ gives
  \[
    \frac{k!}{k^k}[z^k]P(z)
    =\sum_{i=0}^k(-1)^{k-i}\binom ki P(i/k).
  \]
  It follows that
  \begin{equation}\label{eq:extraction}
    |[z^k]P(z)|\le\frac{2^kk^k}{k!}\sup_{z\in[0,1]}|P(z)|
    \le(2e)^k\mathcal B_k(d).
  \end{equation}
  Divide by the leading coefficient in \eqref{eq:top-coefficient}.
\end{proof}

The finitely many smoothed protocols are comparison objects in a
coefficient identity. They are not run in succession and do not
multiply the communication cost.

\begin{lemma}[Crude Gaussian bound]\label{lem-crude}
  For a positive-probability rectangle $R=A\times\mathcal C$ with
  surprises $a,b$,
  \[
    \bigl|\E[F_k\mid R]\bigr|\le\big(C(a+b+k)\big)^{3k/2}.
  \]
  Consequently $\mathcal B_k(d)\le C^k(d+k)^{3k/2}$.
\end{lemma}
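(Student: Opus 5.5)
The plan is to bound $|\E[F_k\mid R]|$ by Hölder and change of density, and then use hypercontractivity on the full Gaussian space. Write $\rho=\mathbf 1_R/\Pr(R)$. This is a density relative to $\gamma_n\otimes\GOE(n)$ and satisfies $\rho\le e^{a+b}$. By \eqref{eq:density} with $q=1$ and any $s\ge1$,
\[
  \bigl|\E[F_k\mid R]\bigr|\le\|F_k\|_{L^1(\rho)}\le e^{(a+b)/s}\|F_k\|_{L^s(\gamma_n\otimes\GOE(n))}.
\]
By \cref{lem-lift}, $F_k$ is a chaos of total degree $3k$ in the joint Gaussian vector $(Y,G)$. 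Hypercontractivity \eqref{eq:classical-chaos} then gives $\|F_k\|_s\le(s-1)^{3k/2}\|F_k\|_2$ for $s\ge2$.

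The second step computes $\|F_k\|_2$. The two factors are independent, so the Wick isometry can be applied in each variable:
\[
  \E F_k^2=\E\langle\He_{2k}(Y),\mathcal J_k\He^{\GOE}_k(G)\rangle^2 .
\]
Conditioning on $G$ and using \eqref{eq:classical-chaos} in $Y$ gives $(2k)!\,\E\|\Sym\mathcal J_k\He_k^{\GOE}(G)\|_{\HS}^2$. Symmetrization is a contraction, and $\|\mathcal J_kT\|_{\HS}\le2^{k/2}\|T\|$. This reduces the problem to $\E\|\He^{\GOE}_k(G)\|^2_{\GOE^{\otimes k}}$, which equals $k!$ times the squared norm of the identity tensor paired over all $\dim\Sym(\mathbb R^n)^k$ coordinates.

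This is the main obstacle: that quantity carries a dimension factor of order $n^{2k}$, which the stated bound does not allow. So the crude bound cannot simply use $\|F_k\|_2$ on the whole space. I would instead condition on Bob's part first. For fixed $G$, the function $y\mapsto F_k(y,G)$ is a degree-$2k$ chaos in $Y$ with coefficient tensor $\mathcal J_k\He_k^{\GOE}(G)$. Likewise, for fixed $y$, it is a degree-$k$ chaos in $G$, namely $\He_k$ evaluated against $\He_{2k}(y)$ through $\mathcal J_k$. The dimension-free route is to apply \eqref{eq:density} separately to Alice's density $\le e^a$ and Bob's density $\le e^b$. Each application converts the conditional expectation into an $L^s$ norm under the Gaussian law, at a cost of $e^{a/s}$ or $e^{b/s}$.

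I would then use the lift representation to get a dimension-free comparison. By \cref{lem-lift}, $F_k=\E_{Z,G'}[(w^T(G+iG')w)^k]$. I expect the author's actual argument either exploits cancellation in a way I do not see, or reads "dimension-free" as meaning the $C$ hides no $n$ dependence, which the moments of $w^T(G+iG')w$ in fact do.

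Concretely, I would first check how $\E F_k^2$ scales with $n$. If it is $O(C^kk!^3)$, then the bound follows with $s=a+b+k$: we get $e^{(a+b)/s}\le e$ and $(s-1)^{3k/2}(k!)^{3/2}C^k\le(C(a+b+k))^{3k/2}$ up to adjusting constants. The consequence for $\mathcal B_k(d)$ then comes from averaging over cells with \cref{lem-surprise}. Since $\E(a_{\mathsf T}+b_{\mathsf T}+k)^{3k/2}\le(d\log2+k+3k/2)^{3k/2}$, this gives $\mathcal B_k(d)\le C^k(d+k)^{3k/2}$.
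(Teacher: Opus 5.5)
Your proposal has a genuine gap. You correctly diagnose that $\|F_k\|_{L^2}$ grows with $n$ (already $\E F_1^2=2n(n+1)$), but you never supply a dimension-free replacement. Your fallback of applying \eqref{eq:density} separately in $Y$ and in $G$ still ends with $L^s$ norms of $F_k$, or of $y\mapsto F_k(y,G)$. The $L^2$ norm of the latter is $\sqrt{(2k)!}\,\|\Sym\mathcal J_k\He^{\GOE}_k(G)\|_{\HS}$, which carries the same dimension factor. Your last paragraph is conditional on $\E F_k^2=O(C^kk!^3)$, which your own computation shows is false. The lemma really is dimension-free, so there is no reinterpretation to fall back on.

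The missing idea is to take the conditional expectation \emph{before} any norm and to use the tensor structure. On a rectangle, $Y$ and $G$ are conditionally independent, so
\[
  \E[F_k\mid R]=\bigl\langle \E[\He_{2k}(Y)\mid A],\ \mathcal J_k\,\E[\He^{\GOE}_k(G)\mid\mathcal C]\bigr\rangle .
\]
Cauchy--Schwarz and $\|\mathcal J_kT\|_{\HS}\le 2^{k/2}\|T\|_{\GOE^{\otimes k}}$ then reduce everything to the Hilbert--Schmidt norms of the two conditional mean tensors. Bound each by duality: $\|\E_\eta\He_r\|_{\HS}=\sup_{\|U\|\le1}|\E_\eta\langle U,\He_r\rangle|$. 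For a \emph{unit} test tensor $U$, the Wick isometry gives $\|\langle U,\He_r\rangle\|_{L^2(\gamma)}=\sqrt{r!}\,\|\Sym U\|\le\sqrt{r!}$, with no dimension.

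If $\eta$ has density at most $e^s$, then \eqref{eq:density} and hypercontractivity at exponent $u=\max(2,2s/r)$ give $|\E_\eta\langle U,\He_r\rangle|\le e^{s/u}(u-1)^{r/2}\sqrt{r!}\le(2e(s+r))^{r/2}$. Apply this with $(r,s)=(2k,a)$ for Alice and $(k,b)$ for Bob. The result is $(C(a+k))^k(C(b+k))^{k/2}\le(C(a+b+k))^{3k/2}$.

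This is the ``cancellation'' you were looking for. Since $\E_\gamma\He_r=0$, a density tilt bounded by $e^s$ can only produce a mean tensor of dimension-free size, even though $\He_r(Y)$ itself is huge. Your final averaging step via \cref{lem-surprise} with exponent $3k/2$ is then correct as written.
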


\begin{proof}
  This is the standard level bound obtained from hypercontractivity
  \cite{nelson}; we include its normalization.
  If a law $\eta$ has density at most $e^s$ relative to standard Gaussian
  measure, \cref{eq:density,eq:classical-chaos} give, for a unit tensor $U$,
  \[
    |\E_\eta\langle U,\He_r\rangle|
    \le e^{s/u}(u-1)^{r/2}\sqrt{r!},
    \qquad u=\max(2,2s/r).
  \]
  This is at most $(2e(s+r))^{r/2}$, so the same bound holds for
  $\|\E_\eta\He_r\|_{\HS}$ by duality. Conditional independence
  on a rectangle and the norm of $\mathcal J_k$ now give
  \begin{align*}
    \bigl|\E[F_k\mid R]\bigr|
    &\le\bigl\|\E[\He_{2k}(Y)\mid A]\bigr\|_{\HS}\,
    2^{k/2}\bigl\|\E[\He_k^{\GOE}(G)\mid\mathcal C]\bigr\|\\
    &\le(C(a+k))^k(C(b+k))^{k/2}.
  \end{align*}
  Average using \cref{lem-surprise} with moment exponent $3k/2$.
\end{proof}

\subsection{Proof of the communication theorem}

\begin{proof}[Proof of \cref{thm-main}, assuming \cref{thm-cp}]
  Let $\psi$ be the output of a deterministic cost-$d$ protocol, with
  $d\ge1$. Set $J=d$ in \cref{lem-trunc}. By
  \eqref{eq:spectral-tail},
  \[
    \Adv(\psi)\le\frac12\sum_{\substack{k\le d\\k\text{ odd}}}
    |\E_0[\psi K_k]|+
    \frac12\left(\frac{4(d+1)}{n-1}\right)^{(d+1)/2}.
  \]
  For $k\le\sqrt d$, \cref{lem-bridge,thm-cp} bound the $k$th
  correlation by
  \[
    \left(\frac{C(d+k^2)}{\sqrt n}\right)^k
    \le\left(\frac{Cd}{\sqrt n}\right)^k.
  \]
  For $\sqrt d<k\le d$, use \cref{lem-crude} and $k!\ge(k/e)^k$:
  \[
    |\E_0[\psi K_k]|\le
    \left(\frac{C(2d)^{3/2}}{k\sqrt n}\right)^k
    \le\left(\frac{Cd}{\sqrt n}\right)^k.
  \]
  Choose an absolute constant $c_0\le1$ sufficiently small that
  $Cc_0\le1/12$. For $1\le d\le c_0\sqrt n$,
  the geometric sum, including its factor $1/2$, is at most
  $r/(2(1-r))<1/12$, where $r=Cc_0$.
  The last term is $O((d+1)/n)$, uniformly in this range: for large $n$
  its base is at most one and its exponent is at least one.
  It is therefore less than $1/12$ for all sufficiently large even $n$,
  giving $\Adv(\psi)<1/6$.

  Now set $d=\lfloor c_0\sqrt n\rfloor$. A randomized protocol
  of cost at most $d-1$ and distributional success at least
  $2/3$ has a fixing of its coins with at least that success.
  Appending its output bit yields a deterministic cost-$d$ protocol,
  contradicting the preceding bound. Worst-case success implies
  distributional success, so, after decreasing $c_0$ by a factor
  of two to absorb rounding, the asserted lower bound follows.
  Raz's protocol gives the matching upper bound~\cite[Theorem~3.8]{raz};
  see also~\cite[Appendix~A]{montanaro}.
\end{proof}

\section{Cancellation over complete rectangle partitions}\label{sec-cancel}
\subsection{One complex random variable}

Work with the complex variable

\[
  Q=\tfrac12w^T(G+iG')w,\qquad w=Y+iZ,
\]

where $Z\sim\gamma_n$ and $G'\sim\mathrm{GOE}(n)$ are independent of each other and of $(Y,G)$. Under the original Gaussian law, \cref{lem-lift} gives

\[
  \mathbb E Q^{2k}=0,\qquad
  \mathbb E[F_k\mid\mathsf T]=2^k\mathbb E[Q^k\mid\mathsf T].
\]

We will refine the information in $\mathsf T$ to include a real center. The next lemma bounds the conditional $k$th moment using the vanishing $2k$th moment.

\begin{lemma}[A vanishing even moment bounds the center]\label{lem-center}

  Fix an integer $k\ge1$ and $\tau\ge0$. Let $Q$ be a complex random variable, let $\mathcal F$ be a sigma-field, and let $\alpha$ be a real $\mathcal F$-measurable random variable. Assume $Q,\alpha\in L^{2k}$ and $\mathbb E Q^{2k}=0$. If

  \begin{equation}\label{eq:center}
    \kappa_l:=\mathbb E[(Q-\alpha)^l\mid\mathcal F],\qquad
    \|\kappa_l\|_{2k/l}\le(l\tau)^l\quad(1\le l\le2k),
  \end{equation}

  then

  \[
    \|\alpha\|_{2k}\le4ek\tau,\qquad
    \bigl\|\mathbb E[Q^k\mid\mathcal F]\bigr\|_2\le(12k\tau)^k.
  \]

  The absolute value is taken \textbf{after} the conditional expectation in \eqref{eq:center}. We never need a dimension-free bound on $\mathbb E|Q-\alpha|^{2k}$.

\end{lemma}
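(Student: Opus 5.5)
The plan is to expand both moments binomially around the real center $\alpha$. For the first bound we use $\mathbb E Q^{2k}=0$, taking expectations only after conditioning on $\mathcal F$. Since $\alpha$ is $\mathcal F$-measurable, we have pointwise
\[
  \mathbb E[Q^{2k}\mid\mathcal F]=\sum_{l=0}^{2k}\binom{2k}{l}\alpha^{2k-l}\kappa_l,
  \qquad \kappa_0=1 .
\]
The integrability assumptions $Q,\alpha\in L^{2k}$ make every term integrable: each $(Q-\alpha)^l$ lies in $L^{2k/l}$, and H\"older applies. Taking expectations and using $\mathbb E Q^{2k}=0$ isolates the nonnegative leading term:
\[
  \mathbb E\alpha^{2k}=-\sum_{l=1}^{2k}\binom{2k}{l}\mathbb E[\alpha^{2k-l}\kappa_l].
\]
This is where the reality of $\alpha$ is used: $\alpha^{2k}\ge0$, so the left side is $A^{2k}$ with $A=\|\alpha\|_{2k}$.

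Next I bound each term by H\"older with exponents $2k/(2k-l)$ and $2k/l$. This gives $|\mathbb E[\alpha^{2k-l}\kappa_l]|\le A^{2k-l}\|\kappa_l\|_{2k/l}\le A^{2k-l}(l\tau)^l$. The combinatorial step is the estimate
\[
  \binom{2k}{l}l^l\le\frac{(2k)^l l^l}{l!}\le(2ek)^l ,
\]
which uses $l^l/l!\le e^l$. Hence
\[
  A^{2k}\le\sum_{l=1}^{2k}A^{2k-l}(2ek\tau)^l .
\]
Suppose $A>4ek\tau$. Dividing by $A^{2k}$ and putting $r=2ek\tau/A<1/2$ gives $1\le\sum_{l\ge1}r^l<r/(1-r)<1$, which is a contradiction. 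Therefore $\|\alpha\|_{2k}\le4ek\tau$; the case $A=0$ is trivial.

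For the second bound I expand $\mathbb E[Q^k\mid\mathcal F]=\sum_{l=0}^k\binom kl\alpha^{k-l}\kappa_l$ and take $L^2$ norms termwise. H\"older with exponents $2k/(k-l)$ and $2k/l$, whose reciprocals sum to $1/2$, gives $\|\alpha^{k-l}\kappa_l\|_2\le\|\alpha\|_{2k}^{k-l}\|\kappa_l\|_{2k/l}$. Combining this with the first part yields
\[
  \bigl\|\mathbb E[Q^k\mid\mathcal F]\bigr\|_2
  \le\sum_{l=0}^k\binom kl(4ek\tau)^{k-l}(l\tau)^l
  \le\bigl((4e+1)k\tau\bigr)^k\le(12k\tau)^k,
\]
using $l\le k$ and $4e+1<12$.

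I expect the only delicate point to be the first step. The binomial weights must be absorbed with the factor $l^l/l!\le e^l$ rather than the crude bound $l\le 2k$. Only this gives a geometric series with ratio below $1/2$ and lets the vanishing even moment force $A$ to be small. Throughout, only the signed conditional moments $\kappa_l$ enter, with absolute values taken after conditioning, exactly as the statement emphasizes. No bound on $\mathbb E|Q-\alpha|^{2k}$ is needed.
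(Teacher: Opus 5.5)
Your proposal is correct and follows the paper's proof essentially line by line. You expand the vanishing $2k$th moment about the real center, apply H\"older with $\binom{2k}{l}l^l\le(2ek)^l$, and use the geometric-series contradiction to force $\|\alpha\|_{2k}\le4ek\tau$; the $L^2$ expansion of $\mathbb E[Q^k\mid\mathcal F]$ then gives $(\|\alpha\|_{2k}+k\tau)^k\le(12k\tau)^k$. One nit: the strict inequality $\sum_{l=1}^{2k}r^l<r/(1-r)$ fails at $r=0$, but the chain $1\le\sum_{l=1}^{2k}r^l\le r/(1-r)<1$ still gives the contradiction.
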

\begin{proof}
  Put $t=\|\alpha\|_{2k}$. Since $\alpha$ is real and $\mathbb E Q^{2k}=0$, expansion and Hölder give

  \[
    t^{2k}\le\sum_{l=1}^{2k}\binom{2k}l t^{2k-l}\|\kappa_l\|_{2k/l}
    \le\sum_{l=1}^{2k}(2ek\tau)^l t^{2k-l}.
  \]

  Here $\binom{2k}l l^l\le(2ek)^l$. This forces $t\le4ek\tau$: otherwise, dividing by $t^{2k}$ gives a geometric sum less than $1$ on the right. This also covers $\tau=0$.

  Now expand the conditional $k$th moment. Set $\kappa_0=1$ and apply Hölder in $L^2$:

  \[
    \bigl\|\mathbb E[Q^k\mid\mathcal F]\bigr\|_2
    \le\sum_{l=0}^k\binom kl\|\alpha^{k-l}\kappa_l\|_2
    \le\sum_{l=0}^k\binom kl t^{k-l}(k\tau)^l
    =(t+k\tau)^k.
  \]

  Since $t\le4ek\tau$ and $4e+1<12$, this proves the claim.
\end{proof}

\subsection{Flats, bulks, and the two kinds of fluctuation}
\label{sec-components}

A \textbf{component} is a law $\rho=\delta^E_y\otimes(\varphi\gamma_W)$. Here $E\subseteq\mathbb R^n$ is a subspace, $W=E^\perp$, $y\in E$, and $\varphi$ is a probability density on $W$ with respect to $\gamma_W$. Under $\rho$, the coordinates in $E$ equal $y$ (the \emph{flat}), and the coordinates in $W$ have density $\varphi$ (the \emph{bulk}). Write $r_E=\dim E$ and $\sigma=\log\sup\varphi$.

Split $w=w_E+w_W$ with $w_E=y+iZ_E$ and $w_W=Y_W+iZ_W$, and split $Q=\frac12w^T(G+iG')w$ accordingly:

\[
  Q=\underbrace{\tfrac12w_E^T(G+iG')_{EE}w_E}_{Q_E}+\underbrace{\tfrac12w_W^T(G+iG')_{WW}w_W+w_E^T(G+iG')_{EW}w_W}_{V}.
\]

The \textbf{center} is the real number $\alpha:=\mathbb E[Q_E\mid G]=\frac12(y^T\,G_{EE}\,y-\operatorname{tr}G_{EE})$, and the \textbf{flat fluctuation} is $\varepsilon:=Q_E-\alpha$. Given $G$ and $Z_E$, $\varepsilon$ is a function of $G'_{EE}$, and $V$ is a function of $(Y_W,Z_W,G'_{WW},G'_{EW})$, so they are independent. The \textbf{tilted bulk moments} are

\[
  \nu_j(v)(G):=\mathbb E\big[V^j\,\big|\,w_E=v\big]\qquad(v\in\mathbb C^E),
\]

where the notation means: replace $w_E$ by $v$ and average over $(Y_W,Z_W,G'_{WW},G'_{EW})$, with $G$ fixed. This defines a polynomial for every $v\in\mathbb C^E$. Then the moments of $Q$ about $\alpha$ are

\begin{equation}\label{eq:fluctuation}
  \kappa_l(G):=\mathbb E\big[(Q-\alpha)^l\big]=\sum_{j=0}^l\binom lj\,\mathbb E_{Z_E,G'_{EE}}\big[\varepsilon^{l-j}\,\nu_j(w_E)(G)\big].
\end{equation}

The flat fluctuation is controlled by Gaussian hypercontractivity and Gaussian vector moments~\cite{nelson,janson}. The bulk is controlled by its signed moments $\nu_j$: if $\varphi\equiv1$, every $\nu_j$ with $j\ge1$ vanishes, because $w_W$ is then a circularly symmetric complex Gaussian and $\mathbb Ew_W^{\otimes r}=0$ for $r\ge1$. This is why signed moments are used: $|V|$ is typically of order $\dim W$, so bounds through $\mathbb E|V|^j$ would depend on the dimension.

\begin{lemma}[Flat fluctuation]\label{lem-flat}

  For every $p\ge l\ge1$,

  \[
    \Big\|\,\|\varepsilon\|_{L^l(Z_E,G'_{EE})}\,\Big\|_{L^p(G)}
    \le Cl\big(|y|^2+r_E+p\big).
  \]

\end{lemma}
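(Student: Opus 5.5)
We write $\varepsilon$ explicitly and bound it in two stages: an inner $L^l$ norm over $(Z_E,G'_{EE})$ with $G$ fixed, then an outer $L^p(G)$ norm. Since $w_E=y+iZ_E$,
\[
  Q_E=\tfrac12(y+iZ_E)^T(G+iG')_{EE}(y+iZ_E).
\]
Subtracting $\alpha=\frac12(y^TG_{EE}y-\operatorname{tr}G_{EE})$ leaves
\[
  \varepsilon
  =\tfrac12\bigl(-(Z_E^TG_{EE}Z_E-\operatorname{tr}G_{EE})
  +2i\,y^TG_{EE}Z_E\bigr)
  +\tfrac i2\,(y+iZ_E)^TG'_{EE}(y+iZ_E).
\]
Each piece is a Gaussian polynomial of degree at most $3$ in $(Z_E,G'_{EE})$ with coefficients depending on $G$. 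We use the triangle inequality in $L^l$ and treat the three pieces separately.

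First, the piece $Z_E^TG_{EE}Z_E-\operatorname{tr}G_{EE}$ is a second-chaos variable in $Z_E$, with $L^2$ norm $\sqrt2\|G_{EE}\|_F$. Hypercontractivity \eqref{eq:classical-chaos} bounds its $L^l$ norm by $(l-1)\sqrt2\|G_{EE}\|_F$ (for $l\ge2$; for $l=1$ use $L^2$). Second, $y^TG_{EE}Z_E$ is Gaussian in $Z_E$ with variance $|G_{EE}y|^2$, so its $L^l$ norm is at most $\sqrt l\,|G_{EE}y|$. Third, conditional on $Z_E$, the term $(y+iZ_E)^TG'_{EE}(y+iZ_E)$ is a centered Gaussian in $G'_{EE}$. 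Its standard deviation is at most $C|y+iZ_E|^2\le C(|y|^2+|Z_E|^2)$. Hence its $L^l$ norm over $(Z_E,G'_{EE})$ is at most
\[
  C\sqrt l\,\bigl(|y|^2+\||Z_E|^2\|_l\bigr)\le C\sqrt l\,\bigl(|y|^2+r_E+l\bigr),
\]
using $\||Z_E|^2\|_l\le r_E+Cl$ from chi-square moments. This term is deterministic in $G$, and $l\le p$, so it already has the required form.

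It remains to take $L^p(G)$ norms of the first two bounds. The block $G_{EE}$ is a GOE$(r_E)$ matrix. Its Frobenius norm $\|G_{EE}\|_F$ is a $C$-Lipschitz function of the Gaussian, with mean at most $Cr_E$. Gaussian concentration then gives
\[
  \bigl\|\|G_{EE}\|_F\bigr\|_p\le C(r_E+\sqrt p)\le C(r_E+p).
\]
Similarly, $G_{EE}y$ is a Gaussian vector in $E$ with covariance bounded by $2|y|^2 I$. Therefore
\[
  \bigl\||G_{EE}y|\bigr\|_p\le C|y|(\sqrt{r_E}+\sqrt p)\le C(|y|^2+r_E+p),
\]
by AM--GM. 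Multiplying these by the hypercontractive factors $l$ and $\sqrt l$ gives total bound $Cl(|y|^2+r_E+p)$.

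The only step needing care is tracking that each factor is at most linear in $l$ and in $(|y|^2+r_E+p)$. The cross terms $|y|\sqrt{r_E}$ and $|y|\sqrt p$ must be absorbed by AM--GM rather than producing products such as $|y|^2 r_E$. The chi-square moment bound $\||Z_E|^2\|_l\le r_E+Cl$ also needs checking; it follows from $\E|Z_E|^{2l}=2^l(r_E/2)_l\le(r_E+2l)^l$. None of this is a real obstacle.
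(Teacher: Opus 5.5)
Your proposal is correct and follows essentially the same route as the paper's proof. Both expand $\varepsilon$ into the same three terms and bound each one's $L^l(Z_E,G'_{EE})$ norm for fixed $G$ (hypercontractivity, linear Gaussian, conditional complex Gaussian plus chi-square moments). Both then finish with the moments of the Gaussian lengths $\|G_{EE}\|_F$ and $|G_{EE}y|$, using $p\ge l$ and AM--GM.
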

\begin{proof}
  Expand

  \[
    \varepsilon=-\tfrac12\big(Z_E^TG_{EE}Z_E-\operatorname{tr}G_{EE}\big)+iy^TG_{EE}Z_E+\tfrac i2w_E^TG'_{EE}w_E .
  \]

  The first term is a centered quadratic Gaussian chaos, so hypercontractivity bounds its $L^l$ norm by $Cl\|G_{EE}\|_{\rm HS}$. The second is linear Gaussian, with norm at most $C\sqrt l\,|G_{EE}\,y|$. Given $Z_E$, the third is complex Gaussian with second moment $\tfrac12|w_E|^4$. Thus, writing $h_l(G)=\|\varepsilon\|_{L^l(Z_E,G'_{EE})}$,

  \[
    h_l(G)\le C\left(l\|G_{EE}\|_{\rm HS}+\sqrt l\,|G_{EE}\,y|+\sqrt l\big(|y|^2+r_E+l\big)\right).
  \]

  Only Gaussian vector lengths remain. The norm $\|G_{EE}\|_{\rm HS}$ is $\sqrt2$ times the length of a standard Gaussian vector of dimension $r_E(r_E+1)/2$, and $G_{EE}\,y$ is Gaussian with covariance $|y|^2I+yy^T$. Hence

  \[
    \Big\|\|G_{EE}\|_{\rm HS}\Big\|_p\le C(r_E+\sqrt p),\qquad
    \Big\||G_{EE}\,y|\Big\|_p\le C|y|(\sqrt{r_E}+\sqrt p).
  \]

  Take the $L^p(G)$ norm above and use $p\ge l$ and $2uv\le u^2+v^2$.
\end{proof}

\begin{lemma}[Bulk tilted moments]\label{lem-tb}

  Suppose $\varphi\le e^\sigma$ is SG($B_1$). Then for $v\in\mathbb C^E$, $1\le j\le2k$ and $p\ge2k$,

  \[
    \|\nu_j(v)\|_{L^{p/j}(G)}\le\big(Cj\,(\sigma+|v|^2+k^2+p)\big)^j .
  \]

\end{lemma}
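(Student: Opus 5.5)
The plan is to make $\nu_j(v)$ explicit as a polynomial in $G$ whose coefficients are signed Hermite moments of the bulk law. Then bound those coefficients in partition norms, and finally take the $L^{p/j}(G)$ norm with Gaussian chaos moment bounds.

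\emph{Step 1: expand $\nu_j(v)$.} Write $V=V_0+V_1$ with $V_0=\tfrac12w_W^T(G+iG')_{WW}w_W$ and $V_1=v^T(G+iG')_{EW}w_W$, where $v$ is fixed. For fixed $w_W$ and $G$, the imaginary matrix $G'$ enters linearly as a centered Gaussian. So averaging $V^j$ over $G'_{WW},G'_{EW}$ replaces the powers of the $G$-linear forms by their GOE-Hermite (Wick) versions, as in \cref{lem-lift}. Each contraction of two $G'$-slots produces covariance factors. These are of the type $|w_W|^4$ from $V_0$, $|v|^2|w_W|^2$ from $V_1$, and cross terms.

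Next I would average over $Z_W$. This turns every monomial $w_W^{\otimes r}$ into $\He_r(Y_W)$. Averaging over $Y_W\sim\varphi\gamma_W$ then gives
\[
  \nu_j(v)(G)=\sum \bigl\langle \E_\varphi\He_r(Y_W),\,T_{r}(G,v)\bigr\rangle ,
\]
summed over at most $C^j$ index patterns. Each $T_r$ is a tensor polynomial in $G_{WW}$ and $G_{EW}^Tv$ of total $G$-degree at most $j$, with coefficients involving powers of $|v|^2$. One should check that the covariance contractions really collapse to this form, with a combinatorial count $\le C^jj!$ after the Wick averages, and never produce free traces of dimension $\dim W$. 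Free traces must be absent, since for $\varphi\equiv1$ everything vanishes. I expect this to work because the complex Gaussian $w_W$ is circular: $\E\,w_W\otimes w_W=0$, so pairings internal to $w_W$ never appear and only the $r\ge1$ Hermite moments survive.

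\emph{Step 2: signed Hermite moments of the bulk.} This is the main obstacle. We need dimension-free bounds of the form
\[
  \|\E_\varphi\He_r(Y_W)\|_J\le \bigl(C(\sigma+r^2)\bigr)^{r/2}\quad(J\in\mathcal P(r)),
\]
together with a stronger Hilbert--Schmidt bound for the slots that will be paired against Gaussian $G$-slots. The naive route via \cref{eq:density,eq:classical-chaos}, as in \cref{lem-crude}, gives the Frobenius bound $(C(\sigma+r))^{r/2}$. For partition norms with singleton blocks we must use the $\mathrm{SG}(B_1)$ hypothesis. Testing against $\bigotimes y^g$ and expanding $\He_r$ through the shift $Y\mapsto Y+iZ$ bounds single-direction slots by $B_1\sqrt q$ moments.

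I would run an induction on $r$. First, differentiate the test polynomial through the Gaussian integration-by-parts identity for $\varphi\gamma_W$, stopping after about half the slots. The evaluated half is controlled in $L^2$ via \cref{eq:classical-chaos} and costs $q^{r/4}$. The unevaluated half is completed with the inductive partition-norm bound. The quadratic loss $r^2$, rather than $r$, is where $k^2$ enters.

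\emph{Step 3: moments in $G$.} With the coefficient tensors fixed, each term $\langle \E_\varphi\He_r,T_r(G,v)\rangle$ is a polynomial chaos in the GOE matrix $G$ of degree $\le j$. Its $G'$-Wick structure makes it essentially a Hermite chaos in $G$ plus lower-order covariance terms. For the $L^{p/j}$ norm I would apply Lata{\l}a's moment bound \cite{latala}, or the cruder hypercontractive $(p/j)^{\deg/2}$ applied to the Hilbert--Schmidt norm plus partition-norm terms. Each $G$-slot costs at most $C\sqrt{p}$, and each $G_{EW}^Tv$ slot additionally costs $|v|$. Collecting the factors $(\sigma+r^2)^{r/2}$, $|v|^{2}$ per $V_1$-pair, and $p^{1/2}$ per $G$-slot, and using $r\le2j$, $j\le 2k$ and $ab\le a^2+b^2$, should give $(Cj(\sigma+|v|^2+k^2+p))^j$. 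The factor $j^j$ absorbs the $j!$ from the combinatorics.
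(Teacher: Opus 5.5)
Your Step 1 is the paper's reduction. Expanding $V^j$ binomially in $V_0$ and $V_1=\langle(G_{WE}+iG'_{WE})v,w_W\rangle$, then averaging over the independent $G'_{WW}$, $G'_{WE}$ and $w_W$, gives exactly $\nu_j(v)=\sum_{c=0}^j\binom jc2^{c-j}\langle M_{2m+c}(\varphi),\mathcal J_m\He^{\GOE}_m(G_{WW})\otimes L_v^{\otimes c}\He_c(G_{WE})\rangle$. Here $m=j-c$, $M_r=\E_\varphi\He_r$ and $L_vD=Dv$. Keep the Hermite tensors in $G$ unexpanded: each term is then a pure chaos and there is nothing to count. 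If you expand them, traces such as $\langle M_4,I\otimes I\rangle$ do appear, coming from $\E_{G'}(w^TG'w)^2=2(w^Tw)^2$. These can be of order $\sigma\dim W$ for non-uniform $\varphi$ and cancel only inside the full chaos. Circularity does not remove them, since $\E\, w_W\otimes w_W=M_2(\varphi)$ vanishes only when $\varphi\equiv1$.

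The genuine gap is Step 2 together with the bookkeeping in Step 3, which is the whole analytic content of the lemma (\cref{prop-msb}, built on \cref{prop-boot,lem-split}). There are two problems.

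\emph{No derivative expansion under the bulk law.} $\varphi$ is only bounded and $\mathrm{SG}(B_1)$; it may be a normalized indicator. So there is no integration by parts or log-Sobolev inequality for $\varphi\gamma_W$. The half-stopped derivative expansion is purely Gaussian, and the paper reaches it by a bootstrap:
\begin{itemize}
\item By $L^p$ duality, $\Theta_k(p)=\sup_\zeta\|A\|_{\HS}$ with $A=\E[\zeta w^{\otimes k}]$.
\item Hölder and a change of density at exponent $s=\Sigma/k$ give $\|A\|_{\HS}^2=\E[\zeta\langle\bar A,w^{\otimes k}\rangle]\le e^k\Lambda_s(\bar A)$, which is a Gaussian chaos moment.
\item The partition norms of $A$ with at least two blocks are products of lower-order $\Theta$'s, again by Hölder; this is the induction.
\item What remains is a quadratic inequality for $\|A\|_{\HS}$. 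It closes because of the $2\mu^2$ inflation in $\Sigma$.
\end{itemize}

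\emph{Your target bound is too weak.} The uniform bound $\|M_r\|_J\le(C(\sigma+r^2))^{r/2}$ cannot close Step 3. Charging this full bulk scale and also $\sqrt p$ per $G$-slot, as you propose, already gives $(\sigma+4j^2)^jp^{j/2}$ for $c=0$. That has total degree $3j/2$ in $(\sigma,p)$, so it exceeds $(Cj(\sigma+k^2+p))^j$ when $\sigma\approx p$ is large. What you need instead:
\begin{itemize}
\item Each block must give back a factor $\sqrt{\Sigma_r}$, i.e.\ $\|M_r\|_J\le R^rr^{l/2}\Sigma_r^{(r-l)/2}$ for $l$ blocks (\cref{cor-reg}). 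The Gaussian factor then replaces a bulk factor rather than multiplying it: $q^{l/2}\|M_r\|_J\le R^r(rq+\Sigma_r)^{r/2}$.
\item The Frobenius bound $\|M_r\|_F\le R^rr^{r/2}\Sigma_r^{r/4}$ (\cref{lem-frob}).
\item A chaos estimate whose Hilbert--Schmidt term costs only $q^{\lfloor j/2\rfloor/2}$, with constants explicit in the degree, since $j$ grows up to $2k$.
\end{itemize}
Plain hypercontractivity $q^{j/2}\|U\|_2$ suffices for $c=0$. On the cross terms $c\ge1$, however, it loses a factor of order $(\sigma+|v|^2+k^2+p)^{c/4}$, which is too large for the bound used in \cref{thm-cp}.
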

\begin{proof}
  Put $D=G_{WE}$. It is a standard Gaussian of the matrix space $\mathbb R^{W\times E}$ with the Frobenius inner product, independent of $G_{WW}$. Let $L_v(D):=Dv$, a complex-linear map with $\|L_v\|\le|v|$. Then

  \[
    V=\tfrac12w_W^T(G+iG')_{WW}w_W+\big\langle(D+iD')v,w_W\big\rangle,\qquad D'=G'_{WE}.
  \]

  Expand $V^j$ by the binomial theorem and average over $G'_{WW}$, $D'$ and $w_W$. These are independent, so each factor becomes a Wick tensor: $\mathbb E(G+iG')^{\otimes m}_{WW}=\mathrm{He}_m^{\rm GOE}(G_{WW})$, $\mathbb E(D+iD')^{\otimes c}=\mathrm{He}_c(D)$, and $\mathbb Ew_W^{\otimes(2m+c)}=M_{2m+c}(\varphi)$, the bulk's Wick tensor (\cref{sec-bulk}). Therefore

  \begin{equation}\label{eq:tilt}
    \nu_j(v)=\sum_{c=0}^j\binom jc2^{c-j}\,U^{L_v}_{j-c,c}(\varphi),
  \end{equation}

  where $U^L_{m,c}(\varphi):=\langle M_{2m+c}(\varphi),\mathcal J_m\mathrm{He}^{\rm GOE}_m(G_{WW})\otimes L^{\otimes c}\mathrm{He}_c(D)\rangle$ is the mixed chaos of \cref{sec-bulk}. When $v=0$ or $\dim E=0$, every term with $c\ge1$ vanishes.

  Apply \cref{prop-msb} on $W$ with $q=p/j\ge1$. Each summand satisfies

  \[
    \|U^{L_v}_{j-c,c}\|_{L^{p/j}}\le|v|^c(Bj)^j(\sigma+12j^2+p+j)^{j-c/2}\le(Bj)^jX^j,\qquad X:=\sigma+48k^2+p+2k+|v|^2.
  \]

  The coefficients in \eqref{eq:tilt} sum to $(3/2)^j$.
\end{proof}

\subsection{Proof of the complete-partition estimate}
\label{sec-cp-proof}
\begin{proof}[Proof of \cref{thm-cp}]

  Fix a complete partition $\mathsf T$ into at most $2^d$ rectangles and an integer $k\ge1$. We apply \cref{lem-center} on the probability space obtained by sampling the entire partition.

  \textbf{Sample the rectangles and components.} First choose $R=A\times\mathcal C$ with probability $\Pr(R)$, and write $a,b$ for its surprises. By \cref{thm-decomp},

  \[
    \mathcal L(Y\mid A)=\int\rho_\omega\,d\pi_R(\omega),\qquad
    \rho_\omega=\delta^{E_\omega}_{y_\omega}\otimes(\varphi_\omega\gamma_{E_\omega^\perp}),
  \]

  where each $\varphi_\omega$ is SG($B_1$) and

  \[
    c_\omega=\sigma_\omega+|y_\omega|^2+\dim E_\omega,\qquad
    \|c_\omega\|_{L^u(\pi_R)}\le4a+8u.
  \]

  Given $R$, choose $\omega\sim\pi_R$ and $G\mid\mathcal C$ independently, then sample $Y\sim\rho_\omega$. Completeness makes the marginal law of $(Y,G)$ exactly $\gamma_n\otimes\mathrm{GOE}(n)$, with $R=\mathsf T(Y,G)$ almost surely. Add independent $Z,G'$.

  Let $\mathcal F$ record $(R,\omega,G)$, and let $\alpha$ be the real center from \cref{sec-components}. Thus $\mathsf T$ is included in $\mathcal F$, and $\mathbb E Q^{2k}=0$. It remains to prove \eqref{eq:center} with $\tau=C(d+k^2)$.

  \textbf{Bound the centered moments.} Fix $R$ and a component, suppress $\omega$, and put

  \[
    p=b+2k,\qquad X=c_\omega+k^2+p,\qquad
    h_l(G)=\|\varepsilon\|_{L^l(Z_E,G'_{EE})}.
  \]

  For $l\le2k$, \cref{lem-flat,lem-tb} and $\||Z_E|^2\|_p\le\dim E+2p$ give

  \[
    \|h_l\|_{L^p(G)}\le ClX,\qquad
    \|\nu_j(w_E)\|_{L^{p/j}(Z_E,G)}\le(CjX)^j\quad(1\le j\le l).
  \]

  Apply Hölder first to the auxiliary variables in \eqref{eq:fluctuation}, then to $G$. Since $p\ge l$, the result is

  \begin{equation}\label{eq:fluctuation-bound}
    \|\kappa_l\|_{L^{p/l}(G)}
    \le\sum_{j=0}^l\binom lj
    \|h_l\|_{L^p(G)}^{l-j}\,
    \|\nu_j(w_E)\|_{L^{p/j}(Z_E,G)}
    \le(ClX)^l.
  \end{equation}

  At $j=0$ the bulk factor is $1$; at $j=l$ the flat factor is $1$. The mixed-norm step uses $\|\|f\|_{L^{l/j}(Z_E)}\|_{L^{p/j}(G)}\le\|f\|_{L^{p/j}(Z_E,G)}$.

  The density of $G\mid\mathcal C$ is at most $e^b$. Apply \eqref{eq:density} to \eqref{eq:fluctuation-bound}, from exponent $p/l$ to $2k/l$. Its cost is $e^{bl/p}\le e^l$, so

  \[
    \|\kappa_l^\omega\|_{L^{2k/l}(G\mid\mathcal C)}
    \le\big(Cl(c_\omega+k^2+b+2k)\big)^l.
  \]

  \textbf{Average once over the whole partition.} Taking the $L^{2k/l}$ norm over $(R,\omega)$ as well gives

  \[
    \|\kappa_l\|_{2k/l}
    \le(Cl)^l\|c_\omega+b+k^2+2k\|_{2k}^l
    \le(Cl)^l\|4a+b+k^2+18k\|_{L^{2k}(R)}^l
    \le\big(Cl(d+k^2)\big)^l.
  \]

  The middle inequality uses the component cost bound with $u=2k$; the last uses \cref{lem-surprise}. This proves \eqref{eq:center}.

  Since $\mathbb E Q^{2k}=0$, \cref{lem-center} now bounds the conditional $k$th moment. The tower property and $L^2$ contraction give

  \[
    \mathbb E\bigl|\mathbb E[F_k\mid\mathsf T]\bigr|
    =2^k\bigl\|\mathbb E[Q^k\mid\mathsf T]\bigr\|_1
    \le2^k\bigl\|\mathbb E[Q^k\mid\mathcal F]\bigr\|_2
    \le\big(Ck(d+k^2)\big)^k
    \le C^kk!(d+k^2)^k.
  \]

  This is \eqref{eq:cp}. The decomposition is measurable. Its cost bound, Gaussian moments, and the finite number of positive-probability rectangles ensure $\alpha\in L^{2k}$. Also, $Q\in L^{2k}$ because its marginal law is the original Gaussian law.
\end{proof}
\section{Flats and sub-Gaussian components}\label{sec-decomp}
The affine-slicing construction below is related to the Gaussian cleanup
in Girish--Sinha--Tal--Wu~\cite[Sections~2 and~5]{gstw}.
Their cleanup controls directional covariance during a protocol. Here
the stopping condition tests all one-dimensional tails, and a single
potential controls the residual density, the fixed coordinates, and
the dimension simultaneously. Compare also Eldan's
entropy-efficient decompositions with covariance control~\cite{eldan};
our conclusion is an exact mixture of affine slices with an exponential
bound on this combined cost.

We use the fixed constant $B_1=5$. The decomposition will stop when all one-dimensional tails satisfy the following test.

\begin{lemma}[A tail test for sub-Gaussianity]\label{lem-inverse}

  A law is SG($5$) if, for every unit vector $v$ and every $x>0$,

  \begin{equation}\label{eq:regular}
    \Pr(|\langle v,Y\rangle|\ge x)\le e^{2-x^2/6}.
  \end{equation}

\end{lemma}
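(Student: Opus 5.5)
We would prove the statement by integrating the one-dimensional tail bound, one direction at a time. Fix a unit vector $v$ and put $X=|\langle v,Y\rangle|$. For real $q>0$ the layer-cake formula together with \eqref{eq:regular} gives
\[
  \E X^q=\int_0^\infty qx^{q-1}\Pr(X\ge x)\,dx
  \le\int_0^\infty qx^{q-1}\min\bigl(1,e^{2-x^2/6}\bigr)\,dx .
\]
We would simply use the bound $e^{2-x^2/6}$ everywhere; it exceeds one for small $x$, but this costs only the constant factor $e^2$. Substituting $s=x^2/6$ gives
\[
  \E X^q\le e^2\,q\,6^{q/2}\tfrac12\Gamma(q/2)=e^2\,6^{q/2}\,\Gamma(q/2+1).
\]

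Next we take $q$th roots and show that $e^{2/q}\sqrt6\,\Gamma(q/2+1)^{1/q}\le 5\sqrt q$ for every real $q\ge1$. Using $\Gamma(x+1)\le x^x$ for $x\ge1$, which holds because $\Gamma(x+1)/x^x$ is decreasing from value $1$ at $x=1$, we get $\Gamma(q/2+1)^{1/q}\le\sqrt{q/2}$ when $q\ge2$. The left side is then at most $e^{2/q}\sqrt3\,\sqrt q\le e\sqrt3\sqrt q\approx4.71\sqrt q$, which is below $5\sqrt q$.

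For $1\le q\le2$ we would use monotonicity of $L^q$ norms instead: $\|X\|_q\le\|X\|_2\le\sqrt{e^2\cdot6\cdot\Gamma(2)}=e\sqrt6\approx6.66$. This is at most $5\sqrt q$ once $q\ge(6.66/5)^2\approx1.77$. That leaves the range $1\le q<1.77$, where the $\|X\|_2$ route is too crude, so we would compute directly. There $\Gamma(q/2+1)\le1$, so
\[
  \|X\|_q\le e^{2/q}\sqrt6 .
\]
This is too weak near $q=1$, so in that range we would split the integral at $x_0=\sqrt{12}$, where $e^{2-x^2/6}=1$. Then
\[
  \E X\le x_0+\int_{x_0}^\infty e^{2-x^2/6}\,dx\le\sqrt{12}+\frac{3}{x_0}\le3.47+0.87<5 .
\]
For general $q\in[1,2]$ the same split gives
\[
  \E X^q\le x_0^q+e^{2}\int_{x_0}^\infty qx^{q-1}e^{-x^2/6}\,dx,
\]
and the tail integral is small (about $3qx_0^{q-2}e^{-2}$), so $\|X\|_q\le\bigl(12^{q/2}(1+q/4)\bigr)^{1/q}$. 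At $q=1$ this is $\approx4.33$ and at $q=2$ it is $\approx5.2\le5\sqrt2$. Since $5\sqrt q$ increases with $q$, checking these endpoints together with a rough monotonicity bound on the expression covers the interval.

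The only step requiring care is this bookkeeping of constants near $q=1$, where the factor $e^2$ is costly. The split at $x_0=\sqrt{12}$ is the fix. No other earlier result is needed, and the conclusion holds uniformly in the unit vector $v$, which is exactly the SG($5$) condition.
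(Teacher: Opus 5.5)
Your argument is correct and is essentially the paper's: the split at $x_0=\sqrt{12}$ is exactly the paper's observation that $V^2$ is stochastically dominated by $12+6E$. The paper then uses Minkowski and $\Gamma(x+1)\le x^x$ for $q\ge2$, and $\|V\|_q\le\|V\|_2\le\sqrt{18}$ for $1\le q<2$. One small slip: at $q=2$ your split bound $\sqrt{12}\,(1+q/4)^{1/q}$ equals $\sqrt{18}\approx4.24$, not $5.2$. Since $(1+q/4)^{1/q}$ decreases in $q$, the split bound alone gives at most $4.33<5$ on all of $[1,2]$, so the $e\sqrt6$ detour is unnecessary.
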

\begin{proof}
  For $V=\langle v,Y\rangle$, \eqref{eq:regular} says that $V^2$ is stochastically dominated by $12+6E$, where $E$ is exponential with mean one. For $q\ge2$,

  \[
    \|V\|_q^2\le12+6\|E\|_{q/2}\le12+3q\le9q.
  \]

  For $1\le q<2$, use $\|V\|_q\le\|V\|_2\le\sqrt{18}<5\sqrt q$.
\end{proof}

\begin{theorem}[Decomposition into flats and sub-Gaussian bulks]\label{thm-decomp}

  Let $\rho\le e^a$ be a probability density on a Euclidean space $\Omega$ relative to $\gamma_\Omega$. There is a measurable mixture

  \[
    \rho\gamma_\Omega=\int\delta^{E_\omega}_{y_\omega}\otimes(\varphi_\omega\gamma_{E_\omega^\perp})\,d\pi(\omega),
  \]

  with every $\varphi_\omega$ SG($5$). Its cost $c_\omega=\log\sup\varphi_\omega+|y_\omega|^2+\dim E_\omega$ satisfies

  \begin{equation}\label{eq:component-cost}
    \mathbb E_\pi e^{c_\omega/8}\le e^{a/2},\qquad
    \pi(c_\omega>4a+8t)\le e^{-t},\qquad
    \|c_\omega\|_u\le4a+8u \quad(t\ge0,\ u\ge1).
  \end{equation}

\end{theorem}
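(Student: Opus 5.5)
We prove \cref{thm-decomp} by a stopping-time construction: repeatedly slice off one failing direction, and control the combined cost with a supermartingale. Start with the root component $E=\{0\}$, $y=0$, $\varphi=\rho$. At a current component $\delta^E_y\otimes(\varphi\gamma_W)$, test the tail criterion \eqref{eq:regular} of \cref{lem-inverse} on the bulk. If it holds in every direction, \cref{lem-inverse} makes $\varphi$ SG($5$), and we stop. Otherwise choose a measurable failing pair $(v,x)$ with $v\in W$ a unit vector, $x>0$, and
\[
  \Pr_{\varphi\gamma_W}(|\langle v,Y\rangle|\ge x)>e^{2-x^2/6}.
\]
Disintegrate the bulk along $s=\langle v,Y_W\rangle$. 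Under $\gamma_W$ the coordinate $s$ is standard Gaussian, independent of the orthogonal complement $W'$, so $\varphi\gamma_W=\int \varphi_s\gamma_{W'}\,\psi(s)\gamma_1(ds)$. The marginal density $\psi$ is $\psi(s)=\int\varphi(s v+\cdot)\,d\gamma_{W'}$, and the conditional densities are $\varphi_s=\varphi(sv+\cdot)/\psi(s)$.

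Split according to whether $|s|\ge x$. On the event $|s|\ge x$, freeze $s$. The new component has flat $E\oplus\mathbb Rv$, offset $y+sv$, and bulk $\varphi_s$ on $W'$. On the event $|s|<x$, keep the conditioned bulk $\varphi\mathbf 1_{|s|<x}/\Pr(|s|<x)$ on $W$ and recurse. Each step either increases $\dim E$ or removes mass, so the recursion defines a measurable mixture tree. Termination, together with passing to a limit in the mixture, will follow from the same potential bound: components with large cost have exponentially small weight, and $\dim E\le\dim\Omega$.

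The heart of the proof is the potential $\Phi=\exp(c/8-\log\sup\varphi/2)$ or, more simply, a weighted exponential of the three cost terms. We aim to show that
\[
  M=\exp\Big(\tfrac18\big(\sigma+|y|^2+\dim E\big)\Big)\cdot(\text{current mass})
\]
is a supermartingale along the construction when its terms are tracked appropriately. At the root its value is at most $e^{a/8}$, which must be related to $e^{a/2}$. The remaining factor should come from replacing $\sigma$ by a density that is averaged rather than maximal.

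The honest obstacle is that $\sup\varphi_s$ can exceed $\sup\varphi$. We therefore plan to track $\sigma_\omega$ through the \emph{ratio} $\varphi/\psi(s)$ and bound $\sup\varphi_s\le e^{\sigma}/\psi(s)$. The added cost from freezing is then $-\log\psi(s)+s^2+1$. We need
\[
  \mathbb E\big[\psi(s)^{-1/8}e^{(s^2+1)/8}\mathbf 1_{|s|\ge x}\big]
  \le \Pr(|s|\ge x)\cdot(\text{factor}\le1),
\]
where the expectation is under the law $\psi\gamma_1$. Equivalently, we need
\[
  \int_{|s|\ge x}\psi^{7/8}e^{s^2/8}\,d\gamma_1\lesssim e^{-1/8}\int_{|s|\ge x}\psi\,d\gamma_1.
\]
Because $\psi\,d\gamma_1$ puts mass at least $e^{2-x^2/6}$ on the tail, Hölder's inequality with $\gamma_1(|s|\ge x)\le e^{-x^2/2}$ should give this. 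The rate $1/6$ in \eqref{eq:regular} sits between $1/8$ and $1/2$, which is what makes this possible. On the event $|s|<x$, the conditioning multiplies $\sup\varphi$ by $1/\Pr(|s|<x)$, but it costs only mass, and we must check that the failing tail keeps this factor bounded.

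We expect this exponent bookkeeping to be the main obstacle. Concretely, the hardest part is choosing the coefficients $1/8$ and $1/2$ so that the $\sigma$ contribution, the $|y|^2$ contribution, and the dimension contribution all decrease in expectation at each step. Once the supermartingale bound $\mathbb E_\pi e^{c_\omega/8}\le e^{a/2}$ is established, Markov's inequality gives $\pi(c_\omega>4a+8t)\le e^{-t}$. Integrating this tail, in the manner of \cref{lem-surprise}, gives $\|c_\omega\|_u\le4a+8u$.
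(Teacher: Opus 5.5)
Your construction matches the paper's: freeze the coordinate on the failing tail, restrict and renormalize on the complement, and bound the frozen fiber by $\sup\varphi_s\le e^\sigma/\psi(s)$. The genuine gap is the potential, which carries the content of the theorem. With equal weights, $M=\exp(\frac18(\sigma+|y|^2+\dim E))$ is not a supermartingale. The tail inequality you plan to prove, $\int_{|s|\ge x}\psi^{7/8}e^{s^2/8}\,d\gamma_1\lesssim\int_{|s|\ge x}\psi\,d\gamma_1$, is false; H\"older with exponents $8/7$ and $8$ would need $e^{s^2}\in L^1(\gamma_1)$. For a concrete counterexample, take a one-dimensional bulk $\varphi=\mathbf 1_{[T,\infty)}/\gamma_1([T,\infty))$ with $T^2>12$. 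It fails the test at $x=T$ with $p=1$ and has $\sigma\approx T^2/2$. Freezing yields cost $s^2+1>T^2$, so $e^{c/8}$ jumps from about $e^{T^2/16}$ to at least $e^{T^2/8}$. Here $a=\sigma$, so the $e^{a/2}$ is not slack to be recovered by averaging. The fix is to weight the density term by $1/2$: take $\mathcal V=\exp(\sigma/2+(|y|^2+\dim E)/8)=e^{c/8+3\sigma/8}$, which is your first candidate $\Phi$ with the sign of the $\sigma$-correction reversed. Its root value is at most $e^{a/2}$, and at termination $e^{c/8}\le\mathcal V$ because $\sigma\ge0$. The off-tail child contributes $(1-p)(1-p)^{-1/2}=\sqrt{1-p}$, so you never need $1/(1-p)$ to be bounded (in general it is not). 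The tail child contributes $e^{1/8}\int_B\sqrt f\,e^{t^2/8}\,d\gamma_1\le e^{1/8}\sqrt{pI_x}$ by Cauchy--Schwarz, where $I_x=\int_Be^{t^2/4}\,d\gamma_1\le\sqrt2e^{-x^2/4}$. A violation forces $x^2>12$ and $p>e^{2-x^2/6}$. Hence $I_x/p\le\sqrt2e^{-3}$, and the one-step ratio is at most $\sqrt{1-p}+p/3\le1-p/6$. The exponent comparison that matters is $1/6<1/4$, not the bound $\gamma_1(|s|\ge x)\le e^{-x^2/2}$.

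Your termination argument also does not work as stated. The bound $\dim E\le\dim\Omega$ limits only the freezing steps. Off-tail restrictions keep the dimension and could a priori repeat indefinitely, and a supermartingale bound on the cost does not by itself force stopping. The paper obtains a uniform drift instead. The marginal satisfies $f\le e^\sigma$, so $p\le e^{\sigma-x^2/2}$, and together with $p>e^{2-x^2/6}$ this gives $p\ge e^{-\sigma/2}$. Hence $p\mathcal V\ge1$ at every nonterminal state, each split lowers $\mathbb E\mathcal V$ by at least $1/6$, and $\mathbb E\tau\le6e^{a/2}$. This step again needs the weight $1/2$ on $\sigma$; with your weight $1/8$ you would only get $pM\ge e^{-3\sigma/8}$. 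Once you have termination, Fatou gives $\mathbb E_\pi e^{c_\omega/8}\le e^{a/2}$, and your Markov and tail-integration steps then finish the proof.
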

\begin{proof}
  \textbf{Split a failing tail.} A state $(E,y,\varphi)$ represents the law $\delta_y^E\otimes\varphi\gamma_{E^\perp}$. Start at $(\{0\},0,\rho)$ and stop when \eqref{eq:regular} holds. Otherwise choose a violating pair $(v,x)$ in $E^\perp$. Let $f$ be the marginal density of $\langle v,Y\rangle$ relative to $\gamma_1$, and put

  \[
    B=\{|t|\ge x\},\qquad p=\int_B f\,d\gamma_1>e^{2-x^2/6}.
  \]

  With probability $1-p$, restrict the bulk to $B^c$ and normalize. With probability measure $\mathbf1_B(t)f(t)\,d\gamma_1(t)$, fix the coordinate at $t$: replace $(E,y)$ by $(E+\mathbb Rv,y+tv)$ and use the remaining density $\varphi_t(z)=\varphi(tv+z)/f(t)$. These transitions preserve the represented law by disintegration. If $\sigma=\log\sup\varphi$, their log-density bounds are

  \begin{equation}\label{eq:fiber-density}
    \sigma_{\rm off}\le\sigma-\log(1-p),\qquad
    \sigma_t\le\sigma-\log f(t).
  \end{equation}

  The off-tail child is omitted when $p=1$; zero-density fibers have zero probability.

  \textbf{One potential controls the entire cost.} Put

  \[
    \mathcal V(E,y,\varphi)=\exp\!\left(\frac\sigma2+\frac{|y|^2+\dim E}{8}\right).
  \]

  A violating tail has $x^2>12$. Moreover,

  \[
    I_x:=\int_B e^{t^2/4}\,d\gamma_1(t)
    =\sqrt2\,\gamma_1(|g|\ge x/\sqrt2)
    \le\sqrt2e^{-x^2/4},\qquad
    \frac{I_x}{p}\le\sqrt2e^{-3}.
  \]

  Using \eqref{eq:fiber-density}, orthogonality of the new flat coordinate, and Cauchy--Schwarz, we obtain

  \[
    \begin{aligned}
      \frac{\mathbb E[\mathcal V_{\rm next}\mid E,y,\varphi]}{\mathcal V(E,y,\varphi)}
      &\le\sqrt{1-p}+e^{1/8}\int_B\sqrt{f(t)}e^{t^2/8}\,d\gamma_1(t)\\
      &\le\sqrt{1-p}+e^{1/8}\sqrt{pI_x}
      \le\sqrt{1-p}+\frac p3\le1-\frac p6.
    \end{aligned}
  \]

  Here $e^{1/8}\sqrt{I_x/p}\le2^{1/4}e^{-11/8}<1/3$. Thus $\mathcal V$ is a nonnegative supermartingale, held constant after stopping.

  \textbf{The same potential proves termination.} Since $f\le e^\sigma$, the inequalities $p>e^{2-x^2/6}$ and $p\le e^{\sigma-x^2/2}$ imply $p\ge e^{-\sigma/2}$. Hence $p\mathcal V\ge1$ at every nonterminal state: each split decreases the conditional expectation of $\mathcal V$ by at least $1/6$. If $\tau$ is the number of splits, telescoping gives

  \[
    \tfrac16\mathbb E\min(\tau,N)\le\mathcal V_{\rm root}-\mathbb E\mathcal V_N\le\mathcal V_{\rm root}.
  \]

  Letting $N\to\infty$ proves $\mathbb E\tau\le6e^{a/2}<\infty$, so the process stops almost surely.

  Let $\pi$ be its terminal law. The finite-step disintegration identity passes to the limit, since the probability of not yet stopping tends to zero. This proves the mixture identity. At the terminal state, $e^{c_\omega/8}\le\mathcal V_\omega$. Fatou's lemma gives

  \[
    \mathbb E_\pi e^{c_\omega/8}\le\mathbb E_\pi\mathcal V_\omega\le\mathcal V_{\rm root}\le e^{a/2}.
  \]

  Markov's inequality gives the tail in \eqref{eq:component-cost}; $(c_\omega-4a)^+/8$ is dominated by an exponential variable, giving the moment bound.

  \textbf{Measurability.} Use Borel versions of the densities. Fix a countable dense set of ambient directions, project them onto $E^\perp$ and normalize the nonzero projections. Test these directions and rational $x>0$, and choose the first strict violation of \eqref{eq:regular}. This detects every failure: for a Gaussian density the tail probability is continuous in $(v,x)$, because each boundary hyperplane has measure zero. The transitions are Borel kernels on states described by finitely many directions, offsets and cuts. Their almost surely finite terminal state is therefore measurable.
\end{proof}
\section{Gaussian chaos estimates}\label{sec-chaos}
The partition-norm framework comes from Lata{\l}a~\cite{latala};
Lehec~\cite{lehec-chaos} gives a different proof of his estimates.
We combine the log-Sobolev moment method with a Gaussian-slot inequality
adapted from Aden-Ali~\cite{aden-ali}. The constants are kept explicit
because the degree grows with the communication cost.
Iterating a derivative estimate is also central to Adamczak--Wolff
\cite{adamczak-wolff}. Here we stop the iteration halfway and retain
its Hilbert--Schmidt contribution separately.

Let $A\in H_1\otimes\cdots\otimes H_r$, let $G_1,\dots,G_r$ be independent standard Gaussians in $H_1,\dots,H_r$, and put $X=A(G_1,\dots,G_r)$. The goal is \cref{lem-split}, a moment bound whose Hilbert--Schmidt term costs only $q^{\lfloor r/2\rfloor/2}$ instead of the hypercontractive $q^{r/2}$.

The idea is to stop a derivative expansion halfway. Its $L^2$ terms are bounded by the Frobenius norm. At the stopping point, fewer than $r/2$ Gaussian arguments remain; each costs at most $2r$ through the slot estimate below.

For $T\subseteq[r]$ put

\[
  \varphi_T:=\sup\big\lbrace A(x_T,G_{T^c}):\ x_a\in H_a,\ |x_a|\le1\ (a\in T)\big\rbrace,
\]

with $\varphi_\emptyset:=|X|$. (In this section, $T$ is a set of slots, and $P,Q$ denote partitions.) So $\varphi_T$ is the injective norm, over the slots in $T$, of the partially evaluated tensor. Each $\varphi_T$ is a countable supremum of continuous functions, locally Lipschitz, and bounded by $\|A\|_{\rm HS}\prod_{c\notin T}|G_c|$.

\begin{lemma}[Moment growth of partial tensors; cf.\ {\cite{aida-stroock}}]\label{lem-lsi}

  Put $H_T(s)=\|\varphi_T\|_s^2$. For $s\ge2$,

  \begin{equation}\label{eq:growth}
    H_T(2)\le\|A\|_{\rm HS}^2,\qquad
    H_T'(s)\le\sum_{b\notin T}H_{T\cup\lbrace b\rbrace}(s).
  \end{equation}

\end{lemma}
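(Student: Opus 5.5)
The approach is the Aida--Stroock log-Sobolev moment method, applied to the locally Lipschitz function $\varphi_T$ of the remaining Gaussians $G_{T^c}$.

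\textbf{The bound $H_T(2)\le\|A\|_{\rm HS}^2$.} Write $\varphi_T=\sup_{x_T}\langle A(\cdot,G_{T^c}),\bigotimes_{a\in T}x_a\rangle$, the injective norm over the $T$ slots of the random tensor $A(\cdot,G_{T^c})$. The injective norm is at most the Hilbert--Schmidt norm. Since the $G_c$ are independent standard Gaussians,
\[
\E\|A(\cdot,G_{T^c})\|_{\HS}^2=\|A\|_{\HS}^2,
\]
because $\E G_c\otimes G_c$ is the identity in each slot. For $T=\emptyset$ we have $\varphi_\emptyset=|X|$ and $\E X^2=\|A\|_{\HS}^2$ directly.

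\textbf{The derivative bound.} Set $f=\varphi_T^{s/2}$. The standard computation gives
\[
\frac{d}{ds}\|\varphi_T\|_s^s\text{-relations}\;\Rightarrow\;
\frac{d}{ds}H_T(s)=\frac{2}{s^2}\,\|\varphi_T\|_s^{2-s}\,\Ent(\varphi_T^s).
\]
Gross's log-Sobolev inequality, applied to $f$ on the product space of the $G_c$ with $c\notin T$, gives
\[
\Ent(\varphi_T^s)\le2\E|\nabla f|^2=\frac{s^2}{2}\E\bigl[\varphi_T^{s-2}|\nabla\varphi_T|^2\bigr].
\]
Hence
\[
H_T'(s)\le\|\varphi_T\|_s^{2-s}\,\E\bigl[\varphi_T^{s-2}|\nabla\varphi_T|^2\bigr].
\]

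The key pointwise step is
\[
|\nabla\varphi_T|^2\le\sum_{b\notin T}\varphi_{T\cup\{b\}}^2\quad\text{a.e.}
\]
To see this, note that $\nabla\varphi_T$ splits into partial gradients in each $G_b$ with $b\notin T$. Since $\varphi_T$ is a supremum of functions multilinear in $G_{T^c}$, the Lipschitz constant in $G_b$ is controlled by replacing $G_b$ by a unit vector $x_b$. Concretely, $|\varphi_T(G_b+h)-\varphi_T(G_b)|\le\sup_{x_T}|A(x_T,h,G_{\rm rest})|\le|h|\,\varphi_{T\cup\{b\}}$, and $\varphi_{T\cup\{b\}}$ is continuous, so the a.e. gradient (Rademacher) in the $b$ block has norm at most $\varphi_{T\cup\{b\}}$. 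The squared partial norms add across the orthogonal blocks.

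We then apply H\"older with exponents $s/(s-2)$ and $s/2$:
\[
\E\bigl[\varphi_T^{s-2}\varphi_{T\cup b}^2\bigr]\le\|\varphi_T\|_s^{s-2}\,\|\varphi_{T\cup b}\|_s^2.
\]
The prefactor $\|\varphi_T\|_s^{2-s}$ cancels $\|\varphi_T\|_s^{s-2}$, leaving $H_T'(s)\le\sum_{b\notin T}H_{T\cup\{b\}}(s)$.

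\textbf{Technical points.} The main obstacle is rigor rather than ideas.

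\begin{itemize}
\item The log-Sobolev inequality requires $f$ in a Sobolev class. Local Lipschitzness with polynomial growth (bounded by $\|A\|_{\HS}\prod|G_c|$) allows truncation and a limit.
\item Differentiability of $s\mapsto H_T(s)$ holds because all moments are finite. If it fails, one can read $H_T'$ as an upper Dini derivative.
\item If $\varphi_T$ vanishes on a positive-measure set, handle it by the convention $0^{s-2}\cdot0=0$, or by replacing $\varphi_T$ with $\varphi_T+\epsilon$ and letting $\epsilon\to0$.
\item If $T=[r]$, the function is constant, the sum is empty, and $H'=0$, which is consistent.
\end{itemize}
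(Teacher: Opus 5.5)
Your proof is correct and follows the paper's argument. You bound $H_T(2)$ via injective $\le$ Hilbert--Schmidt plus Gaussian contraction, then use the Aida--Stroock step (log-Sobolev applied to $\varphi_T^{s/2}$, then H\"older) together with the a.e.\ partial-gradient bound $|\nabla_{G_b}\varphi_T|\le\varphi_{T\cup\{b\}}$. The only cosmetic difference is ordering: you sum the squared partial gradients and apply H\"older termwise, while the paper first derives $\frac{d}{ds}\|f\|_s^2\le\|\,|\nabla f|\,\|_s^2$ and then applies Minkowski in $L^{s/2}$, and the two are equivalent.
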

\begin{proof}
  The injective norm is at most the Hilbert--Schmidt norm, and Gaussian contraction preserves the expected square of the latter:

  \[
    H_T(2)\le\mathbb E\|A(\cdot_T,G_{T^c})\|_{\rm HS}^2=\|A\|_{\rm HS}^2.
  \]

  For a nonnegative locally Lipschitz $f$ of polynomial growth, log-Sobolev applied to $f^{s/2}$ and then Hölder give

  \[
    \frac d{ds}\|f\|_s^2
    =\frac{2\|f\|_s^{2-s}}{s^2}\,\mathrm{Ent}(f^s)
    \le\|\,|\nabla f|\,\|_s^2.
  \]

  This is the classical moment-growth argument of Aida--Stroock~\cite{aida-stroock}; see also the entropy method in~\cite{bblm}. The underlying Gaussian log-Sobolev inequality is due to Gross~\cite{gross}.

  With the other Gaussians fixed, $G_b\mapsto\varphi_T$ is a supremum of linear functions whose slopes have norm at most $\varphi_{T\cup\lbrace b\rbrace}$. Thus $|\nabla_{G_b}\varphi_T|\le\varphi_{T\cup\lbrace b\rbrace}$ almost everywhere. Apply the moment-growth inequality to $\varphi_T$ and use Minkowski in $L^{s/2}$.
\end{proof}

\begin{lemma}[One Gaussian slot; adapted from {\cite{aden-ali}}]\label{lem-aa}

  Let $D\in H_0\otimes\cdots\otimes H_b$ with $b\ge1$, let $g$ be standard Gaussian in $H_0$, and put $Z(g)=\|D(g,\cdot)\|_{\rm inj}$. For $I\subseteq[b]$, let $Q_I$ keep the slots in $I$ as singletons and join all remaining slots, including $0$, into one block. For every $p\ge2$,

  \begin{equation}\label{eq:slot}
    p^b\|Z\|_p^2\le(b^2+1)\sum_{I\subseteq[b]}b^{|I|-b}p^{|I|+1}\|D\|_{Q_I}^2.
  \end{equation}

\end{lemma}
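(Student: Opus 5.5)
The plan is to adapt the Gaussian-shift (PAC-Bayesian) argument of Aden-Ali so that it acts on $L^p$ norms directly rather than through exponential moments. Write
\[
  Z(g)=\sup\{D(g,x_1,\dots,x_b):|x_a|\le1\}.
\]
For a fixed tuple $x=(x_1,\dots,x_b)$, let $\xi_a\sim N(0,\beta I_{H_a})$ be independent, with a variance $\beta>0$ to be chosen. Multilinearity and centering give the exact representation
\[
  D(g,x_1,\dots,x_b)=\mathbb E_\xi\,D(g,x_1+\xi_1,\dots,x_b+\xi_b),
\]
so the value at $x$ is an average over the shifted law $Q_x=\bigotimes_a N(x_a,\beta I)$. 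Change measure to the centered prior $P=\bigotimes_a N(0,\beta I)$ and apply H\"older with exponents $p$ and $p'=p/(p-1)$:
\[
  D(g,x)\le\bigl(\mathbb E_P (dQ_x/dP)^{p'}\bigr)^{1/p'}\bigl(\mathbb E_P|D(g,\xi)|^p\bigr)^{1/p}.
\]
The Gaussian likelihood-ratio moment equals $\exp\bigl(\sum_a|x_a|^2/(2\beta(p-1))\bigr)\le\exp\bigl(b/(2\beta(p-1))\bigr)$, uniformly over the unit ball. Taking the supremum over $x$ therefore bounds $Z(g)^p$ by this factor to the power $p$ times $\mathbb E_P|D(g,\xi)|^p$, and the right side no longer involves a supremum.

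Next take $\mathbb E_g$ and use Fubini. For fixed $\xi$, $D(g,\xi)$ is a centered Gaussian in $g$ with standard deviation $\|D(\cdot,\xi)\|_{H_0}$, so $\mathbb E_g|D(g,\xi)|^p\le(C\sqrt p)^p\|D(\cdot,\xi)\|^p$. This is where the factor $p^{1}$ attached to slot $0$ comes from. It remains to bound $\bigl\|\,\|D(\cdot,\xi_1,\dots,\xi_b)\|_{H_0}\bigr\|_{L^p(\xi)}$, and I would integrate out the slots $\xi_1,\dots,\xi_b$ one at a time. For a vector-valued linear Gaussian $\sqrt\beta\,M\eta$ with $\eta$ standard, the estimate $\|\,|M\eta|\,\|_p\le\|M\|_{\HS}+C\sqrt p\,\|M\|_{\rm op}$ applies. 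Each slot $a$ therefore either joins the big block (Hilbert--Schmidt choice, no factor of $p$) or becomes a singleton (operator choice, a factor $\sqrt p$), and each slot also contributes a factor $\sqrt\beta$. To carry out the iteration I would use Minkowski's inequality in $L^{p}$ to pull the remaining slots outside, noting that norms of partially evaluated tensors are dominated by the corresponding partition norms. Expanding the product over $a\in[b]$ gives a sum over $I\subseteq[b]$ of terms $\beta^{b/2}(C\sqrt p)^{|I|}\|D\|_{Q_I}$. Squaring and using $(\sum_I u_I)^2\le 2^b\sum_I u_I^2$ yields a bound of the form
\[
  \|Z\|_p^2\le e^{b/(\beta(p-1))}\,C^b\beta^b\sum_I p^{|I|+1}\|D\|_{Q_I}^2 .
\]

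Finally I would optimize $\beta$. With $\beta\approx 1/(bp)$ the exponential factor is $e^{O(b^2)}$, which is too large, so I would instead take $\beta\approx b/p$. Then the exponential factor is $O(1)$ and $\beta^b=(b/p)^b$. Multiplying by $p^b$ produces the weights $b^{b}$, and these must be rebalanced against the target weights $b^{|I|-b}$.

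I expect the main obstacle to be getting exactly the $b$-dependence $(b^2+1)b^{|I|-b}$ rather than $C^b b^{\pm b}$. This will likely require slot-dependent variances $\beta_a$ chosen differently for slots in $I$ and out of $I$; since the choice must be made before the supremum is taken, it must not depend on $I$, so that constraint has to be handled separately. It will also require avoiding the crude $2^b$ loss from squaring a sum. For that loss I would try Cauchy--Schwarz with weights $b^{|I|-b}$ in place of the uniform $2^b$ factor, and expect the prefactor $b^2+1$ to absorb the remaining normalizations, for example the choice $\beta$-dependent exponential $e^{b/(\beta(p-1))}$ at $\beta\sim b/(p-1)$ combined with $p/(p-1)\le2$.
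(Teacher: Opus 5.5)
There is a genuine gap, and it is structural rather than a matter of constants. Your representation and R\'enyi computation are correct. After H\"older and Fubini, however, your chain reads $\|Z\|_p\le e^{b/(2\beta(p-1))}\|D(g,\xi)\|_{L^p(g,\xi)}$, so everything rests on the full $L^p$ norm of the order-$(b+1)$ chaos $D(g,\xi)$ under the prior. That norm sees every partition of $\{0,\dots,b\}$, including blocks of slots from $[b]$ that exclude $0$, and these are not controlled by the $Q_I$ terms.

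Take $b=2$, $H_1=H_2=\mathbb R^N$, and $D=e_0\otimes I_N$. Then $Z=|g_0|$, and the right side of \eqref{eq:slot} is $5(pN/4+p^2+p^3)$. But $D(g,\xi)=\beta g_0\langle\eta_1,\eta_2\rangle$. Conditioning on $\eta_2$ gives $\|\langle\eta_1,\eta_2\rangle\|_p\ge\|\gamma\|_p\sqrt N$ for a standard normal $\gamma$, so $\|D(g,\xi)\|_p^2\ge c\,p^2\beta^2N$. Since $\beta^2e^{2/(\beta(p-1))}\ge e^2/(p-1)^2$ for every $\beta>0$, the quantity $p^2e^{2/(\beta(p-1))}\|D(g,\xi)\|_p^2$ that you would need to bound by the right side of \eqref{eq:slot} is at least $c'p^2N$. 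This is too large by a factor of order $p$ once $N\gg p^2$. Slot-dependent variances $\beta_a$ do not help, because each factor $\beta_ae^{1/(\beta_a(p-1))}$ is separately at least $e/(p-1)$.

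The specific false step is the claim that iterating $\|\,|M\eta|\,\|_p\le\|M\|_{\rm HS}+C\sqrt p\|M\|_{\rm op}$ produces only $Q_I$ terms because partially evaluated norms are dominated by partition norms. That domination holds for unit-vector evaluations. But $\|M\|_{\rm op}$ still depends on the remaining Gaussian slots, so it is again a random injective norm, the very object the lemma is meant to control. In the example, $\sup_{|x_1|\le1}|D(\cdot_0,x_1,\eta_2)|=|\eta_2|$ has $L^p$ norm at least $\sqrt N=\|D\|_{\{0\},\{1,2\}}$, whereas $\|D\|_{\{1\},\{0,2\}}=1$.

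The missing idea is to avoid prior moments entirely, using a self-normalized exponential and the Donsker--Varadhan (PAC-Bayes) inequality, as the paper does. Take a centered Gaussian prior $\pi$ of variance $1/(bp)$ in each slot, let $v(\theta)=\|D(\cdot,\theta)\|_{\rm HS}^2$, and put $W_\lambda(g)=\mathbb E_\pi\exp(\lambda D(g,\theta)-\frac{\lambda^2}2v(\theta))$. For each fixed $\theta$, $D(g,\theta)$ is $N(0,v(\theta))$ in $g$, so $\mathbb E_gW_\lambda=1$ exactly. No moment of the chaos in $\theta$ is ever required.

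Jensen under the shifted law $\rho_x$ of $x+\theta$ gives $\lambda Z\le\log W_\lambda+\frac{\lambda^2}2V+\frac{b^2p}2$. Here $\mathbb E_{\rho_x}v=\sum_I(bp)^{|I|-b}\|D(\cdot_0,x_I,\cdot_{I^c})\|_{\rm HS}^2\le V:=\sum_I(bp)^{|I|-b}\|D\|_{Q_I}^2$. This exact second-moment identity is why only the partitions $Q_I$ appear, with slot $0$ and all unevaluated slots in one block.

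The divergence also enters additively rather than as a multiplicative factor. Choosing $\lambda=\sqrt{(b^2p+2t)/V}$ and applying Markov gives $Z^2\le V(b^2p+2\mathsf E)$ with $\mathsf E$ exponential. Hence $\|Z\|_p^2\le(b^2+1)pV$, which is \eqref{eq:slot} with its exact constants. Neither the $e^{O(b^2)}$ loss you feared at small prior variance nor the $2^b$ from squaring a sum ever arises.
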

\begin{proof}
  This specializes the Gaussian-shift argument of Aden-Ali~\cite[proof of Theorem~1.2]{aden-ali} directly to moments. For comparison, \cite[Corollary~4.1]{aden-ali} obtains a slot moment bound from the expectation estimate and Gaussian concentration. We obtain the tail bound within the shift argument and integrate it. Assume $D\ne0$, and put

  \[
    \beta=bp,\qquad V=\sum_{I\subseteq[b]}\beta^{|I|-b}\|D\|_{Q_I}^2.
  \]

  Let $\pi$ be the law of $\theta=\beta^{-1/2}(\eta_1,\dots,\eta_b)$ with independent standard Gaussians $\eta_i$. For unit-ball points $x=(x_1,\dots,x_b)$, let $\rho_x$ be the law of $x+\theta$. Its mean log density ratio to $\pi$ is $\frac\beta2\sum_i|x_i|^2\le\beta b/2$. Multilinearity gives $\mathbb E_{\rho_x}D(g,\theta)=D(g,x)$. Orthogonality of the Gaussian terms gives, with $v(\theta)=\|D(\cdot,\theta)\|_{\rm HS}^2$,

  \[
    \mathbb E_{\rho_x}v(\theta)
    =\sum_{I\subseteq[b]}\beta^{|I|-b}\|D(\cdot_0,x_I,\cdot_{I^c})\|_{\rm HS}^2\le V.
  \]

  For $\lambda>0$, define

  \[
    W_\lambda(g)=\mathbb E_\pi\exp\!\left(\lambda D(g,\theta)-\frac{\lambda^2}2v(\theta)\right).
  \]

  The Gaussian moment-generating function gives $\mathbb E_gW_\lambda=1$. Jensen under $\rho_x$, followed by the supremum over $x$, gives

  \[
    \lambda Z(g)\le\log W_\lambda(g)+\frac{\lambda^2V}2+\frac{\beta b}2.
  \]

  For each $t\ge0$, Markov's inequality and the choice $\lambda=\sqrt{(\beta b+2t)/V}$ therefore give

  \[
    \Pr\!\left(Z>\sqrt{V(\beta b+2t)}\right)\le e^{-t}.
  \]

  So $Z^2$ is dominated by $V(\beta b+2\mathsf E)$ with $\mathsf E$ exponential of mean one. Since $p/2\ge1$,

  \[
    \|Z\|_p^2\le V\big(\beta b+2\|\mathsf E\|_{p/2}\big)\le V(\beta b+p).
  \]

  Substitute $\beta=bp$ to obtain \eqref{eq:slot}.
\end{proof}

\begin{lemma}[Completing the remaining slots]\label{lem-r}

  For every $T\subseteq[r]$ with $|T|=k\ge1$ and every $p\ge2$,

  \[
    p^{k/2}\|\varphi_T\|_p\le(2r)^{r-k}\max_{Q\in\mathcal P(r)}p^{|Q|/2}\|A\|_Q .
  \]

\end{lemma}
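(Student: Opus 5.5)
We argue by downward induction on the number $r-k$ of unevaluated Gaussian slots, proving the statement simultaneously for all tensors, all orders, and all slot sets. The base case $k=r$ is immediate. Then $\varphi_{[r]}$ is the deterministic injective norm $\|A\|_{Q_0}$, where $Q_0$ is the all-singletons partition. Since $|Q_0|=r$, we get $p^{r/2}\|\varphi_{[r]}\|_p=p^{|Q_0|/2}\|A\|_{Q_0}$, which is at most the maximum on the right, and the prefactor is $(2r)^0=1$.

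For the inductive step, let $|T|=k$ with $1\le k<r$, and fix some $b\notin T$. Condition on the Gaussians $G_c$ with $c\notin T\cup\{b\}$. Conditionally, $\varphi_T=Z(G_b)=\|D(G_b,\cdot)\|_{\rm inj}$, where $D=A(\cdot_b,\cdot_T,G_{\rm rest})$ is viewed as a tensor in $H_b\otimes\bigotimes_{a\in T}H_a$ with slot $0$ played by $b$. Since $k\ge1$, \cref{lem-aa} applies with its $b$ equal to $k$ and gives
\[
p^k\,\mathbb E_{G_b}[\varphi_T^p]^{2/p}\le(k^2+1)\sum_{I\subseteq T}k^{|I|-k}p^{|I|+1}\|D\|_{Q_I}^2 .
\]
Now take the $L^{p/2}$ norm over the remaining Gaussians. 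Because $p/2\ge1$, Minkowski's inequality lets us bound $\|\varphi_T\|_p^2$ by the same sum with $\|D\|_{Q_I}^2$ replaced by $\bigl\|\|D\|_{Q_I}\bigr\|_p^2$.

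The key observation is that $\|D\|_{Q_I}$ is itself a partial injective norm of a merged tensor. Let $A'$ be $A$ with the slots $(T\setminus I)\cup\{b\}$ joined into a single Hilbert space, namely their tensor product. Then $A'$ has order $r'=r-|T\setminus I|\le r$, and $\|D\|_{Q_I}=\varphi'_{T'}$ with $T'=I\cup\{\text{merged slot}\}$. This set has $|T'|=|I|+1\ge1$ and leaves $r'-|T'|=r-k-1$ free slots. The induction hypothesis therefore gives
\[
p^{(|I|+1)/2}\|\varphi'_{T'}\|_p\le(2r')^{r-k-1}\max_{Q'}p^{|Q'|/2}\|A'\|_{Q'} .
\]
Partitions $Q'$ of the slots of $A'$ correspond exactly to partitions of $[r]$ that keep the merged set inside one block. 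They have the same number of blocks and the same norm, because a test tensor on a block of $A'$ is a test tensor on the union of the original slots. So the right side is at most $(2r)^{r-k-1}M$, where $M=\max_{Q\in\mathcal P(r)}p^{|Q|/2}\|A\|_Q$.

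Combining these bounds gives
\[
p^k\|\varphi_T\|_p^2\le(k^2+1)\Bigl(\sum_{I\subseteq T}k^{|I|-k}\Bigr)(2r)^{2(r-k-1)}M^2 .
\]
The sum equals $(1+1/k)^k\le e$. Since $k\le r-1$, we have $(k^2+1)e\le e\,r^2\le(2r)^2$. This closes the induction with the factor $(2r)^{2(r-k)}$, and taking square roots gives the lemma.

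The step needing most care is the conditional application of \cref{lem-aa} followed by Minkowski. Measurability of $\varphi_T$ and of $\|D\|_{Q_I}$ as functions of the remaining Gaussians follows from the countable-supremum description given before \cref{lem-lsi}. Integrability follows from the bound $\varphi_T\le\|A\|_{\rm HS}\prod_{c\notin T}|G_c|$. The other point to check is that merging slots preserves both the partial injective norms and the partition norms exactly. This is what lets the induction range over all tensors rather than over a fixed $A$.
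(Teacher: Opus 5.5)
Your proof is correct and follows the paper's argument: the paper also adds the unevaluated slots one at a time, applies \cref{lem-aa} conditionally after flattening the current blocks, and takes $L^{p/2}$ norms, with a per-step factor $(b^2+1)(1+1/b)^b\le 4b^2\le(2r)^2$. Your device of merging slots into one Hilbert space and inducting over all tensors only repackages the paper's bookkeeping, which tracks quantities $N_P$ indexed by partitions $P$ of the slots already processed.
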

\begin{proof}
  For a partition $P$ of $S\subseteq[r]$, put $Y_P=\|A(\cdot_S,G_{S^c})\|_P$ and $N_P=p^{|P|/2}\|Y_P\|_p$. Fix $c\notin S$ and put $b=|P|$. For $I\subseteq P$, let $Q(P,I)$ keep the blocks in $I$ and merge all other blocks with $c$.

  Condition on all Gaussian slots except $c$, flatten the blocks of $P$, and apply \eqref{eq:slot}. Taking $L^{p/2}$ norms in the remaining Gaussians yields

  \[
    N_P^2\le(b^2+1)\sum_{I\subseteq P}b^{|I|-b}N_{Q(P,I)}^2
    \le(b^2+1)(1+1/b)^b\max_I N_{Q(P,I)}^2
    \le4b^2\max_I N_{Q(P,I)}^2.
  \]

  Here the powers of $p$ agree because $|Q(P,I)|=|I|+1$. Start from the singleton partition of $T$, for which $N_P=p^{k/2}\|\varphi_T\|_p$. Add the other $r-k$ slots, charging at most $2r$ each time because $b\le r$. At the end, $N_Q=p^{|Q|/2}\|A\|_Q$.
\end{proof}

\begin{theorem}[Split moment bound]\label{lem-split}

  For every real tensor $A$ of order $r\ge1$ and every $q\ge2$,

  \[
    \|A(G_1,\dots,G_r)\|_q\le4^r\Big(q^{\lfloor r/2\rfloor/2}\|A\|_{\rm HS}+r^{(r-1)/2}\max_{Q\in\mathcal P(r)}q^{|Q|/2}\|A\|_Q\Big).
  \]

\end{theorem}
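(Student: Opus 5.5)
We will integrate the differential inequality of \cref{lem-lsi} from $s=2$ up to $s=q$, stopping the derivative expansion at depth $m=\lfloor r/2\rfloor$. Iterating \eqref{eq:growth} gives a Taylor-type expansion: for $s\ge2$,
\[
  H_\emptyset(s)\le\sum_{j=0}^{m-1}\frac{(s-2)^j}{j!}\sum_{|T|=j}j!\,H_T(2)
  +\sum_{|T|=m}\int_{2\le s_m\le\dots\le s_1\le s}H_T(s_m)\,ds_m\cdots ds_1,
\]
where each ordered chain of added slots is counted once. Concretely, we write $H_\emptyset(s)\le H_\emptyset(2)+\int_2^s\sum_bH_{\{b\}}$ and substitute the same inequality for each $H_{\{b\}}$, repeating $m$ times. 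The number of ordered chains of length $j$ is at most $r^j$. This is the step where we ``stop halfway.''

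\textbf{Hilbert--Schmidt terms.} Each evaluated term satisfies $H_T(2)\le\|A\|_{\rm HS}^2$ by \cref{lem-lsi}. Hence the first sum is at most
\[
  \sum_{j<m}\frac{(rq)^j}{j!}\|A\|_{\rm HS}^2\le C^r q^{m-1}\|A\|_{\rm HS}^2
\]
for $q\ge2$. It is dominated by its top term up to factors $r^j/j!\le e^r$. Taking square roots yields $C^rq^{\lfloor r/2\rfloor/2}\|A\|_{\rm HS}$, which is within the claimed $4^r$. Some care is needed to get exactly $4^r$, for example by bounding $\sum_j r^j q^j/j!\le q^{m}\sum_j r^j/j!$ and using $e^{r/2}\le 4^r$.

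\textbf{Remainder.} For $|T|=m\ge1$, $H_T$ is nondecreasing, so the iterated integral is at most $\frac{q^m}{m!}r^mH_T(q)$ summed over $T$. \Cref{lem-r} with $k=m$ gives
\[
  H_T(q)=\|\varphi_T\|_q^2\le(2r)^{2(r-m)}q^{-m}\max_Qq^{|Q|}\|A\|_Q^2.
\]
The $q^{m}$ then cancels, which is precisely the point of \cref{lem-r}. The remaining combinatorial factor is $r^m(2r)^{2(r-m)}\binom rm m!/m!\cdot$(chain count). After the square root we need this to be at most $4^{2r}r^{r-1}$. Here $(2r)^{r-m}$ with $r-m=\lceil r/2\rceil$ gives roughly $r^{r/2}$, and the chain count $r^{m/2}$ adds another $r^{r/4}$. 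The exponent must therefore be checked carefully to stay within $r^{(r-1)/2}$. If it does not, we would count unordered sets $T$ (dividing by $m!$, using monotonicity of $H_T$ in $s$ and symmetry of the simplex), so that the factor becomes $\binom rm\le2^r$, leaving $(2r)^{r-m}\le 2^r r^{(r-1)/2}$ after the square root when $r-m\le (r+1)/2$.

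\textbf{Edge case and main obstacle.} When $r=1$, $m=0$, so there is no remainder, only the log-Sobolev step. In that case we apply \cref{lem-r} directly to $T$ with one slot, or note that the chaos is linear, so $\|X\|_q\le\sqrt q\|A\|_{\rm HS}$, which is covered by the second term with $Q$ the single block. The main obstacle is bookkeeping the iterated integral so that the $1/m!$ from the simplex volume cancels the $m!$ orderings of chains. This cancellation is what keeps the constant at $4^rr^{(r-1)/2}$ rather than acquiring an extra $r^{r/2}$.
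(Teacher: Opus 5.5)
Your overall strategy is the paper's: integrate the growth inequality of \cref{lem-lsi}, bound the evaluated terms by $H_T(2)\le\|A\|_{\rm HS}^2$, complete the remainder with \cref{lem-r}, and let the $l!$ orderings of each $l$-set cancel the $1/l!$ simplex volume. However, you stop the expansion one step too early, and this breaks the stated constant.

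With remainder sets of size $m=\lfloor r/2\rfloor$, \cref{lem-r} must complete $r-m=\lceil r/2\rceil$ slots. After the square root, the remainder is therefore of order $2^{r/2}(2r)^{\lceil r/2\rceil}N^*$, where $N^*=\max_Q q^{|Q|/2}\|A\|_Q$. This carries $r^{r/2}$ for even $r$ and $r^{(r+1)/2}$ for odd $r$. Both exceed the claimed $r^{(r-1)/2}$ by a factor $\sqrt r$ or $r$, which $4^r$ cannot absorb.

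Counting unordered sets instead of chains does not help. The loss sits in the exponent of $2r$ coming from \cref{lem-r}, not in the number of chains. Your rescue inequality $(2r)^{r-m}\le2^rr^{(r-1)/2}$ would need $r-m\le(r-1)/2$, not $r-m\le(r+1)/2$, and $\lceil r/2\rceil>(r-1)/2$ always. Your Hilbert--Schmidt part does have a spare factor $\sqrt q$, since it only reaches $q^{(\lfloor r/2\rfloor-1)/2}$. That spare factor cannot be traded against the loss in the partition term, because the two terms involve different norms of $A$.

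The fix is to run the expansion for $k=\lfloor r/2\rfloor+1$ steps, as the paper does. The evaluated terms then range over $0\le l\le h=\lfloor r/2\rfloor$ and give $\|A\|_{\rm HS}^2\sum_{l\le h}\binom rl(q-2)^l\le2^rq^h\|A\|_{\rm HS}^2$. This is exactly the $q^{\lfloor r/2\rfloor/2}$ in the statement. The remainder is $(q-2)^k\sum_{|T|=k}\|\varphi_T\|_q^2$, and \cref{lem-r} now completes only $r-k=\lceil r/2\rceil-1$ slots, so $2(r-k)\le r-1$ and $4^{r-k}\le2^{r-1}$. Together with at most $2^r$ sets, the remainder is at most $4^rr^{r-1}(N^*)^2$.

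The two terms of the statement balance only at this depth: the exponent $\lfloor r/2\rfloor$ on $q$ is the price of the extra step that brings $r^{\lceil r/2\rceil}$ down to $r^{(r-1)/2}$. With this choice, the case $r=1$ also needs no separate treatment, since then $k=1$ and \cref{lem-r} completes no slots.
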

\begin{proof}
  Put $h=\lfloor r/2\rfloor$, $k=h+1$ and $N^*=\max_Qq^{|Q|/2}\|A\|_Q$. Integrate \eqref{eq:growth}, starting with $H_\emptyset(q)=\|X\|_q^2$, and stop after $k$ steps. Each set of $l$ slots occurs in $l!$ orders, which cancels the $1/l!$ from the nested integrals. Using $H_T(2)\le\|A\|_{\rm HS}^2$ and $H_T(s)\le H_T(q)$ for $s\le q$ gives

  \begin{equation}\label{eq:half}
    \|X\|_q^2
    \le\|A\|_{\rm HS}^2\sum_{l=0}^{h}\binom rl(q-2)^l
    +(q-2)^k\sum_{|T|=k}\|\varphi_T\|_q^2.
  \end{equation}

  The first term is at most $2^rq^h\|A\|_{\rm HS}^2$. For every term in the second sum, \cref{lem-r} gives

  \[
    q^{k/2}\|\varphi_T\|_q\le(2r)^{r-k}N^*.
  \]

  There are at most $2^r$ such sets and $2(r-k)\le r-1$. Thus the second term in \eqref{eq:half} is at most $4^rr^{r-1}(N^*)^2$. Take square roots.
\end{proof}

The two terms play different roles in \cref{sec-bulk}. The Hilbert--Schmidt term is paired with the Frobenius bound, and the other with the partition norms.

The following Gaussian decoupling identities are classical; see the general framework in~\cite{janson}. We give the conditional-Gaussian calculation to specify the one-sided constants.

\begin{lemma}[Hermite decoupling]\label{lem-dec}
  \leavevmode
  \begin{enumerate}[label=(\alph*)]
    \item Let $X_1,\dots,X_r$ be independent standard Gaussians in $\mathcal H$ and $\bar X=r^{-1/2}\sum_iX_i$. Then $\mathbb E[X_1\otimes\cdots\otimes X_r\mid\bar X]=r^{-r/2}\mathrm{He}_r(\bar X)$. Hence $\|\langle A,\mathrm{He}_r(\bar X)\rangle\|_q\le r^{r/2}\|A(X_1,\dots,X_r)\|_q$ for $q\ge1$.
    \item Let $W_i=Y_i+iZ_i$ with all $Y_i,Z_i$ independent standard real Gaussians, and $\bar W=r^{-1/2}\sum_iW_i$. Then $\mathbb E[W_1\otimes\cdots\otimes W_r\mid\bar W]=r^{-r/2}\bar W^{\otimes r}$. Hence $\|\langle A,\bar W^{\otimes r}\rangle\|_q\le r^{r/2}\|A(W_1,\dots,W_r)\|_q$.
  \end{enumerate}
\end{lemma}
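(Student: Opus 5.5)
For both parts I would compute the conditional expectation by Gaussian regression, expand in orthogonal components, and then get the moment inequality from Jensen's inequality for conditional expectations.

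\emph{Part (a).} Write each $X_i=r^{-1/2}\bar X+X_i^\perp$. Here $X_i^\perp=X_i-r^{-1/2}\bar X$ is jointly Gaussian with $\bar X$ and uncorrelated with it, hence independent of $\bar X$. Its covariance is
\[
  \mathbb E[X_i^\perp\otimes X_j^\perp]=(\delta_{ij}-1/r)\,I_{\mathcal H}.
\]
Expand $X_1\otimes\cdots\otimes X_r$ multilinearly and condition on $\bar X$. For each subset $S$ of slots carrying $r^{-1/2}\bar X$, Isserlis' formula pairs the remaining slots. A pairing of slots $i\neq j$ always contributes the factor $-\frac1r\,\delta$ (the identity tensor placed in slots $i,j$). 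The result is
\[
  r^{-r/2}\sum_{\text{matchings }M}(-1)^{|M|}\,\bar X^{\otimes(\text{unmatched})}\otimes\delta^{\otimes M},
\]
because $(-1/r)^{|M|}\,r^{-(r-2|M|)/2}=(-1)^{|M|}r^{-r/2}$. This sum is exactly the matching (Wick) expansion of $\mathrm{He}_r(\bar X)$. To check that expansion, expand $\mathbb E_Z(\bar X+iZ)^{\otimes r}$ and note that each pair of $Z$'s contributes $i^2\delta=-\delta$. So the identity follows.

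The inequality then follows by pairing the identity with $A$: we get $\langle A,\mathrm{He}_r(\bar X)\rangle=r^{r/2}\,\mathbb E[A(X_1,\dots,X_r)\mid\bar X]$, and conditional Jensen gives the $L^q$ contraction for $q\ge1$.

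\emph{Part (b).} Write $W_i=r^{-1/2}\bar W+W_i^\perp$ in the same way, with $W_i^\perp$ independent of $\bar W$ (real and imaginary parts are handled separately and are independent). The difference from (a) is that the bilinear covariance now vanishes:
\[
  \mathbb E[W_i^\perp\otimes W_j^\perp]=\mathbb E[Y_i^\perp\otimes Y_j^\perp]-\mathbb E[Z_i^\perp\otimes Z_j^\perp]=0,
\]
since the two real parts have the same covariance. Every pairing term therefore vanishes, leaving $r^{-r/2}\bar W^{\otimes r}$. The inequality follows from Jensen exactly as in (a), using $|\mathbb E[\cdot\mid\bar W]|^q\le\mathbb E[|\cdot|^q\mid\bar W]$ for complex values.

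The only delicate point is the sign and normalization bookkeeping in (a): one must confirm that the $-1/r$ covariance exactly reproduces the Hermite matching coefficients. I would verify this through the generating-function form $\mathbb E_Z e^{\langle t,y+iZ\rangle}$ given in the preliminaries.
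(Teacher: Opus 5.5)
Your proposal is correct and follows essentially the same route as the paper. Both write $X_i=\bar X/\sqrt r+\xi_i$ with residual covariance $(\delta_{ii'}-1/r)\mathrm{Id}$ independent of $\bar X$, and both use Isserlis' formula to obtain weight $r^{-r/2}(-1)^{|\pi|}$ for each partial matching. In (b) both use the vanishing of the bilinear residual covariances, and both finish with conditional Jensen. The only addition on your side is that you explicitly check the matching expansion of $\mathrm{He}_r$ from $\mathbb E_Z(y+iZ)^{\otimes r}$, which the paper leaves implicit.
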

\begin{proof}
  Write $X_i=\bar X/\sqrt r+\xi_i$. The residuals are independent of $\bar X$, with $\mathbb E\xi_i\otimes\xi_{i'}=(\delta_{ii'}-1/r)\mathrm{Id}$. By Isserlis\textquotesingle{} formula~\cite{isserlis}, a term with unpaired set $S$ and pairing $\pi$ of the rest has weight $r^{-|S|/2}(-1/r)^{|\pi|}=r^{-r/2}(-1)^{|\pi|}$; the sum is $r^{-r/2}\mathrm{He}_r(\bar X)$. In (b), the bilinear covariances of the complex residuals vanish, so only the term with no residuals survives. Conditional Jensen gives the norm bounds.
\end{proof}

\begin{corollary}[Complex split bound]\label{lem-cplx}

  Let $T\in(\mathbb C^N)^{\otimes k}$ with $k\ge2$, and $W=Y+iZ$ with $Y,Z$ independent standard Gaussians. For $s\ge2$,

  \[
    \Lambda_s(T):=\|\langle T,W^{\otimes k}\rangle\|_s\le8^kk^{k/2}\Big(s^{k/4}\|T\|_{\rm HS}+k^{(k-1)/2}\max_{J\in\mathcal P(k)}s^{|J|/2}\|T\|_J\Big).
  \]

\end{corollary}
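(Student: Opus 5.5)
The plan is to reduce the symmetric complex moment to a decoupled real multilinear form and then apply \cref{lem-split}. First use \cref{lem-dec}(b). Take independent copies $W_1,\dots,W_k$ of $W$, with $W_j=Y_j+iZ_j$, and put $\bar W=k^{-1/2}\sum_jW_j$. Then $\bar W$ has the same law as $W$, so
\[
  \Lambda_s(T)\le k^{k/2}\bigl\|\langle T,W_1\otimes\cdots\otimes W_k\rangle\bigr\|_s .
\]
This accounts for the factor $k^{k/2}$ in the statement.

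Next expand each $W_j=Y_j+iZ_j$ by multilinearity. This gives $2^k$ terms indexed by $\epsilon\in\{0,1\}^k$. The term for $\epsilon$ is $i^{|\epsilon|}T(X_1^{\epsilon_1},\dots,X_k^{\epsilon_k})$, where $X_j^0=Y_j$ and $X_j^1=Z_j$. Each term is a multilinear form in $k$ independent standard real Gaussians in $\mathbb R^N$. By Minkowski,
\[
  \bigl\|\langle T,W_1\otimes\cdots\otimes W_k\rangle\bigr\|_s\le2^k\max_\epsilon\|T(X^{\epsilon})\|_s .
\]
Alternatively, one can take $H_j=\mathbb R^N\oplus\mathbb R^N$ and use the real-linear embedding, which avoids the $2^k$ split but complicates the norms. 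I will use the split.

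Now $T$ is complex, while \cref{lem-split} is for real tensors. Write $T=T_R+iT_I$ and apply Minkowski again, at the cost of a factor $2$. For each real tensor $T_R$ or $T_I$, apply \cref{lem-split} with $r=k$ and $q=s$. This gives
\[
  4^k\Bigl(s^{\lfloor k/2\rfloor/2}\|T_R\|_{\HS}+k^{(k-1)/2}\max_Js^{|J|/2}\|T_R\|_J\Bigr).
\]
Then compare the real and complex norms. We have $\|T_R\|_{\HS}\le\|T\|_{\HS}$, since the Hermitian HS norm squared is $\|T_R\|^2+\|T_I\|^2$. For partition norms, $\|T_R\|_J\le\|T\|_J$: testing against real unit tensors $y^g$ gives $\langle T,\otimes y^g\rangle=\langle T_R,\otimes y^g\rangle+i\langle T_I,\otimes y^g\rangle$, whose modulus dominates the real part. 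Finally, $s\ge2$ gives $s^{\lfloor k/2\rfloor/2}\le s^{k/4}$.

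Collecting constants gives $k^{k/2}\cdot2^k\cdot2\cdot4^k$. The factor $2$ can be absorbed: $2\cdot 8^k\le 8^k\cdot 2^k$ is fine, but we need $2\cdot2^k4^k\le8^k$, which fails by the factor $2$. I therefore avoid splitting $T$ into real and imaginary parts. Instead, for fixed $\epsilon$, treat $T(X^\epsilon)$ as a complex-valued real multilinear form, and bound the $L^s$ norm of $\operatorname{Re}(e^{-i\theta}T(X^\epsilon))$ uniformly in $\theta$. The real tensor $\operatorname{Re}(e^{-i\theta}T)$ has HS and partition norms at most those of $T$, and $|u|\le\sqrt2\max(|\operatorname{Re}u|,|\operatorname{Im}u|)$. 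This costs only $\sqrt2$, and $\sqrt2\cdot2^k\cdot4^k\le8^k$ fails only by the factor $\sqrt2$. That factor is absorbed by a slightly sharper count in the split bound, or more simply by using $k\ge2$, where the $2^k$ from $\epsilon$ has slack against $4^k$ in the first term of \cref{lem-split}'s proof (its $2^r q^h$ estimate). Tracking this constant carefully is the only delicate point; the rest is direct.
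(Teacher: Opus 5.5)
Your skeleton (decoupling by \cref{lem-dec}(b), reduction to real tensors, \cref{lem-split}, then comparison of real and complex norms) is the paper's. As written, however, the proposal does not prove the stated constant, and the closing sentence does not repair this.

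Expanding each $W_j=Y_j+iZ_j$ into $2^k$ terms and applying Minkowski loses a full $2^k$. With \cref{lem-split} used as stated, you therefore reach at best $\sqrt2\cdot 8^kk^{k/2}(\cdots)$, which you acknowledge. Even the $\sqrt2$ needs a different justification. The pointwise bound $|u|\le\sqrt2\max(|\operatorname{Re}u|,|\operatorname{Im}u|)$ does not transfer to $L^s$ norms with constant $\sqrt2$. What works is $\|u\|_s^2=\|(\operatorname{Re}u)^2+(\operatorname{Im}u)^2\|_{s/2}\le\|\operatorname{Re}u\|_s^2+\|\operatorname{Im}u\|_s^2$, which is Minkowski in $L^{s/2}$ for $s\ge2$. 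Your appeal to slack in ``the first term'' of \cref{lem-split}'s proof only concerns the Hilbert--Schmidt part, not the partition term. Also, $k\ge2$ is not what creates that slack.

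The step you set aside is the one that closes the argument. Let $\iota(y,z)=y+iz$. Each $(Y_j,Z_j)$ is then a single standard Gaussian in $\mathbb R^{2N}$, and $\langle T,W_1\otimes\cdots\otimes W_k\rangle$ is the real multilinear form $T\circ\iota^{\otimes k}$ evaluated at $((Y_1,Z_1),\dots,(Y_k,Z_k))$.

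The norms are not complicated. Each entry of $\iota^{\otimes m}y$ is a sum of $2^m$ entries of $y$ with unimodular coefficients $i^{|\epsilon|}$. Cauchy--Schwarz therefore gives $\|\iota^{\otimes m}y\|\le2^{m/2}\|y\|$ for a real test tensor $y$ on a block of size $m$. Hence every partition norm of $T\circ\iota^{\otimes k}$, including the Hilbert--Schmidt norm, is at most $2^{k/2}$ times that of $T$. The same holds for its real and imaginary parts.

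This bundles your $2^k$ terms at a Cauchy--Schwarz cost $2^{k/2}$, instead of a triangle-inequality cost $2^k$. The crude real/imaginary split costing $2$ then suffices, because $2\cdot2^{k/2}\le2^k$ exactly when $k\ge2$. This is where the hypothesis $k\ge2$ enters.

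Alternatively, you could keep your expansion and sharpen \cref{lem-split}. Its proof in fact yields $2^{r/2}q^{\lfloor r/2\rfloor/2}\|A\|_{\rm HS}+2^{r-1/2}r^{(r-1)/2}\max_Qq^{|Q|/2}\|A\|_Q$, which absorbs your losses. But then you must verify this for the partition term too, and you are reproving \cref{lem-split} rather than applying it.
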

\begin{proof}
  \cref{lem-dec}(b) decouples the slots at cost $k^{k/2}$. Write $W_i=\iota(Y_i,Z_i)$ with $\iota(y,z)=y+iz$, and apply \cref{lem-split} to the real and imaginary parts of $T\circ\iota^{\otimes k}$. A real unit test tensor on a block $g$ maps to a complex tensor of norm at most $2^{|g|/2}$, so every partition norm grows by at most $2^{k/2}$. Adding the real and imaginary parts costs $2$, and $2^{k/2+1}\le2^k$ for $k\ge2$.
\end{proof}
\section{Polynomial moments of sub-Gaussian components}\label{sec-bulk}
We now prove \cref{prop-msb}, the bulk estimate used in
\cref{lem-tb}. The argument uses the split Gaussian estimate of
\cref{lem-split,lem-cplx}, with its relation to earlier chaos estimates
explained in \cref{sec-chaos}. Here the law need only have sub-Gaussian
linear marginals and a bounded density relative to Gaussian measure;
its coordinates may be dependent. In particular, we do not assume the
functional inequalities on the law itself that underlie the
polynomial concentration theorems of Adamczak--Wolff~\cite{adamczak-wolff}.
One induction in weighted polynomial moments gives both the partition
norms and the stronger Frobenius bound.

\textbf{Setting.} $W$ is a Euclidean space of dimension $N$; all bounds are independent of $N$. $\varphi\le e^\sigma$ ($\sigma\ge0$) is an SG($B_1$) probability density on $W$ relative to $\gamma=\gamma_W$. Let $Y\sim\varphi\gamma$ and $Z\sim\gamma$ be independent and $w:=Y+iZ$. The \textbf{Wick tensor} of order $r$ is

\[
  M_r(\varphi):=\mathbb Ew^{\otimes r}=\mathbb E_\varphi\mathrm{He}_r(Y),
\]

a real symmetric tensor. It is zero for every $r\ge1$ when $\varphi\equiv1$.

The complex representation gives products for free. For a partition of $[r]$ into blocks $\beta$ and test tensors $y^\beta$,

\[
  \Big\langle M_r,\bigotimes_\beta y^\beta\Big\rangle=\mathbb E\prod_\beta a_{y^\beta}(w),\qquad a_x(w):=\langle x,w^{\otimes k}\rangle\ \ (x\in(\mathbb C^N)^{\otimes k}).
\]

So partition norms of $M_r$ reduce to moments of the polynomials $a_x$. Put

\[
  \Theta_k(p):=\sup\big\lbrace\|a_x\|_{L^p}:\ \|x\|\le1\big\rbrace\qquad(p\ge1),\qquad R:=\max\lbrace2(B_1+1),128\rbrace,
\]

the norm being over $(Y,Z)$.

\begin{proposition}[Polynomial moment bounds]\label{prop-boot}

  Let $\mu\ge1$ and $\Sigma=\sigma+2\mu^2+\mu$. For every integer $1\le k\le\mu$,

  \[
    \Theta_k(\mu/k)\le R^k(\mu/k)^{1/2}\Sigma^{(k-1)/2}.
  \]

  At $p=1$, the same argument also gives

  \begin{equation}\label{eq:l1}
    \Theta_k(1)\le R^k k^{k/2}(\sigma+2k^2+k)^{k/4}.
  \end{equation}

\end{proposition}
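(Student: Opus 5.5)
The plan is an induction on $k$ with $\mu$ fixed, working throughout with the complex variable $w=Y+iZ$. The point of the complex representation is that products of the block polynomials $a_{y^g}$ are again moments of $w$, so Hölder across blocks will make the induction close.

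\textbf{Base case $k=1$.} Here $a_x(w)=\langle x,Y\rangle+i\langle x,Z\rangle$ with $x\in\mathbb C^N$ and $|x|\le1$. Split $x$ into real and imaginary parts and apply the SG($B_1$) hypothesis to $Y$ and the Gaussian bound to $Z$. This gives
\[
\Theta_1(p)\le 2(B_1+1)\sqrt p\le R\sqrt p.
\]
This is the claim at $k=1$ with no $\Sigma$ factor.

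\textbf{Inductive step for $k\ge2$.} I would first pass to Gaussian measure on the $Y$ variable at a large exponent. By \eqref{eq:density} with $s=\Sigma\ge\sigma$, moving from $\varphi\gamma$ to $\gamma$ costs only $e^{\sigma/s}\le e$. \Cref{lem-cplx} then controls $\langle x,W^{\otimes k}\rangle$ by
\[
s^{k/4}\|x\|_{\HS}+k^{(k-1)/2}\max_J s^{|J|/2}\|x\|_J .
\]
The Hilbert--Schmidt term costs only $\Sigma^{k/4}$, which is acceptable. The partition-norm terms with many blocks are too large by a factor $s^{1/2}$ or more, because each singleton slot charges $\sqrt s$ rather than $\sqrt{\mu/k}$. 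So the direct Gaussian comparison cannot be used for the top exponent.

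\textbf{Handling the partition-norm terms.} Instead, I would run a ``one slot at a time'' argument. Fix a block structure and test tensors $y^g$, and recall $\langle M_r,\bigotimes_g y^g\rangle=\mathbb E\prod_g a_{y^g}(w)$. Then bound $\|\prod_g a_{y^g}\|_p$ by Hölder, giving each block an exponent proportional to its size, $p_g=p\,k/|g|$. This keeps every exponent at most $\mu/|g|$ whenever $p\le\mu/k$, so the inductive hypothesis for $\Theta_{|g|}$ applies to each block. The resulting product is
\[
\prod_g (\mu/|g|)^{1/2}\Sigma^{(|g|-1)/2}.
\]
This produces exactly one $(\mu/k)^{1/2}$-type factor per block, while the remaining $\Sigma$ powers total $k-|J|$. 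To cover the leftover $|J|-1$ powers, I would use $\mu\le\Sigma$, which holds since $\Sigma\ge 2\mu^2$, and the fact that $\mu/|g|\le\Sigma$.

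The obstacle is that $a_x$ itself, for a general $x$ that does not factor into blocks, is not a product. To bridge this, I would apply \cref{lem-split}/\cref{lem-cplx} to the Gaussian $Z$-slots and the shifted variable at a moderate exponent, and then use the SG bound for a single linear slot $\langle u,Y\rangle$. Here $u$ is a contraction of $x$ against the other slots, and I expect its norm to be controlled by $\Theta_{k-1}$ at a doubled exponent, through Cauchy--Schwarz together with the $\mu\le\Sigma$ slack. Making this exponent bookkeeping close with constant exactly $R^k$ is the step I expect to be hardest. The inequality $\Theta_{k-1}(2\mu/k)\le\Theta_{k-1}(\mu/(k-1))$ holds for $k\ge2$, which is what allows the doubled exponent.

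\textbf{The $L^1$ bound \eqref{eq:l1}.} Run the same scheme with $\mu=k$, at $p=1$. The gain here comes from the Hilbert--Schmidt term. Since $p=1$, essentially everything goes through the $s^{k/4}$ branch of \cref{lem-cplx} at $s\simeq\sigma+2k^2+k$, and the density change again costs $e^{\sigma/s}\le e$. This yields $R^kk^{k/2}(\sigma+2k^2+k)^{k/4}$.
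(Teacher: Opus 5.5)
There is a genuine gap in your inductive step. You diagnose the difficulty correctly: the change of density is cheap only if $s$ is at least of order $\sigma/k$. At such $s$, \cref{lem-cplx} applied to a unit test tensor $x$ charges $\max_J s^{|J|/2}\|x\|_J$, which is already $s^{k/2}$ for $x=e_1^{\otimes k}$. But your remedy does not close. The block-wise H\"older computation only treats test tensors that factor over a partition. For a general $x$, you propose to pair one linear slot with a contraction $u$ of $x$ against the other slots, and to apply the SG($B_1$) bound to $\langle u,Y\rangle$. This fails because $u$ is built from the same $w$ and therefore depends on $Y$. The sub-Gaussian hypothesis controls $\langle v,Y\rangle$ only for deterministic $v$. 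Bounding by $|u|\,|Y|$ costs $\sqrt N$. No decoupling of the $Y$-slots is available either: $Y$ has dependent non-Gaussian coordinates, and \cref{lem-dec} is a Gaussian identity. Your exponent claim is also wrong. For $k\ge3$ one has $2\mu/k>\mu/(k-1)$, and $\Theta_{k-1}$ is nondecreasing in its argument, so $\Theta_{k-1}(2\mu/k)\le\Theta_{k-1}(\mu/(k-1))$ goes the wrong way. The $L^1$ paragraph inherits the problem. At $p=1$ the partition-norm branch of \cref{lem-cplx} does not disappear for a test tensor, so nothing routes the estimate through the $s^{k/4}$ branch.

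The missing idea is $L^p$ duality combined with a self-bounding quadratic.

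\begin{itemize}
\item \emph{Duality.} Write $\Theta_k(p)=\sup_\zeta\|A\|_{\HS}$ with $A=\mathbb E[\zeta w^{\otimes k}]$ and $\|\zeta\|_{p'}\le1$, and set $F=\|A\|_{\HS}$.
\item \emph{Gaussian comparison.} With $p=\mu/k$ and $s=\Sigma/k$, H\"older and \eqref{eq:density} give $F^2=\mathbb E[\zeta\langle\bar A,w^{\otimes k}\rangle]\le e^k\Lambda_s(\bar A)$. So \cref{lem-cplx} is applied to the extremal tensor $\bar A$, not to $x$.
\item \emph{Multi-block norms.} Partition norms of $A$ are tested only against product tensors. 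For $J$ with $l\ge2$ blocks, $\langle A,\bigotimes_\beta y^\beta\rangle=\mathbb E[\zeta\prod_\beta a_{y^\beta}(w)]$. Generalized H\"older with exponents $p'$ and $\mu/k_\beta$, plus the induction hypothesis, gives $s^{l/2}\|A\|_J\le R^k(\Sigma p)^{k/2}$. This is where your block-wise H\"older computation belongs.
\item \emph{Closing the quadratic.} The one-block norm is $F$ itself, so one obtains $F^2\le(A_1+A_2)F+A_3$. Then \eqref{eq:quadratic} yields $F\le R^k\sqrt p\,\Sigma^{(k-1)/2}$.
\end{itemize}

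The $L^1$ bound is the same quadratic with $\mu=k$, $p=1$ and $p'=\infty$. The blocks still use the main bound, and the gain comes from the size $\Sigma^{k/4}$ of the Hilbert--Schmidt coefficient $A_1$.
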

\begin{proof}
  Keep $kp=\mu$ fixed and induct on $k$. This lets every block of a partition use the same induction hypothesis. For $k=1$, separating real and imaginary parts of a complex unit vector gives $\Theta_1(\mu)\le2(B_1+1)\sqrt\mu$, which also implies \eqref{eq:l1} at $k=1$.

  \textbf{Use duality.} For $k\ge2$ put $p=\mu/k$ and $s=\Sigma/k$. Then $s\ge p$, $s\ge2k$, and \eqref{eq:density} costs at most $e^{\sigma/s}\le e^k$. By $L^p$ duality,

  \[
    \Theta_k(p)=\sup_{\|\zeta\|_{p'}\le1}\big\|\mathbb E[\zeta w^{\otimes k}]\big\|_{\rm HS},\qquad p'=\frac p{p-1},
  \]

  where $p'=\infty$ when $p=1$, and $\zeta$ is any complex function of $(Y,Z)$. Fix such a $\zeta$ and write $A=\mathbb E[\zeta w^{\otimes k}]$ and $F=\|A\|_{\rm HS}$. Hölder and \eqref{eq:density} give

  \[
    F^2=\mathbb E[\zeta\langle\bar A,w^{\otimes k}\rangle]\le\|\langle\bar A,w^{\otimes k}\rangle\|_p\le e^k\Lambda_s(\bar A),
  \]

  where $\Lambda_s$ is the Gaussian moment of \cref{lem-cplx}.

  \textbf{Bound the partition norms.} The one-block norm of $A$ is $F$. If $J$ has $l\ge2$ blocks of sizes $k_\beta<k$, generalized Hölder with exponents $p'$ and $\mu/k_\beta$ gives

  \[
    \|A\|_J\le\prod_{\beta\in J}\Theta_{k_\beta}(\mu/k_\beta).
  \]

  The induction hypothesis therefore gives

  \[
    s^{l/2}\|A\|_J\le R^k\Sigma^{k/2}p^{l/2}\le R^k(\Sigma p)^{k/2}.
  \]

  Complex conjugation preserves these norms, so the same bounds hold for $\bar A$.

  \textbf{One quadratic gives both bounds.} Insert these estimates into \cref{lem-cplx}, separating its Hilbert--Schmidt term, the one-block member $s^{1/2}F$ of its maximum, and the other partitions. With $C_*=8e$,

  \begin{equation}\label{eq:bootstrap}
    \begin{aligned}
      F^2&\le(A_1+A_2)F+A_3,\\
      A_1&=C_*^kk^{k/4}\Sigma^{k/4},\qquad
      A_2=C_*^kk^{k-1}\Sigma^{1/2},\\
      A_3&=C_*^kk^{k-1/2}R^k(\Sigma p)^{k/2}.
    \end{aligned}
  \end{equation}

  Let $H=R^k\sqrt p\,\Sigma^{(k-1)/2}$ and $\eta=C_*/R<1/4$. Since $\Sigma\ge k^2$, $\Sigma/p\ge k^2$ and $p\ge1$,

  \[
    A_1,A_2\le k\eta^kH,\qquad A_3\le k^{3/2}\eta^kH^2.
  \]

  For $k\ge2$, $k^{3/2}\le2^k$, so $A_1+A_2\le H/2$ and $A_3\le H^2/4$. By \eqref{eq:quadratic}, $F\le H$. Taking the supremum over $\zeta$ proves the induction.

  For \eqref{eq:l1}, take $\mu=k$, so $p=1$ and $\Sigma=\sigma+2k^2+k$, and put $K=R^kk^{k/2}\Sigma^{k/4}$. Since $\Sigma\ge k^2$, the same coefficients in \eqref{eq:bootstrap} satisfy

  \[
    A_1,A_2\le\eta^kK,\qquad A_3\le\eta^kK^2.
  \]

  Thus \eqref{eq:quadratic} gives $F\le(2\eta^k+\eta^{k/2})K\le K$. Taking the supremum over $\zeta$ proves \eqref{eq:l1}.
\end{proof}

The inflation $2\mu^2$ in $\Sigma$ is exactly what makes the closing step work: it gives $s\ge2k$ and $\Sigma,\Sigma/p\ge k^2$, with no logarithmic margin.

\begin{corollary}[Partition norms]\label{cor-reg}

  For every $r\ge1$ and every partition $J$ of $[r]$ with $l$ blocks,

  \[
    \|M_r(\varphi)\|_J\le R^r\,r^{l/2}\,\Sigma_r^{(r-l)/2},\qquad\Sigma_r:=\sigma+2r^2+r .
  \]

\end{corollary}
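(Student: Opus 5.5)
The plan is to read the corollary off \cref{prop-boot} by testing $M_r$ against products of block tensors and applying Hölder across the blocks. Fix a partition $J$ of $[r]$ with blocks $\beta$ of sizes $k_\beta$, so that $\sum_\beta k_\beta=r$ and $|J|=l$. Take complex test tensors $y^\beta$ with $\|y^\beta\|_{\HS}\le1$. The complex representation gives
\[
\Big\langle M_r,\bigotimes_\beta y^\beta\Big\rangle=\E\prod_\beta a_{y^\beta}(w).
\]
Apply generalized Hölder with exponents $r/k_\beta$, whose reciprocals sum to one. This bounds the absolute value by $\prod_\beta\Theta_{k_\beta}(r/k_\beta)$.

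Next, apply \cref{prop-boot} with $\mu=r$, which is admissible because $1\le k_\beta\le r=\mu$. Then $\Sigma=\sigma+2r^2+r=\Sigma_r$, and each factor obeys
\[
\Theta_{k_\beta}(r/k_\beta)\le R^{k_\beta}(r/k_\beta)^{1/2}\Sigma_r^{(k_\beta-1)/2}.
\]
Multiplying over the $l$ blocks gives
\[
R^r\,\Sigma_r^{(r-l)/2}\prod_\beta(r/k_\beta)^{1/2}.
\]
Since $k_\beta\ge1$, the last product is at most $r^{l/2}$, which is the claimed bound. Taking the supremum over the test tensors then yields $\|M_r(\varphi)\|_J$.

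The one-block case $l=1$ should be checked separately. Here Hölder is trivial: the single exponent is $r/r=1$, so we use $\Theta_r(1)$. \cref{prop-boot} with $\mu=k=r$ gives $R^r\Sigma_r^{(r-1)/2}$, and this is within the stated $R^r r^{1/2}\Sigma_r^{(r-1)/2}$. The sharper estimate \eqref{eq:l1} is not needed here; presumably it is reserved for the Frobenius bound used later.

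I do not expect a real obstacle here; the only care needed is bookkeeping. First, the test tensors in \eqref{eq:partition-norm} are complex, and $\Theta_k$ was defined with a supremum over complex $x$ of norm at most one, so the two definitions match. Second, $M_r$ is real, but the partition norm for complex tests dominates the real one, so the bound covers either convention. Third, the choice $\mu=r$ must be uniform across all blocks. This is exactly why \cref{prop-boot} was phrased with $kp=\mu$ fixed: each block of size $k_\beta$ uses moment exponent $r/k_\beta$, and these exponents are precisely the Hölder conjugates.
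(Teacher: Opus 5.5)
Your proposal is correct and is essentially the paper's proof. The paper writes the pairing as $\mathbb E\prod_\beta a_{y^\beta}(w)$, applies Hölder with exponents $r/k_\beta$, and invokes \cref{prop-boot} with $\mu=r$; your bound $\prod_\beta(r/k_\beta)^{1/2}\le r^{l/2}$ fills in the one bookkeeping step the paper leaves implicit.
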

\begin{proof}
  Write $\langle M_r,\otimes_\beta y^\beta\rangle=\mathbb E\prod_\beta a_{y^\beta}(w)$. Apply Hölder with exponents $r/k_\beta$, then \cref{prop-boot} with $\mu=r$.
\end{proof}

So each block gains a factor $\sqrt{r/\Sigma_r}$ over the trivial scale $\Sigma_r^{r/2}$. For the one-block norm, \eqref{eq:l1} gives the $\Sigma_r^{r/4}$ dependence needed below.

\begin{corollary}[Frobenius bound]\label{lem-frob}

  \[
    \|M_r(\varphi)\|_F\le R^r\,r^{r/2}\,\Sigma_r^{r/4}.
  \]

\end{corollary}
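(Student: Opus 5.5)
The Frobenius bound should follow from \eqref{eq:l1} by $L^1$--$L^\infty$ duality, exactly as the partition norms in \cref{cor-reg} follow from the main case of \cref{prop-boot}. First I will realize the Hilbert--Schmidt norm of $M_r$ as a single one-block pairing:
\[
  \|M_r\|_F=\sup_{\|x\|\le1}\bigl|\langle M_r,x\rangle\bigr|,
\]
with $x$ ranging over complex tensors in $(\mathbb C^N)^{\otimes r}$ of Hermitian norm at most one. Since $M_r$ is real, the supremum is attained at $x=M_r/\|M_r\|_F$. By the complex representation,
\[
  \langle M_r,x\rangle=\mathbb E\langle x,w^{\otimes r}\rangle=\mathbb E\,a_x(w).
\]

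The pairing is therefore bounded by
\[
  \bigl|\langle M_r,x\rangle\bigr|\le\|a_x\|_{L^1}\le\Theta_r(1).
\]
Then \eqref{eq:l1} with $k=r$ gives $\Theta_r(1)\le R^r r^{r/2}(\sigma+2r^2+r)^{r/4}=R^r r^{r/2}\Sigma_r^{r/4}$, which is the claim.

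The only point to check is that \eqref{eq:l1} is available at $k=r$ for every $r\ge1$. Its proof in \cref{prop-boot} takes $\mu=k$ and $p=1$. It uses the partition-norm induction for smaller block sizes at that same $\mu$, which is covered by the first part of \cref{prop-boot}. It also uses \cref{lem-cplx}, which needs $k\ge2$. The case $r=1$ is handled separately by the $k=1$ statement in the proof, $\Theta_1(\mu)\le 2(B_1+1)\sqrt\mu$ at $\mu=1$, which is at most $R\,\Sigma_1^{1/4}$. So no genuine obstacle remains; the substantive work, namely the closing of the quadratic inequality with the $s^{k/4}$ Hilbert--Schmidt term from \cref{lem-split}, was already done inside \cref{prop-boot}.
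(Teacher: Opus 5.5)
Your proposal is correct and is essentially the paper's proof: write $\|M_r\|_F=\sup_{\|x\|\le1}|\mathbb E\,a_x(w)|\le\Theta_r(1)$ and apply \eqref{eq:l1} with $k=r$. Your check of the case $r=1$, via $\Theta_1(1)\le2(B_1+1)\le R$, matches how the proof of \cref{prop-boot} handles its base case.
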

\begin{proof}
  $\|M_r\|_F=\sup_{\|x\|\le1}|\mathbb E a_x(w)|\le\Theta_r(1)$; apply \eqref{eq:l1}.
\end{proof}

\textbf{Mixed chaos.} Let $G\sim\mathrm{GOE}(N)$, let $g$ be an independent standard Gaussian of a Euclidean space $\mathcal H$, and let $L:\mathcal H_{\mathbb C}\to\mathbb C^N$ be complex-linear with $\lambda=\|L\|$. For integers $m,c\ge0$ put $r=2m+c$ (the tensor order) and $j=m+c$ (the Gaussian degree), so $j\le r\le2j$. Define

\[
  U^L_{m,c}(\varphi)=\big\langle M_r(\varphi),\ \mathcal J_m\mathrm{He}^{\rm GOE}_m(G)\otimes L^{\otimes c}\mathrm{He}^{\mathcal H}_c(g)\big\rangle,
\]

a pure chaos of degree $j$ in $(G,g)$. In \cref{lem-tb}, $G=G_{WW}$, $g=D$ and $L=L_v$.

\begin{lemma}[Decoupled form of the mixed chaos]\label{lem-mixed}

  Let $A_L(X_1,\dots,X_m,v_1,\dots,v_c)=\langle M_r,\mathcal J_m(X_1\otimes\cdots\otimes X_m)\otimes Lv_1\otimes\cdots\otimes Lv_c\rangle$. Then:

  \begin{enumerate}[label=(\alph*)]
    \item $\|U^L_{m,c}\|_q\le j^{j/2}\|A_L(G_1,\dots,G_m,g_1,\dots,g_c)\|_q$ for $q\ge1$, with independent Gaussian inputs;
    \item $\|A_L\|_J\le2^{m/2}\lambda^c\|M_r\|_{\hat J}$ for every partition $J$ of the $j$ slots, where $\hat J$ replaces each matrix slot by its two indices (same number of blocks); in particular $\|A_L\|_{\rm HS}\le2^{m/2}\lambda^c\|M_r\|_F$;
    \item $\max_Jq^{|J|/2}\|A_L\|_J\le2^{m/2}\lambda^cR^r(rq+\Sigma_r)^{r/2}$.
  \end{enumerate}
\end{lemma}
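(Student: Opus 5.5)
The three parts are proved separately: (a) by Hermite decoupling, (b) by unwinding the definitions of $\mathcal J_m$ and $L$, and (c) by combining (b) with \cref{cor-reg}.

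\textbf{Part (a).} The idea is to apply \cref{lem-dec}(a) blockwise to the independent Gaussians $G$ and $g$.
\begin{itemize}
\item Write $U^L_{m,c}=\langle A_L,\He^{\GOE}_m(G)\otimes\He_c(g)\rangle$, viewing $A_L$ as a multilinear form on $\Sym(\mathbb R^N)^m\times\mathcal H^c$.
\item Take independent copies $G_1,\dots,G_m$ and $g_1,\dots,g_c$. Set $\bar G=m^{-1/2}\sum G_i$ and $\bar g=c^{-1/2}\sum g_i$; these have the laws of $G$ and $g$.
\item By \cref{lem-dec}(a), applied to each family and using independence of the two families,
\[
  \E[A_L(G_1,\dots,g_c)\mid\bar G,\bar g]=m^{-m/2}c^{-c/2}\langle A_L,\He_m(\bar G)\otimes\He_c(\bar g)\rangle.
\]
\item Conditional Jensen then gives (a) with the constant $m^{m/2}c^{c/2}\le j^{j/2}$.
\end{itemize}

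\textbf{Part (b).} This is a duality computation.
\begin{itemize}
\item Take unit test tensors $y^g$ on the blocks of $J$. Plugging them into $A_L$ amounts to applying the adjoint-type maps $\mathcal J$ (on matrix slots) and $L^{\otimes}$ (on vector slots) blockwise. This produces test tensors on the blocks of $\hat J$ in the $r$ vector slots.
\item Each such test tensor has Hilbert--Schmidt norm at most $2^{m_g/2}\lambda^{c_g}$, where $m_g,c_g$ count the matrix and vector slots in the block. This uses the stated bound $\|\mathcal J\|\le 2^{1/2}$ per matrix slot: the operator bound applies factorwise because $\mathcal J_m$ acts slotwise before symmetrization, and symmetrization is a contraction. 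It also uses $\|L\|=\lambda$.
\item Multiplying over the blocks gives the factor $2^{m/2}\lambda^c$, and the block count of $\hat J$ equals that of $J$.
\item The Hilbert--Schmidt statement is the one-block case.
\end{itemize}

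\textbf{Part (c).} Combine (b) with \cref{cor-reg}: for $J$ with $l$ blocks,
\[
  q^{l/2}\|A_L\|_J\le 2^{m/2}\lambda^cR^r\,(rq)^{l/2}\,\Sigma_r^{(r-l)/2}.
\]
Since $l\le r$, and both $rq$ and $\Sigma_r$ are at most $rq+\Sigma_r$, the right side is bounded by $2^{m/2}\lambda^cR^r(rq+\Sigma_r)^{r/2}$, which is (c).

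\textbf{Main obstacle.} The only delicate point is (b): checking that the norm of $\mathcal J$ is multiplicative across blocks, i.e.\ that a test tensor for a block of matrix slots pulls back with norm at most $2^{m_g/2}$. I would verify this by writing $\mathcal J_m$ as the tensor product of the single-slot embeddings $\Sym\to\mathbb R^N\otimes\mathbb R^N$ followed by a vector-index symmetrization. The symmetrization can be moved onto $M_r$, which is already symmetric, so it disappears.
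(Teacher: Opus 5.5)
Your proposal is correct and follows the paper's proof in all three parts: Hermite decoupling (\cref{lem-dec}(a)) with conditional Jensen for (a); pulling unit test tensors back slotwise using $\|X\|_F^2=2\|X\|_{\GOE}^2$ and $\|L\|=\lambda$ for (b); and \cref{cor-reg} with $(rq)^{l/2}\Sigma_r^{(r-l)/2}\le(rq+\Sigma_r)^{r/2}$ for (c). The differences are minor: you decouple $G$ and $g$ as two separate families, which gives the slightly sharper constant $m^{m/2}c^{c/2}\le j^{j/2}$, whereas the paper decouples $(G,g)$ as one Gaussian on $\Sym\oplus\mathcal H$; and you state explicitly that the symmetrization inside $\mathcal J_m$ can be moved onto the symmetric tensor $M_r$, a step the paper leaves implicit.
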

\begin{proof}
  (a) View $(G,g)$ as one standard Gaussian of $\mathrm{Sym}\oplus\mathcal H$. Its Wick tensor of order $j$, projected onto $m$ matrix and $c$ vector slots, is the product defining $U^L_{m,c}$. \cref{lem-dec}(a) and conditional Jensen give (a). (b) A unit test tensor on a block with $m_g$ matrix and $c_g$ vector slots maps to a coordinate tensor of norm at most $2^{m_g/2}\lambda^{c_g}$, since $\|X\|_F^2=2\|X\|^2_{\rm GOE}$. (c) Combine (b) with \cref{cor-reg}: $q^{l/2}\|A_L\|_J\le2^{m/2}\lambda^cR^r(rq)^{l/2}\Sigma_r^{(r-l)/2}$.
\end{proof}

\begin{proposition}[Mixed moments]\label{prop-msb}

  For all $m,c\ge0$ with $j=m+c\ge1$ and all $q\ge1$,

  \[
    \|U^L_{m,c}(\varphi)\|_q\le\lambda^c(Bj)^j\big(\sigma+12j^2+jq+j\big)^{m+c/2},\qquad B=C(B_1).
  \]

\end{proposition}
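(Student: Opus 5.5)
The plan is to reduce the mixed chaos to a decoupled multilinear Gaussian form, apply the split moment bound \cref{lem-split} to that form, and feed in the tensor estimates \cref{lem-frob} and \cref{cor-reg} through \cref{lem-mixed}. The exponent bookkeeping is carried out with the quantity $X:=\sigma+12j^2+jq+j$, which satisfies $X\ge 12j^2$, $X\ge jq$ and $X\ge j$.

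\textbf{Reduction.} First I reduce to $q\ge2$. For $q<2$ use $\|U\|_q\le\|U\|_2$; this costs at most a factor $C^j$, since the right-hand side with $q=2$ is at most $2^{m+c/2}$ times the same expression at $q$. By \cref{lem-mixed}(a), $\|U^L_{m,c}\|_q\le j^{j/2}\|A_L(G_1,\dots,G_m,g_1,\dots,g_c)\|_q$, where $A_L$ has order $j$. The tensor $A_L$ is complex because $L$ is complex-linear. I therefore apply \cref{lem-split} separately to its real and imaginary parts, both of order $j$. A real unit test tensor embeds with norm at most one, so the partition norms of each part are bounded by those of $A_L$, at a cost of a factor $2$. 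This gives
\[
\|U\|_q\le 2\cdot4^j j^{j/2}\Big(q^{\lfloor j/2\rfloor/2}\|A_L\|_{\rm HS}+j^{(j-1)/2}\max_{J}q^{|J|/2}\|A_L\|_J\Big).
\]

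\textbf{Partition-norm term.} By \cref{lem-mixed}(c), this term is at most $2^{m/2}\lambda^cR^r(rq+\Sigma_r)^{r/2}$. Since $r\le2j$, we have $rq+\Sigma_r\le 2jq+\sigma+8j^2+2j\le 2X$. Because $r/2=m+c/2$, the term is at most $C^j\lambda^cX^{m+c/2}$. Its prefactor is $j^{j/2}\cdot j^{(j-1)/2}\le j^j$, which is absorbed into $(Bj)^j$. So the exponent matches exactly and no slack is needed here.

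\textbf{Hilbert--Schmidt term.} This is the step I expect to be delicate, because the $j$-powers do not obviously fit. By \cref{lem-mixed}(b) and \cref{lem-frob},
\[
\|A_L\|_{\rm HS}\le2^{m/2}\lambda^cR^r r^{r/2}\Sigma_r^{r/4}.
\]
Together with $j^{j/2}q^{j/4}$, the term is bounded by $C^j\lambda^c\, j^{j/2}r^{r/2}q^{j/4}\Sigma_r^{r/4}$. Now I count powers.
\begin{itemize}
\item $j$-factors: $r^{r/2}\le(2j)^{m+c/2}$ and $j^{j/2}=j^{(m+c)/2}$, so the total is $2^{j}j^{3m/2+c}$. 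The factor $(Bj)^j$ absorbs $j^{m+c}$, leaving an excess $j^{m/2}$.
\item $X$-powers: $q^{j/4}\Sigma_r^{r/4}\le X^{(m+c)/4+(2m+c)/4}=X^{3m/4+c/2}$, using $q\le X$ and $\Sigma_r\le X$. This leaves slack $X^{m/4}$ below the target $X^{m+c/2}$.
\item Closing the gap: since $X\ge12j^2$, the slack satisfies $X^{m/4}\ge j^{m/2}$, which pays for the excess $j$-power.
\end{itemize}
Adding the two terms gives the claim with $B=C(B_1)$; the constant depends on $B_1$ through $R$.

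The main obstacle is precisely this matching of $j$-powers against the exponent budget $m+c/2$. It is here that the inflation $12j^2$ and the scale $jq$ in the statement are used: the Frobenius bound's $\Sigma_r^{r/4}$ and the halved power of $q$ in \cref{lem-split} leave exactly the slack $X^{m/4}$ required to absorb the decoupling and $r^{r/2}$ losses. When $m=0$ there is no slack and no excess, so the count closes in that case as well.
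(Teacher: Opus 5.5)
Your proposal is correct and follows essentially the same route as the paper: decouple via \cref{lem-mixed}(a), apply \cref{lem-split} to the real and imaginary parts, bound the partition term with \cref{lem-mixed}(c), and bound the Hilbert--Schmidt term with \cref{lem-frob}, reducing $1\le q<2$ to $q=2$. The only difference is bookkeeping: the paper uses $\Sigma_r\ge r^2$ to fold the excess $r^{(r-j)/2}$ into the Frobenius bound as $F\le C^jj^{j/2}S^{r/2-j/4}$, whereas you pay the leftover $j^{m/2}$ afterwards with the slack $X^{m/4}\ge j^{m/2}$; the two are equivalent.
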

\begin{proof}
  If $c>0$ and $\lambda=0$, the chaos vanishes. Otherwise factor out $\lambda^c$ by scaling $L$ to norm one; when $c=0$, $L$ is irrelevant. Put $S=\sigma+12j^2+j$ and $F=\|M_r\|_F$, so that $\Sigma_r\le2S$.

  First let $q\ge2$. Apply \cref{lem-split} to the real and imaginary parts of the decoupled form in \cref{lem-mixed}:

  \[
    \|U\|_q\le C^jj^{j/2}\Big(q^{j/4}F+j^{(j-1)/2}(rq+\Sigma_r)^{r/2}\Big).
  \]

  \begin{itemize}
    \item \emph{Partition term.} It is at most $C^jj^j(S+jq)^{r/2}$, since $r\le2j$.
    \item \emph{Hilbert--Schmidt term.} \cref{lem-frob} and $\Sigma_r\ge r^2$ give $F\le C^jj^{j/2}S^{r/2-j/4}$. Since $r/2-j/4\ge0$, $j^{j/2}q^{j/4}F\le C^jj^jq^{j/4}S^{r/2-j/4}\le C^jj^j(S+jq)^{r/2}$.

  \end{itemize}
  For $1\le q<2$, use $\|U\|_q\le\|U\|_2$ and $S+2j\le2(S+jq)$, absorbing $2^{r/2}\le2^j$ into the constant.
\end{proof}

The separate Frobenius estimate is essential at the final step.
The Hilbert--Schmidt term in \cref{lem-split} carries $q^{j/4}$;
the $\Sigma_r^{r/4}$ dependence in \cref{lem-frob} lets it fit the
same bound as the other partition terms. This completes the bulk
estimate used in \cref{sec-cp-proof} and hence the proof of
\cref{thm-main}.

\section*{AI Disclosure}
The author used OpenAI Codex and Claude Code to assist with developing, checking, and
simplifying proof arguments; researching related work and verifying
citations; and drafting and revising the manuscript.
The author takes full responsibility for the correctness, originality,
and attribution of all content, including the proofs and references.

\bibliographystyle{alphaurl}
\bibliography{references}
\end{document}